\tikzset{>=latex}
\newcommand{\NN}{\mathbb{N}}
\newcommand{\A}{\mathbb{A}}
\newcommand{\m}[1]{\mathbb{#1}}    
\newcommand{\cl}[1]{\mathcal{#1}}  
\theoremstyle{plain}
  \newtheorem{thm}{Theorem}[section]  \newtheorem*{thm*}{Theorem}
  \newtheorem{claim}{Claim}[thm]      \newtheorem*{claim*}{Claim}
    \newtheorem*{conj*}{Conjecture}
  \newtheorem{cor}[thm]{Corollary}    \newtheorem*{cor*}{Corollary}
  \newtheorem{lem}[thm]{Lemma}        \newtheorem*{lem*}{Lemma}
  \newtheorem{prop}[thm]{Proposition} \newtheorem*{prop*}{Proposition}
\theoremstyle{definition}
  \newtheorem{defn}[thm]{Definition} \newtheorem*{defn*}{Definition}
  \newtheorem{ex}[thm]{Example}      \newtheorem*{ex*}{Example}
\theoremstyle{remark}
    \newtheorem*{rk*}{Remark}
\newcommand{\Case}[1]{\smallskip \textbf{Case #1:}}
\newenvironment{claimproof} {
  \begin{proof}[Proof of claim]
  \renewcommand{\qedsymbol}{\ensuremath{\circ}}
  } {
  \end{proof}
  }
\DeclareMathOperator{\Clo}{Clo}
\DeclareMathOperator{\Rel}{Rel}
\DeclareMathOperator{\Sg}{Sg}
\newcommand{\pmat}[1]{ \begin{pmatrix} #1 \end{pmatrix} }
\newcommand{\ds}[1]{ \displaystyle{#1} }
\numberwithin{equation}{section}  
\renewcommand{\epsilon}{\varepsilon}
\renewcommand{\phi}{\varphi}
\setlist[enumerate]{itemsep=0.5em}
\setlist[itemize]{itemsep=0.5em}
\theoremstyle{definition} \newtheorem*{prob}{Problem}
\newcommand{\Proofitem}[1]{\smallskip \textbf{#1:}}
\newcommand{\vect}[1]{ \left< #1 \right> }
\newcommand{\AMConfig}[1]{ \big[\hspace{-0.225em}\big[\ #1\ \big]\hspace{-0.225em}\big] }
\newcommand{\Row}[1]{ [\![ #1 ]\!] }
\newcommand{\M}{\mathcal{M}}
\newcommand{\AM}{A(\M)}
\newcommand{\AAM}{\mathbb{\AM}}
\newcommand{\RR}{\mathbb{R}}
\let\SS\relax \newcommand{\SS}{\mathbb{S}}
\DeclareMathOperator{\Config}{\tt c}
\newcommand{\StateGraph}{\Sigma}
\DeclareMathOperator{\Content}{\tt con}
\DeclareMathOperator{\State}{\tt st}
\DeclareMathOperator{\X}{\tt X}
\newcommand{\Er}{\times}
\renewcommand{\Dot}{\bullet}
\newcommand{\D}{\mathcal{D}}
\renewcommand{\H}{\mathcal{H}}
\newcommand{\N}{\mathcal{N}}
\DeclareMathOperator{\Pol}{Pol}
\DeclareMathOperator{\RClo}{RClo}
\DeclareMathOperator{\RCloDual}{\RClo^{\partial}}
\newcommand{\Entails}{\models}
\newcommand{\EntailsDual}{\models_{\partial}}
\DeclareMathOperator{\maxdeg}{maxdeg}
\DeclareMathOperator{\Approx}{A}
\DeclareMathOperator{\ApproxI}{\Approx_I}
\begin{document}
\title{Finite degree clones are undecidable}
\author{Matthew Moore}
\date{\today}

\address{
  University of Kansas
  Dept.\ of Electrical Engineering and Computer Science;
  Eaton Hall;
  Lawrence, KS 66044;
  U.S.A.}
\email{matthew.moore@ku.edu}

\begin{abstract}
A clone of functions on a finite domain determines and is determined by its
system of invariant relations (=predicates). When a clone is determined by a
\emph{finite} number of relations, we say that the clone is of finite
degree. For each Minsky machine $\M$ we associate a finitely generated clone
$\cl{C}$ such that $\cl{C}$ has finite degree if and only if $\M$ halts,
thus proving that deciding whether a given clone has finite degree is
impossible.
\end{abstract}

\subjclass[2010]{08A30; 03D35}
\keywords{undecidable; finite degree; clones; relations; finitely related; algebra}
\maketitle 

\section{Introduction} \label{sec:intro} 
A clone is a set of operations on a domain which is closed under composition
and contains all projections. Emil Post~\cite{Post_Lattice} in 1941 famously
classified all clones on a 2-element domain (the Boolean clones), of which
there are countably many. In contrast to this, there are continuum many
clones over even a 3-element domain, as proven in 1959 by Janov and
Mu\v{c}nik~\cite{JanovMucnik_ContinuumClones}. The problem under
consideration in this paper has its roots in investigations in the 1970s of
the structure of the lattices of clones over domains of more than 2
elements. Before discussing the history of the problem, however, it will be
useful to establish some background.

There are two common methods of finitely specifying a clone of operations.
The first is to generate the clone from a finite set of functions via
composition and variable manipulations. The second method is to specify the
clone as all operations preserving a given finite set of relations. A
relation $\RR$ on domain $D$ is said to be \emph{preserved} by an operation
$f: D^n \to D$ if $f(r_1,\dots,r_n)\in R$ whenever $r_1,\dots,r_n\in R$. The
\emph{polymorphism clone} on a set $\cl{R}$ of relations over domain $D$ is
\[
  \Pol(\cl{R}) 
  = \bigcup_{n\in \NN} \Big\{ f: D^n \to D
    \mid f \text{ preserves each relation in $\cl{R}$} \Big\}.
\]
We say that a clone $\cl{C}$ is \emph{determined} by $\cl{R}$ if $\cl{C} =
\Pol(\cl{R})$. The supremum of the arities of the relations contained in
$\cl{R}$ is the \emph{degree} of $\cl{R}$, written
\[
  \deg(\cl{R}) = \sup \big\{ \text{arity}(R) \mid R\in \cl{R} \big\},
\]
and the degree of a clone $\cl{C}$ is the infimum of the degrees of all sets
of relations which determine $\cl{C}$,
\[
  \deg(\cl{C})
  = \inf \big\{ \deg(\cl{R}) \mid \cl{C} = \Pol(\cl{R}) \big\}.
\]
Both of these values can be infinite, and we regard them as \emph{total}
functions. Of course, since there are uncountably many clones on domains of
more than 2 elements, there is no enumeration of them and hence no standard
sense in which $\deg(\cdot)$ can be computable. We resolve this complication
by considering only those clones which have finite domain and are generated
by finitely many operations (i.e.\ the clones of finite algebras). The clone
generated by the algebra $\A = \left< A; f_1,\dots,f_n \right>$ is the
smallest clone with domain $A$ containing all the $f_i$. The problem that we
consider in this paper is the following, which we call the \emph{Finite
Degree Problem}:
\begin{itemize}[leftmargin=6em]
  \item[\texttt{Input:}] finite algebra 
    $\A = \left< A; f_1,\dots,f_n \right>$ generating clone $\cl{C}$
  \item[\texttt{Output:}] whether $\deg(\cl{C}) < \infty$.
\end{itemize}
We show that the Finite Degree Problem is undecidable by constructing for
each Minsky machine $\M$ a finite algebra $\AAM$ such that if $\cl{C}$ is
the clone generated by $\AAM$ then $\deg(\cl{C})<\infty$ if and only if $\M$
halts.

It is difficult to determine the precise origin of the Finite Degree
Problem. Questions surrounding the algorithmic computation of the degree of
a clone date back to the 1970s with papers by Romov~\cite{Romov_LocalPostI,
Romov_LocalPostII} and Jablonski\u{\i}~\cite{Jablonskii_UpperNeighborhood}.
The closely related question of deciding whether an algebra admits a natural
duality has been open since the late 1970s, but apparently first appears in
print in 1991 with Davey~\cite{Davey_Duality10Dollars}. The Finite Degree
Problem is likely a contemporary of this problem, but does not appear in
print until 2006 in~\cite{Bergman_ProblemListALV} in which it is credited by
Ralph McKenzie to Mikl\'{o}s Mar\'{o}ti in 2004. 

Investigations into which structures have finite degree and under what
conditions have yielded a host of results over the years, which we now give
a brief overview of. All of the following structures \emph{on a finite
domain} have finite degree:
\begin{itemize}
  \item all bands~\cite{Dolinka_FinRelBands};

  \item many semigroups, but not all of them~\cite{Mayr_FinRelSemigroups,
      DaveyJacksonPitkethly_FiniteDegree};

  \item semilattices, and more generally any clone containing a semilattice
    operation that commutes with the other operations~\cite{Davey_Endodual,
      DJPT_SemilatDual};

  \item clones containing the lattice operations of $\wedge$ and $\vee$, and
    more generally algebras with a near unanimity term (if the algebra
    belongs to a congruence distributive variety then this is an
    equivalence)~\cite{BakerPixley_Interpolation, Barto_CDNU};

  \item groups, rings, and more generally algebras with a cube term (if the
    algebra belongs to a congruence modular variety then this is an
    equivalence)~\cite{AichingerMayrMcKenzie_CubeFinRelated,
    Barto_FinRelCube}.
\end{itemize}
Aside from results for specific structures, necessary conditions for a clone
to have finite degree have also been established. Rosenberg and
Szendrei~\cite{RosenbergSzendrei_DegreesOfClones} and Davey and
Pitkethly~\cite{DaveyPitkethly_CountingRelns} both establish general
algebraic conditions which imply finite degree.

The technique of encoding a model of computation into an algebraic structure
was pioneered by McKenzie~\cite{McKenzie_ResidualBoundUndec,
McKenzie_TarskiUndec}, where it was proven that it is undecidable whether
an algebra is finitely axiomatizable (this is famously known as Tarski's
Problem). Since then, a handful of other authors have used a similar
approach to prove that other algebraic properties are undecidable.
Mar\'{o}ti~\cite{Maroti_NUUndec} proves that it is undecidable whether an
algebra has a near unanimity term defined on all but 2 elements of a finite
domain (it was later discovered that this is decidable without this
restriction, see Mar\'{o}ti~\cite{Maroti_NUDec}). McKenzie and
Wood~\cite{McKenzieWood_TypeSetUncomp} prove certain ``omitting types''
statements about algebras are undecidable. The author~\cite{Moore_DPSCUndec}
proves that the technical property of DPSC is undecidable, thus giving an
alternate proof of the undecidability of Tarski's Problem. Most recently,
Nurakunov and Stronkowski~\cite{NurakunovStronkowski_ProfiniteUndec} prove
that profiniteness is undecidable.

We begin in Section~\ref{sec:Minsky_machines} with a discussion of a simple
of model of computation, the Minsky machine, before continuing on to a brief
survey of the necessary algebraic background and some of the notation used
in the paper in Section~\ref{sec:background}. The algebra $\AAM$ mentioned
above is precisely defined in Section~\ref{sec:AM_defn}, and the exact
manner in which it encodes the computation of the Minsky machine $\M$ is
proven in Section~\ref{sec:encoding}. In Section~\ref{sec:M_doesnt_halt} we
show that $\deg(\AAM) = \infty$ when $\M$ does not halt. The converse is
quite a bit more complicated. Tools necessary for the analysis are developed
in Section~\ref{sec:M_halts_tools}, and the main argument is divided into
cases and addressed in Section~\ref{sec:M_halts_entailment}. Lastly,
Section~\ref{sec:conclusion} contains a statement of the main theorem and a
discussion of related open problems.

A great deal of effort was spent in constructing the algebra $\AAM$ so that
the entire argument would be as straightforward as possible. Much of this
effort took the form of computer experimentation and verification, allowing
for rapid iteration of the definitions. Significant portions of many of the
lemmas and theorems can be verified computationally. The framework that was
used was built specifically for this task, but the majority of it is suited
to general algebraic structures. This computational framework as well as
several examples are available online at the URL below.
\begin{center}
  \url{http://ittc.ku.edu/~moore/preprints/2018_AM.zip}
\end{center}

\section{Minsky machines} \label{sec:Minsky_machines}  
Minsky machines are a simple model of computation for which the halting
problem is undecidable, and were defined in 1961 by Marvin
Minsky~\cite{Minsky_machine, Minsky_Computation}. A \emph{Minsky machine}
has states $\{0,1,\dots,N\}$, where $0$ is the halting state and $1$ is the
initial state, registers $A$ and $B$ which hold non-negative values, and a
finite set of instructions. Minsky machine instructions come in two types:
\begin{itemize}
  \item $(i,R,j)$, interpreted as ``in state $i$, increase register $R$ by
    one and enter state $j$'', and

  \item $(i,R,k,j)$, interpreted as ``in state $i$, if register $R$ is $0$
    then enter state $k$, otherwise decrease $R$ by one and enter state
    $j$''.
\end{itemize}
In order for the instructions to unambiguously describe a Minsky machine
there must be an instruction of the form $(1,\dots)$ and for each state $i$
there must be at most one instruction of the form $(i,\dots)$.

A Minsky machine \emph{configuration} is a triple of the form
$(s,\alpha,\beta)$, where $s$ indicates the state of the machine, $\alpha$
the value of the $A$ register, and $\beta$ the value of the $B$ register.
Formally, a Minsky machine with states $\mathcal{S} = \{0,\dots,N\}$ and
instructions $\mathcal{I}$ is an operation on the space of all possible
configurations
\[
  \M : \mathcal{S}\times \NN \times \NN \to \mathcal{S}\times \NN \times \NN
\]
defined by
\[
  \M(i,\alpha,\beta) = \begin{cases}
    (j,\alpha+1,\beta) & \text{if } (i,A,j)\in \mathcal{I}, \\
    (j,\alpha,\beta+1) & \text{if } (i,B,j)\in \mathcal{I}, \\
    (j,\alpha-1,\beta) & \text{if } (i,A,k,j)\in \mathcal{I}, \alpha\neq 0, \\
    (j,\alpha,\beta-1) & \text{if } (i,B,k,j)\in \mathcal{I}, \beta\neq 0, \\
    (k,\alpha,\beta)   & \text{if } (i,A,k,j)\in \mathcal{I}, \alpha = 0, \\
    (k,\alpha,\beta)   & \text{if } (i,B,k,j)\in \mathcal{I}, \beta = 0, \\
    (0,\alpha,\beta)   & \text{if } i = 0.
  \end{cases}
\]
Since the function $\M$ is determined by $\mathcal{I}$, it is usual to use
the same symbol for both. That is, we indicate that $\M$ has some
instruction, say $(1,A,2)$, by simply writing $(1,A,2)\in \M$.

A single application of the function $\M$ to a configuration represents a
single computational step of the Minsky machine. To indicate multiple steps
in the computation, we can compose $\M$:
\[
  \M^n(i,\alpha,\beta) = \M\circ \dots \circ \M(i,\alpha,\beta).
\]
We say that a Minsky machine $\M$ \emph{halts on input $A=\alpha$,
$B=\beta$} if there is some $n$ such that $\M^n(1,\alpha,\beta) =
(0,\alpha',\beta')$. We say that a Minsky machine $\M$ \emph{halts} (without
reference to input) if it halts on input $A=B=0$. By replacing the halting
state with a new state $k$ and appending instructions $(k,A,k+1,k)$ and
$(k+1,B,0,k+1)$ to the list of instructions, a Minsky machine can be made to
return the registers to $0$ before halting. Thus, we can assume without loss
of generality that all halting machines return both registers to $0$ before
halting. Furthermore, any Minsky machine can be converted to an equivalent
machine with first instruction of the form $(1,R,s)$.

If a given Minsky machine with states $\{0, \dots, N\}$ does not have an
instruction of the form $(k,\dots)$ for some state $k$ then without changing
the halting status of the machine we may add an instruction of the form
$(k,R,k)$ to $\M$. We therefore assume throughout that Minsky machines have
exactly one instruction for each state $k$. 

Let $\StateGraph(\M)$ be the directed graph with vertices $[N]$ and an edge
$i\to j$ if and only if $\M(i,\alpha,\beta) = (j,\alpha',\beta')$ for some
$\alpha,\beta,\alpha',\beta'\in \NN$. We call this the \emph{state graph} of
$\M$. If state $\ell$ is reachable from state $k$ along a (possibly length
$0$) directed path then we write $k\rightsquigarrow \ell$. If there is a
state $k\in \StateGraph(\M)$ such that $1\not\rightsquigarrow k$ then we can
eliminate state $k$ from $\M$ without changing the halting status of $\M$.
We therefore assume that all states are reachable from $1$ in the state
graph.

A further modification of $\M$ allows us to assume that every state has a
path to the halting state $0$. If we have $1\rightsquigarrow \ell
\not\rightsquigarrow 0$ then there is at least one pair of states $i,k\in
\StateGraph(\M)$ such that $i \to k$, $i\rightsquigarrow 0$, and
$k\not\rightsquigarrow 0$. This is only possible if $(i,R,k,j)\in \M$ or
$(i,R,j,k)\in \M$ for some register $R$ and $j\rightsquigarrow 0$. For all
such pairs $i,k$, we do the following:
\begin{itemize}
  \item add a new state $n_k^i$,
  \item if $(i,R,k,j)\in \M$ then replace this instruction with
    $(i,R,n_k^i,j)$ and add the instruction $(n_k^i,R,i,i)$ to $\M$,
  \item if $(i,R,j,k)\in \M$ then replace this instruction with
    $(i,R,j,n_k^i)$ and add the instruction $(n_k^i,R,i)$ to $\M$.
\end{itemize}
The new instructions cause $\M$ to loop upon entering state $n_k^i$. Since
$k\not\rightsquigarrow 0$, modifying $\M$ in this manner does not change its
halting status. After performing this procedure for all $i,k$ as described
above, we next eliminate any states which are not reachable from the initial
state 1. After performing this procedure, we will have $1\rightsquigarrow k
\rightsquigarrow 0$ for all states $k\in \StateGraph(\M)$.

Summarizing, we assume the following about every Minsky machine $\M$ we
consider in this paper:
\begin{itemize}
  \item $\M$ returns both registers to $0$ before halting,

  \item $\M$ has exactly one instruction for each state $k$,

  \item $\M$ begins with an instruction of the form $(1,R,s)$, and

  \item for every state $k$ of $\M$, there are paths in the state graph
    leading from the initial state $1$ to $k$, and from $k$ to the halting
    state $0$.
\end{itemize}
By the discussion in the paragraphs above, the halting problem restricted to
the set of Minsky machines satisfying these is still undecidable.

\section{Algebraic background and notation} \label{sec:background} 
In this section we give a brief background of the algebraic notions used in
the proof. Good references for additional details are McKenzie, McNulty,
Taylor~\cite{MMT_ALVBook} and Burris~\cite{Burris_CourseUnivAlg}.

An \emph{algebra} $\A$ consists of a non-empty set $A$, called the
\emph{universe} of $\A$, and a set of operations $\cl{F}$ on $A$, called the
\emph{fundamental operations} of $\A$. This is typically shortened to $\A =
\left< A; \cl{F} \right>$. From the operations in $\cl{F}$ we can generate
new operations by composition and variable identification. These together
with the projections are the \emph{term operations} of $\A$. A subset
$B\subseteq A$ which is closed under all operations from $\cl{F}$ is called
a \emph{subuniverse}. If $B \neq\emptyset$ then $B$ together with the
operations from $\cl{F}$ restricted to $B$ form a \emph{subalgebra} of $\A$,
written $\m{B}\leq \A$.

The operations of $\A$ extend coordinate-wise to operations of $\A^m$ for any
$m\in \NN$. A subuniverse $C\subseteq \A^m$ is called a \emph{relation} (or
\emph{subpower}) of $\A$. If $D\subseteq A^m$ is a subset then the smallest
relation containing $D$ is called the \emph{subalgebra generated by $D$},
written $\Sg_{\A^m}(D)$. We denote by $\Rel(\A)$ the set of all finitary
relations of $\A$. This set is closed under intersection (of equal arity
relations), product, permutation of coordinates, and projection onto a subset of
coordinates. Another way of saying this is that if relations are viewed as
predicates and $p(x_1,\dots,x_n)$ is a primitive-positive formula in the
language of these predicates then the set of values in $A^n$ for which $p$ is
true forms a relation.

From the discussion above, it is clear that the operations of $\A$ determine
the relations. The opposite is also true for finite $A$: $t$ is a term
operation of $\A$ if and only if it preserves all the relations of $\A$. We
can formalize this by introducing two new operations on sets of term
operations and relations. Let $E$ be a domain, $\cl{G}$ be a set of
operations on $E$, and $\cl{R}$ a set of subsets of powers of $E$. Define
\begin{align*}
  \Rel(\cl{G})
    &= \bigcup_{m\in \NN} \Big\{ R \subseteq E^m \mid R \text{ is closed
      under each operation from $\cl{G}$} \Big\}
    && \text{and}\\
  \Pol(\cl{R})
    &= \bigcup_{m\in \NN} \Big\{ f: E^m \to E \mid f \text{ preserves each
      set in $\cl{R}$} \Big\}.
\end{align*}
These two operations form a Galois connection:
\[
  \cl{R}\subseteq \Rel(\cl{G})
  \qquad\text{if and only if}\qquad
  \cl{G} \subseteq \Pol(\cl{R}).
\]
This relationship is quite famous and was first discovered by
Geiger~\cite{Geiger_RelnConstructs} and by Bodnar\v{c}uk, Kalu\v{z}nin,
Kotov, and Romov~\cite{BodnarcukEtal_GaloisPostAlgebras}. Every Galois
connection defines two closure operators. For $\Rel$ and $\Pol$ these are
$\Clo = \Pol \circ \Rel$ and $\RClo = \Rel \circ \Pol$, the \emph{clone} and
\emph{relational clone}, respectively. If $\A$ is an algebra with
fundamental operations $\cl{F}$ then the set of term operations of $\A$ is
$\Clo(\cl{F})$ and the set of relations of $\A$ is $\Rel(\cl{F}) =
\Rel(\A)$. 

If $\cl{R}$ is a set of relations on $\A$ and $\RR\in \RClo(\cl{R})$ (that
is, $R$ is preserved by every operation which preserves relations in
$\cl{R}$) then we say that $\cl{R}$ \emph{entails} $\RR$ and write
$\cl{R}\Entails \RR$. It is not difficult to prove that $\cl{R}\Entails \RR$
if and only if $\RR$ can be built from the relations in $\cl{R}\cup \{=\}$,
in finitely many steps, by applying the following constructions:
\begin{enumerate}
  \item intersection of equal arity relations,
  \item (cartesian) product of finitely many relations,
  \item permutation of the coordinates of a relation, and
  \item projection of a relation onto a subset of coordinates.
\end{enumerate}
We call these \emph{entailment constructions}. Similarly, for an operation
$f$ on $A$ we write $\cl{R}\Entails f$ if $f\in \Pol(\cl{R})$. We define the
\emph{degree} of $\cl{R}$ to be the supremum of the arities of the relations
in $\cl{R}$,
\[
  \deg(\cl{R})
  = \sup \big\{ \text{arity}(\SS) \mid \SS\in \cl{R} \big\}.
\]
For a clone $\cl{C}$, we define the degree to be the infimum of the degrees
of all sets of relations which determine $\Rel(\cl{C})$,
\[
  \deg(\cl{C})
  = \inf \big\{ \deg(\cl{R}) \mid \cl{R} \Entails \Rel(\cl{C}) \big\}
  = \inf \big\{ \deg(\cl{R}) \mid \Pol(\cl{R}) = \cl{C} \big\}.
\]
Finally, for an algebra $\A$ we define the degree of $\A$ to be the degree
of its clone,
\[
  \deg(\A)
  = \deg(\Clo(\A)).
\]
In general, any of these quantities may be infinite. An algebra $\A$ has
\emph{finite degree} (or is said to be \emph{finitely related}) if $\deg(\A)
< \infty$.

Lastly, we adopt a convention for projections of elements and subsets of
powers intended to increase readability. If $m\in \NN$ and $I\subseteq [m]$
then
\begin{itemize}
  \item for $B\subseteq A$, define $a^{-1}(B) = \{ i\in [m] \mid a(i)\in B
    \}$ and for $b\in A$ define $a^{-1}(b) = a^{-1}(\{b\})$,

  \item denote the projection of $a\in A^m$ to coordinates $I$ by $a(I)\in
    A^I$,

  \item denote the projection of $S\subseteq A^m$ to coordinates $I$ by
    $S(I)\subseteq A^I$, and

  \item define $a(\neq i) = a([m]\setminus i)$ and likewise $a(\neq i,j) =
    a([m]\setminus \{i,j\})$.
\end{itemize}
It is possible to confuse this notation for projection with the notation for
function application, but we will take special care to avoid ambiguous
situations.

\section{The algebra \texorpdfstring{$\AAM$}{A(M)}} \label{sec:AM_defn}  
We begin by defining the underlying set of $\AAM$. Let $\M$ be a Minsky
machine with states $\{0, \dots, N\}$ and define
\[
  M_i = \big\{ \vect{i,c} \mid c\in \{\Dot, \Er,0, A, B\} \big\}
  \qquad\text{and}\qquad
  \AM = \bigcup_{i=0}^N M_i.
\]
We next define several important subsets of $\AM$. Let
\begin{align*}
  & X = \big\{ \left< 0,\Er\right>, \dots, \left<N,\Er\right> \big\},
    && Y = \AM\setminus X, \\
  & D = \big\{ \left< 0,\Dot\right>, \dots, \left<N,\Dot\right> \big\},
    && E = \AM\setminus D, \\
  & C = \AM\setminus ( X \cup D ).
\end{align*}
An easy way to keep these straight is that $X$ contains elements with
second coordinate $\Er$, $D$ contains elements with second coordinate $\Dot$
(``dot''), and $C$ contains elements with neither. The set $Y$ is ``not
$X$'' and $E$ is ``not $D$''. We will now define the operations of $\AAM$.
It will be convenient in the operation definitions which follow to make use
three ``helper'' functions which are \emph{not} operations of $\AAM$. Let
\[
  \X\big( \vect{i,c} \big) = \vect{i,\Er},
  \qquad
  \State\big( \vect{i,c} \big) = i,
  \qquad\text{and}\qquad
  \Content\big( \vect{i,c} \big) = c.
\]
The second two of these are referred to as the \emph{state} and
\emph{content} of an element. We extend both of these functions to elements
of $\AM^m$ in different ways: for $m>1$ and $\alpha\in \AM^m$ define
\[
  \State(\alpha)
  = \big( \State(\alpha(1)), \dots, \State(\alpha(m)) \big)
  \qquad\text{and}\qquad
  \Content(\alpha) = \big\{ \Content(\alpha(i)) \mid i\in [m] \big\}.
\]
The algebra has a semilattice reduct with meet defined as
\[
  \vect{i, c} \wedge \vect{j, d}
  = \begin{cases}
    \vect{i, c}           & \text{if } \vect{i,c} = \vect{j,d}, \\
    \vect{\min(i,j), \Er} & \text{otherwise}.
  \end{cases}
\]
The semilattice operation defines an order: we write $x\leq y$ if and only
if $x\wedge y = x$. The next two operations encode the computation of $\M$
on elements of powers of $\AAM$. Let
\[
  M(x,y) = \begin{cases}
    \vect{j,R}    & \text{if } x = \vect{i,\Dot},\ y = \vect{i,0},\ 
      (i,R,j)\in \M, \\
    \vect{j,0}    & \text{if } x = \vect{i,\Dot},\ y = \vect{i,R},\ 
      (i,R,k,j)\in \M, \\
    \vect{j,\Dot} & \text{if } x = \vect{i,0},\ y = \vect{i,\Dot},\ 
      (i,R,j)\in \M, \\
    \vect{j,\Dot} & \text{if } x = \vect{i,R},\ y = \vect{i,\Dot},\ 
      (i,R,k,j)\in \M, \\
      \vect{j,c}    & \text{if } x = y = \vect{i,c},\ c\neq \Dot,\ 
      \Big[ (i,R,j)\in \M \text{ or } (i,R,k,j)\in \M \Big], \\
    \vect{j,\Er}  & \text{elif } \State(x) = \State(y) = i,\ 
      \Big[ (i,R,j)\in \M \text{ or } (i,R,k,j)\in \M \Big], \\
    \X(y) & \text{otherwise},
  \end{cases}
\]
and
\[
  M'(x) = \begin{cases}
    \vect{k,c}   & \text{if } x = \vect{i,c},\ (i,R,k,j)\in \M,\ c\neq R, \\
    \vect{k,\Er} & \text{elif } \State(x) = i,\ (i,R,k,j)\in \M, \\
    \X(x)        & \text{otherwise.}
  \end{cases}
\]
The next operations are involved with the representation of initial and
halting states of $\M$ in $\AAM$. Define
\[
  I(x,y) = \begin{cases}
    \vect{1,\Dot} & \text{if } x\in D, \\
    \vect{1,0}    & \text{elif } y\in C, \\
    \vect{1,\Er}  & \text{otherwise},
  \end{cases}
  \qquad\qquad
  H(x) = \begin{cases}
    \vect{0,0} & \text{if } x \in \big\{ \vect{0,0}, \vect{0,\Dot} \big\}, \\
    \vect{0,\Er} & \text{otherwise}.
  \end{cases}
\]
The next several operations are technical, but are intimately involved in
entailment and enforce a certain regularity on the structure of subpowers of
$\AAM$. Let
\begin{align*}
  & N_0(x,y,z) = \begin{cases}
    y             & \text{if } x = \vect{0,\Dot},\ \State(y) = \State(z), \\
    z             & \text{if } x = \vect{0,0},\ z\not\in D,\ 
                      \State(y) = \State(z), \\
    \X(y\wedge z) & \text{otherwise},
  \end{cases} \\
  & S(x,y,z) = \begin{cases}
    \left<1, 0\right>  & \text{if } x = \vect{1,0},\ y,z\in M_1, \\
      & \qquad \text{and } (\Content(y), \Content(z))\in \big\{ (\Dot,0), (0,\Dot), (0,0) \big\}, \\
    \left<1,\Er\right> & \text{otherwise},
  \end{cases} \\
  & N_\Dot(u,x,y,z) = \begin{cases}
    x & \text{if } x = y \not\in X,\ \State(x) = \State(y) = \State(z), \\
    x & \text{if } u\in D,\ y\in X,\ \State(x) = \State(y) = \State(z), \\
    y & \text{elif } u\in D,\ x\in X,\ \State(x) = \State(y) = \State(z), \\
    z & \text{elif } u\in D,\ z\in \big\{ x, y \big\},\ 
        \State(x) = \State(y) = \State(z), \\
    \X(x\wedge y\wedge z) & \text{otherwise},
  \end{cases} \\
  & P(u,v,x,y) = \begin{cases}
    x & \text{if } \State(u) = \State(v), \\
    y & \text{otherwise}.
  \end{cases}
\end{align*}
The algebra $\AAM$ is
\[
  \AAM
  = \big< \AM; \wedge, M, M', I, H, N_0, S, N_\Dot, P \big>.
\]
This completes the definition of $\AAM$.

Each operation of $\AAM$ plays an important role in the argument, and each
has been defined to be as simple as possible. Though the argument is
technical, we now attempt to give rough description of the role that each
operation plays.
\begin{itemize}
  \item The semilattice operation $\wedge$ induces an order on the algebra
    that is ``flat'' modulo $X$. That is, if $a\wedge b\not\in X$ then $a =
    b$.

  \item The operations $M$, $M'$, and $I$ encode the computation of $\M$ in
    the relations of $\AAM$. See Example~\ref{ex:encoding}.

  \item The operation $S$ is technical, and is used to produce special term
    operations $z_i(x)$ used elsewhere in the argument. See
    Lemma~\ref{lem:basic_facts_AM}.

  \item The operations $H$ and $N_0$ are responsible for ensuring the
    entailment of certain relations when $\M$ halts. See
    Theorem~\ref{thm:coding}, Corollary~\ref{cor:Sn_halting}, and
    Theorem~\ref{thm:noncomp_halting_entailed}.

  \item The operations $N_{\Dot}$ and $P$ are responsible for the entailment
    of relations which are ``non-computational''. See
    Definition~\ref{defn:computational} and
    Theorem~\ref{thm:noncomp_halting_entailed}.
\end{itemize}

\section{The encoding of computation} \label{sec:encoding} 
In this section we build the tools necessary to prove that the relations of
$\AAM$ encode the computation of $\M$ in a ``faithful'' manner.

\begin{defn} \label{defn:X_abs} 
An $n$-ary operation $f$ of $\AAM$ is said to be \emph{$X$-absorbing} if
for all $a_1,\dots,a_n\in \AM$, if $a_i\in X$ for some $i$ then
$f(a_1,\dots,a_n)\in X$.
\end{defn}   
\begin{defn} \label{defn:synchronized}  
An element $s\in \AM^n$ is said to be \emph{synchronized} if
$\State(s(i))$ is constant over all $i\in [n]$; we refer to the common value
as $\State(s)$ or ``the state of $s$''. A subset $S\subseteq \AM^n$ is said
to be \emph{synchronized} if all of its elements are.
\end{defn}   
\begin{lem} \label{lem:basic_facts_AM}  
Each of the following hold for $\AAM$.
\begin{enumerate}
  \item The state map $\State$ is a homomorphism of $\AAM$. Equivalently,
    if $t$ is a term operation of $\AAM$ and $s_1,\dots,s_m\in \AM^n$ are
    synchronized then $t(s_1,\dots,s_m)$ is synchronized as well.

  \item $\wedge$, $M$, $M'$, $H$, and $S$ are $X$-absorbing.

  \item If $a\in \AM\setminus D = E$ then the unary function $I(a,x)$ is
    $X$-absorbing.

  \item For all states $i$ there is a term operation of $\AAM$ defined by
    \[
      z_i(x) = \begin{cases}
        \vect{i,0} & \text{if } x\in C, \\
        \vect{i,\Er} & \text{otherwise}.
      \end{cases}
    \]

  \item For all states $i$ there is a term $w_i(x)$ in the operations $\{M, M',
    H\}$ satisfying
    \[
      w_i\big( \vect{i,c} \big)
      = \begin{cases}
        \vect{0,0}   & \text{if } c = 0, \\
        \vect{0,\Er} & \text{otherwise},
      \end{cases}
    \]
    for all $c\neq \Dot$.

  \item Let $t(x,y)$ be a term in the operations $\{M,M'\}$ and suppose that
    $t(a,b) = c \not\in X$ for some $a,b,c\in \AM$. Then
    \[
      t\big( \vect{\State(a),0}, \vect{\State(a),0} \big)
      = \vect{\State(c),0}
    \]
    and if $t$ is non-trivial (i.e.\ not a projection) then $\State(a) =
    \State(b)$.
\end{enumerate}
\end{lem}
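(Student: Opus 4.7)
The plan is to handle parts (1)--(3) by direct case-inspection of the operation definitions, to construct the witnessing terms in (4) and (5) by tracing paths in the state graph $\StateGraph(\M)$, and to prove (6) by induction on the complexity of $t$.

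For (1), I would check separately for each fundamental operation $f$ of $\AAM$ that $\State(f(\bar{a}))$ depends only on $\State(\bar{a})$, which is immediate from the piecewise definitions (for instance $\State(x \wedge y) = \min(\State(x), \State(y))$, and each piece of $M, M', I, H, N_0, S, N_\Dot, P$ produces a state that is a function of the input states). The synchronized formulation is the componentwise statement and follows by induction on term complexity. For (2) and (3), one observes that each case of $\wedge, M, M', H, S$ (and $I(a, \cdot)$ with $a \in E$) that returns output outside $X$ demands specific non-$\Er$ content from every argument; an input with content $\Er$ therefore triggers the final ``otherwise'' clause, which returns an element of $X$. Part (3) uses the hypothesis $a \in E = \AM \setminus D$ specifically to eliminate the first case of $I$, leaving only output in $\{\vect{1, 0}, \vect{1, \Er}\}$, with $\vect{1, 0}$ requiring $y \in C$ and $C$ being disjoint from $X$.

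For part (4) the base case is
\[
  z_1(x) = S\big(I(x, x),\, I(x, x),\, I(x, x)\big),
\]
and direct case analysis ($x \in C$, $x \in D$, $x \in X$) yields $I(x, x) = \vect{1, 0}$ iff $x \in C$, whereupon $S$ returns $\vect{1, 0}$ iff $x \in C$ and $\vect{1, \Er}$ otherwise. To extend to arbitrary $i$, I would use the assumption $1 \rightsquigarrow i$ in $\StateGraph(\M)$ and propagate along the path: if $z_\ell$ is defined and the edge $\ell \to j$ arises from an instruction $(\ell, R, j) \in \M$ or the decrement branch of $(\ell, R, k, j) \in \M$, take $z_j(x) = M(z_\ell(x), z_\ell(x))$, which by case 5 of $M$ sends $\vect{\ell, 0} \mapsto \vect{j, 0}$ and $\vect{\ell, \Er} \mapsto \vect{j, \Er}$; if instead $\ell \to k$ is the zero branch of a conditional $(\ell, R, k, j) \in \M$, take $z_k(x) = M'(z_\ell(x))$ (since both $0$ and $\Er$ differ from $R$, the first case of $M'$ applies). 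Part (5) is dual: using $i \rightsquigarrow 0$, following the path with $M(x, x)$ along ``main'' edges and $M'(x)$ along zero edges gives a term $f_i$ that preserves content $0$ (since $0$ matches no register) and sends $\{A, B, \Er\}$ into itself (the only way content changes is $M'$ rewriting $R$ to $\Er$). Composing, $w_i(x) := H(f_i(x))$ maps $\vect{i, 0}$ to $\vect{0, 0}$ and $\vect{i, c}$ with $c \in \{A, B, \Er\}$ to $\vect{0, \Er}$, as required.

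Part (6) I would prove by induction on the complexity of $t$. The projection case is immediate. For $t = M(t_1, t_2)$, inspection of the seven cases of $M$ reveals that non-$X$ output forces $\State(u) = \State(v) = i$ and $u, v \notin X$, where $u = t_1(a, b)$ and $v = t_2(a, b)$ (every such case demands non-$\Er$ content in both arguments). The induction hypothesis then yields $t_k(\vect{\State(a), 0}, \vect{\State(a), 0}) = \vect{i, 0}$ for $k = 1, 2$, after which $M(\vect{i, 0}, \vect{i, 0})$ falls into case 5 and returns $\vect{j, 0}$ with $j = \State(c)$. Non-triviality propagates: if some subterm is itself non-trivial, induction supplies $\State(a) = \State(b)$; otherwise, when both variables appear in $t$ the subterms must project to different variables, and the matching $\State(u) = \State(v)$ yields the equality directly. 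The case $t = M'(t_1)$ is parallel and slightly simpler, using that the first case of $M'$ is the only one producing output outside $X$. The main obstacle is managing the seven-way case analysis of $M$; parts (1) and (2) are what make it tractable by restricting attention to the non-$X$ branches.
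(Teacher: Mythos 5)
Your proposal is correct and follows essentially the same approach as the paper: direct inspection of the operation definitions for (1)--(3), the term $S(I(x,x),I(x,x),I(x,x))$ followed by path-propagation through $\{M,M'\}$ for (4) and (5), and induction on term complexity for (6). Your path-based construction of $z_i$ is simply the explicit version of the paper's ``compose operations from $\{M,M'\}$ along the state graph.''
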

\begin{proof}
\Proofitem{(1)--(3)}
These can be proven by carefully examining the definitions of the operations
of $\AAM$.

\Proofitem{(4)}
If $(i,R,j)\in \M$ or $(i,R,k,j)\in \M$ then
\[
  M\big( \vect{i,0},\vect{i,0} \big)
  = \vect{j,0},
\]
and if $(i,R,k,j)\in \M$ then
\[
  M'\big( \vect{i,0} \big)
  = \vect{k,0}.
\]
From the assumptions at the end of Section~\ref{sec:Minsky_machines}, every
state can be reached from $1$ in the state graph. Hence, there is a way to
compose operations from $\{ M, M' \}$ to obtain a term operation $f$ such
that
\[
  f\big( \vect{1,0}, \dots, \vect{1,0} \big)
  = \vect{i,0}.
\]
Let $T(x) = S(I(x,x),I(x,x),I(x,x))$. It follows that $z_i(x) = f(T(x),
\dots, T(x))$.

\Proofitem{(5)}
In Section~\ref{sec:Minsky_machines} we assumed that for each state $i$ there is
a directed path in the state graph to the halting state $0$. Similarly to the
proof of item (4), for each state $i$ there is a term $f$ in the operations
$\{M, M'\}$ such that
\[
  f\big( \vect{i,0}, \dots, \vect{i,0} \big) = \vect{0,0}.
\]
From the definitions, for $c\neq \Dot$ we have
\[
  H\big( f\big( \vect{i,c}, \dots, \vect{i,c} \big) \big))
  = \begin{cases}
    \vect{0,0}   & \text{if } c = 0, \\
    \vect{0,\Er} & \text{otherwise}.
  \end{cases}
\]
Hence $w_i(x) = H(f(x,\dots,x))$ satisfies the claim in the Lemma.

\Proofitem{(6)}
The proof is by induction on the complexity of $t$. For the base case where
$t$ is a projection, the conclusion clearly holds. For the inductive step,
there are two cases: $t(x,y) = M(t_1(x,y), t_2(x,y))$ or $t(x,y) =
M'(t_1(x,y))$. Suppose that $t(x,y) = M(t_1(x,y), t_2(x,y))$. Let $t_1(a,b)
= c_1$ and $t_2(a,b) = c_2$. From the definition of $M$, if $t(a,b) = c
\not\in X$ then $c_1, c_2\not\in X$, $\State(c_1) = \State(c_2)$, and $\M$
has an instruction of the form $(\State(c_1),R,\State(c))$ or
$(\State(c_1),R,k,\State(c))$. By the inductive hypothesis, these
observations, and the definition of $M$, it follows that
\begin{align*}
  t\big( \vect{\State(a),0}, \vect{\State(a),0} \big)
  &= M\Big( t_1(\vect{\State(a),0}, \vect{\State(a),0}),
            t_2(\vect{\State(a),0}, \vect{\State(a),0}) \Big) \\
  &= M\big( \vect{\State(c_1),0}, \vect{\State(c_1),0} \big)
  = \vect{\State(c),0},
\end{align*}
as claimed. The case when $t(x,y) = M'(t_1(x,y))$ is similar.
\end{proof}  
\begin{defn} \label{defn:capacity}  
We say that the Minsky machine $\M$ has
\begin{itemize}
  \item \emph{$k$-step capacity} $C$ if 
  \[
    C
    \geq \max\big\{ \alpha+\beta \mid \M^n(1,0,0) = (i,\alpha,\beta) 
      \text{ for some } n\leq k \big\},
  \]

  \item \emph{capacity} $C$ if 
  \[
    C
    \geq \max\big\{ \alpha+\beta \mid \M^n(1,0,0) = (i,\alpha,\beta) 
      \text{ for some } n\in \NN \big\},
  \]

  \item and \emph{halts with capacity} $C$ if it has capacity $C$ and halts.
\end{itemize}
We say that the relation $\RR\leq\AAM^m$ has 
\begin{itemize}
  \item \emph{capacity} $C$ if
    \[
      \Big| \big\{ i \mid \exists r\in R\cap Y^m \ r(i)\in D \big\} \Big|
      > C
    \]
  \item and \emph{weak capacity} $C$ if
    \[
      \Big| \big\{ i \mid \exists r\in R\ r(i)\in D \big\} \Big| 
      > C.
    \]
\end{itemize}
We say that some elements $\sigma_1$, $\dots$, $\sigma_{C+1}\in R$
\emph{witness} $\RR$ having capacity (resp.\ weak capacity) $C$ if there are
distinct elements 
\[
  i_1,\dots,i_{C+1}
  \in \big\{ i \mid \exists r\in R\ r(i)\in D \big\}
\]
such that $\sigma_j(i_j)\in D$ and $\sigma_j\in R\cap Y^m$ (resp.\
$\sigma_j\in R$).
\end{defn}   

Observe that for each $m\in \NN$ the halting problem is decidable for Minsky
machines with capacity $m$ since there are a finite (though quite large)
number of configurations. At first glance, the definitions of capacity for
machines and relations seem to be at odds. As we will see, however, a
relation with capacity $C$ can encode any Minsky machine computation with
capacity $C$.

\begin{defn} \label{defn:computational} 
If the relation $\RR\leq \AAM^m$ is synchronized and
\[
  \big| r^{-1}(D) \big|
  = \big| \{ i\in [m] \mid r(i) \in D \} \big|
  \leq 1
\]
(i.e.\ $r$ does not have two coordinates with content $\Dot$) for all $r\in
R$ then we call $\RR$ \emph{computational}.
\end{defn}   
\begin{defn} \label{defn:halting} 
$R \subseteq A(M)^m$ is \emph{halting} if it contains an element $r\in R$
such that
\[
  r
  \in \big\{ \vect{0,0}, \vect{0,\Dot} \big\}^m \setminus \big\{ \vect{0,0} \big\}^m.
\]
Such an $r$ is called a \emph{halting vector} of $R$. If $R$ is not halting
then we say that $R$ is \emph{non-halting}.
\end{defn}   

The easiest way to see how relations of $\AAM$ encode computation is to work
through an example.

\begin{ex} \label{ex:encoding} 
Consider the Minsky machine
\[
  \M 
  = \Big\{
    (1,A,2),\ 
    (2,B,3),\ 
    (3,A,4,3),\ 
    (4,B,0,4)
  \Big\}.
\]
Recall that a configuration of $\M$ is a triple $(k,\alpha,\beta)$ where $k$
is a state, $\alpha\in \NN$ is the value of register $A$, and $\beta\in \NN$
is the value of register $B$. We regard $\M$ as an operation on the set of
configurations. Running through the computation on the initial configuration
$(1,0,0)$, we have the table below.
\[ \begin{array}{c|ccccccc}
  n           & 0       & 1       & 2       & 3       & 4       & 5       & 6       \\ \midrule
  \M^n(1,0,0) & (1,0,0) & (2,1,0) & (3,1,1) & (3,0,1) & (4,0,1) & (4,0,0) & (0,0,0)
\end{array} \]
(We assumed in Section~\ref{sec:Minsky_machines} that all $\M$ zeroed out
the registers on halting.) Let us see how this is encoded in $\AAM^3$. For
$i\in [3]$ define elements $\sigma_i\in \AM^3$ and a subalgebra $\SS\leq
\AAM^3$ by
\[
  \sigma_i(j) = \begin{cases}
    \vect{1,\Dot} & \text{if } i = j \\
    \vect{1,0} & \text{otherwise},
  \end{cases}
  \qquad\text{and}\qquad 
  \SS = \Sg_{\AAM^3}\big\{ \sigma_1, \sigma_2, \sigma_3 \big\}.
\]
The relation $\SS$ is computational and has capacity $2$. Let $s\in S\cap
Y^3$. The value $\State(s)$ will correspond to the state of the computation,
and the values
\[
  \alpha 
    = \big| s^{-1}\big( \vect{\State(s), A} \big) \big|
  \qquad\text{and}\qquad
  \beta 
    = \big| s^{-1}\big( \vect{\State(s), A} \big) \big|
\]
will correspond to the value of registers $A$ and $B$, respectively. Observe
that the elements $\sigma_1, \sigma_2, \sigma_3$ correspond to the
configuration $(1,0,0)$. We will need some notation. For $k$ a state and
distinct indices $i_1, i_2, i_3\in [3]$, define elements of $\AM^3$
\[
  \AMConfig{
    \underbrace{k}_{\State} 
    \mid \underbrace{i_1}_{\Dot}
    \mid \underbrace{i_2}_{A}
    \mid \underbrace{i_3}_{B}
  }(j) 
  = \begin{cases}
    \vect{k, \Dot} & \text{if } j = i_1, \\
    \vect{k, A} & \text{if } j = i_2, \\
    \vect{k, B} & \text{if } j = i_3, \\
    \vect{k, 0} & \text{otherwise}.
  \end{cases}
\]
Additionally, define
\[
  \AMConfig{ k \mid i_1 \mid i_2 \mid \emptyset }(j)
  = \begin{cases}
    \vect{k, \Dot} & \text{if } j = i_1, \\
    \vect{k, A} & \text{if } j = i_2, \\
    \vect{k, 0} & \text{otherwise},
  \end{cases}
\]
and define $\AMConfig{ k \mid i_1 \mid \emptyset \mid i_3 }$ and 
$\AMConfig{ k \mid i_1 \mid \emptyset \mid \emptyset }$ similarly. In the
computations to follow below, the coordinates $i,j,k\in [3]$ are all
distinct.

First, observe that $\AMConfig{1 \mid i \mid \emptyset \mid \emptyset} =
\sigma_i$. We have
\small \begin{align*}
  \AMConfig{2 \mid i \mid j \mid \emptyset} 
    &= M\Big( \AMConfig{1 \mid j \mid \emptyset \mid \emptyset},\ 
              \AMConfig{1 \mid i \mid \emptyset \mid \emptyset} \Big),
  \qquad \text{e.g.}, \\
  \AMConfig{2 \mid 3 \mid 2 \mid \emptyset}
    &= \pmat{ \vect{2,0}    \\
              \vect{2,A}    \\
              \vect{2,\Dot} }
    = M\pmat{ \vect{1,0},    & \vect{1,0}    \\
              \vect{1,\Dot}, & \vect{1,0}    \\
              \vect{1,0},    & \vect{1,\Dot} },
\end{align*} \normalsize
corresponding to the configuration $\M^1(1,0,0) = (2,1,0)$. Next,
\small \begin{align*}
  \AMConfig{ 3 \mid i \mid j \mid k } 
    &= M\Big( \AMConfig{ 2 \mid k \mid j \mid \emptyset },\ 
              \AMConfig{ 2 \mid i \mid j \mid \emptyset } \Big),
    \qquad \text{e.g.}, \\
  \AMConfig{ 3 \mid 1 \mid 3 \mid 2 }
    &= \pmat{ \vect{3,\Dot} \\
              \vect{3,B}    \\
              \vect{3,A}    }
    = M\pmat{ \vect{2,0},    & \vect{2,\Dot} \\
              \vect{2,\Dot}, & \vect{2,0}    \\
              \vect{2,A},    & \vect{2,A}    },
\end{align*} \normalsize
corresponding to the configuration $\M^2(1,0,0) = (3,1,1)$. Next,
\small \begin{align*}
  \AMConfig{ 3 \mid i \mid \emptyset \mid j } 
    &= M\Big( \AMConfig{ 3 \mid k \mid i \mid j },\ 
              \AMConfig{ 3 \mid i \mid k \mid j } \Big),
    \qquad \text{e.g.}, \\
  \AMConfig{ 3 \mid 2 \mid \emptyset \mid 1 }
    &= \pmat{ \vect{3,B}    \\
              \vect{3,\Dot} \\
              \vect{3,0}    }
    = M\pmat{ \vect{3,B},    & \vect{3,B}    \\
              \vect{3,A},    & \vect{3,\Dot} \\
              \vect{3,\Dot}, & \vect{3,A}    },
\end{align*} \normalsize
corresponding to the configuration $\M^3(1,0,0) = (3,0,1)$. Next,
\small \begin{align*}
  \AMConfig{ 4 \mid i \mid \emptyset \mid j } 
    &= M'\Big( \AMConfig{ 3 \mid i \mid \emptyset \mid j } \Big),
    \qquad \text{e.g.}, \\
  \AMConfig{ 4 \mid 3 \mid \emptyset \mid 2 }
    &= \pmat{ \vect{4,0}    \\
              \vect{4,B}    \\
              \vect{4,\Dot} }
    = M'\pmat{ \vect{3,0}    \\
               \vect{3,B}    \\
               \vect{3,\Dot} },
\end{align*} \normalsize
corresponding to the configuration $\M^4(1,0,0) = (4,0,1)$. Next,
\small \begin{align*}
  \AMConfig{ 4 \mid i \mid \emptyset \mid \emptyset } 
    &= M\Big( \AMConfig{ 4 \mid j \mid \emptyset \mid i },\ 
              \AMConfig{ 4 \mid i \mid \emptyset \mid j } \Big),
    \qquad \text{e.g.}, \\
  \AMConfig{ 4 \mid 1 \mid \emptyset \mid \emptyset }
    &= \pmat{ \vect{4,\Dot} \\
              \vect{4,0}    \\
              \vect{4,0}    }
    = M\pmat{ \vect{4,B},    & \vect{4,\Dot} \\
              \vect{4,\Dot}, & \vect{4,B}    \\
              \vect{4,0},    & \vect{4,0}    },
\end{align*} \normalsize
corresponding to the configuration $\M^5(1,0,0) = (4,0,0)$. Finally, we have
\small \begin{align*}
  \AMConfig{ 0 \mid i \mid \emptyset \mid \emptyset } 
    &= M'\Big( \AMConfig{ 4 \mid i \mid \emptyset \mid \emptyset } \Big),
    \qquad \text{e.g.}, \\
  \AMConfig{ 0 \mid 2 \mid \emptyset \mid \emptyset }
    &= \pmat{ \vect{0,0}    \\
              \vect{0,\Dot} \\
              \vect{0,0}    }
    = M'\pmat{ \vect{4,0}    \\
               \vect{4,\Dot} \\
               \vect{4,0}    },
\end{align*} \normalsize
corresponding to the halting configuration $\M^6(1,0,0) = (0,0,0)$.

Since $\SS$ can witness the halting of $\M$, the relation $\SS$ will have a
lot of ``non-computational'' vectors in $Y^3$. In general, if a relation
does \emph{not} witness the halting of $\M$ then this will not be the case.
\end{ex}   

Now that we have some intuition for how computation is encoded, let us
continue exploring the structure of the relations of $\AAM$.

\begin{lem} \label{lem:basic_facts_relns} 
Let $\RR\leq \AAM^m$ be a relation and let $a,b,c,d\in R$.
\begin{enumerate}
  \item If $\RR$ is computational non-halting then $N_0(a,b,c)\not\in Y^m$
    or $N_0(a,b,c) = c$.

  \item If $\RR$ is computational then $N_{\Dot}(a,b,c,d)\leq b$ or
    $N_{\Dot}(a,b,c,d)\leq c$.

  \item If $\RR$ is synchronized then $P(a,b,c,d)\in \{c,d\}$.

  \item $S(a,b,c)\leq I(a,a)$ and if $S(a,b,c)\not\in X^m$ then
    $S(a,b,c)\leq a$.

  \item $\ds{ \bigcap \big\{ \RR \mid \RR\leq \AAM^m \text{ synchronized} \big\}
      = \big\{ r\in X^m \mid r \text{ is synchronized} \big\} }$.
\end{enumerate}
\end{lem}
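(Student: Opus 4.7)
My plan is to dispatch each of the five items by coordinate-wise case analysis on the piecewise definition of the relevant operation, letting the structural hypotheses (``computational'', ``synchronized'', ``non-halting'') eliminate or consolidate branches. The principal leverage is that computationality of $R$ bounds to one the number of coordinates where the first argument lies in $D$, so a global conclusion is forced by what happens at that single exceptional coordinate.

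For (1), the only branches of $N_0$ that can yield output outside $X$ are the first two, which demand $a(i) \in \{\langle 0, \Dot\rangle, \langle 0, 0\rangle\}$. If $N_0(a,b,c) \in Y^m$ then $a$ lies entirely in this set; computationality permits at most one $\Dot$-coordinate of $a$, and non-halting excludes the case where such a coordinate exists (otherwise $a$ would be a halting vector). Hence $a = \langle 0, 0\rangle^m$, only the second branch fires throughout (which in turn forces $c(i) \not\in D$ at every coordinate), and therefore $N_0(a,b,c) = c$. For (2), branches 2--4 of $N_\Dot$ all require $a(i) \in D$ and so fire at at most one coordinate $i_0$ by computationality; at every $i \neq i_0$, either branch 1 fires (with $b(i) = c(i)$ equal to the output) or branch 5 fires (whose $\Er$-output at minimum state sits below both $b(i)$ and $c(i)$ via the definition of $\wedge$). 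At $i_0$ itself, the output is either $\leq b(i_0)$ (branches 1, 2, 5, or branch 4 with $d(i_0) = b(i_0)$) or $\leq c(i_0)$ (branches 3, 5, or branch 4 with $d(i_0) = c(i_0)$); collating the cases gives $N_\Dot(a,b,c,d) \leq b$ or $\leq c$ globally. Part (3) is immediate: synchronization makes $\State(a(i)) = \State(b(i))$ uniform in $i$, so $P$ triggers the same branch at every coordinate and returns $c$ or $d$.

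For (4), the range of $S$ lies in $\{\langle 1, 0\rangle, \langle 1, \Er\rangle\}$ while $I(a,a)(i) \in M_1$ always, so $S(a,b,c) \leq I(a,a)$ follows coordinate-wise from the definition of $\wedge$; if $S(a,b,c)(i_0) = \langle 1, 0\rangle$ for some $i_0$ then the first branch of $S$ forces $a(i_0) = \langle 1, 0\rangle$, and invoking the synchronization operative on the ambient relation pins every coordinate of $a$ to $M_1$, whence $\langle 1, \Er\rangle \leq a(j)$ at the remaining coordinates. For (5), the inclusion $\subseteq$ follows by checking directly that $\{\langle k, \Er\rangle^m : 0 \leq k \leq N\}$ is itself a synchronized subalgebra: the $X$-absorbing operations keep us inside $X^m$, and $M$ or $M'$ applied to $\langle k, \Er\rangle^m$ returns $\langle k', \Er\rangle^m$ by following the instruction at state $k$. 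For $\supseteq$, starting from any $r$ in any synchronized subalgebra $R$, one applies $I(r, r)$ to obtain a state-$1$ vector in $R$, takes $\wedge$ against a vector of a different state (produced from $r$ via $M$, $M'$, or $H$) to manufacture an $X$-vector, and then walks the state graph via $M$ and $M'$ to reach $\langle k, \Er\rangle^m$ for every state $k$, invoking the standing assumption from Section~\ref{sec:Minsky_machines} that every state lies on a directed path from $1$.

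The main obstacle is this $\supseteq$ direction of (5), which requires carefully orchestrating the semilattice's $X$-absorption with state-graph navigation via $M$ and $M'$, and choosing the auxiliary vectors so that iterated $\wedge$ actually collapses content to $\Er$ without unintentionally dropping state below the target. The remaining parts are essentially bookkeeping once the case structure of each operation and the at-most-one-exceptional-coordinate observation from computationality are in hand.
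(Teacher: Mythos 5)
Your arguments for items (1)--(3) are sound and, for item (2), essentially reproduce the one case the paper works out in full: isolate the single coordinate $i_0$ where $a(i_0)\in D$, observe $N_\Dot(a,b,c,d)(\neq i_0)\leq(b\wedge c)(\neq i_0)$ via branches 1 and 5, then check the branch table at $i_0$. Item (4) is also essentially right, though two things deserve a remark. First, ``both lie in $M_1$'' does not by itself give $S(a,b,c)\leq I(a,a)$: you also need that $S(a,b,c)(i)=\vect{1,0}$ forces $a(i)=\vect{1,0}\in C$ and hence $I(a,a)(i)=\vect{1,0}$, since $\vect{1,0}\not\leq\vect{1,\Dot}$. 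Second, your argument for the clause $S(a,b,c)\not\in X^m \Rightarrow S(a,b,c)\leq a$ appeals to synchronization, but the lemma as written does not hypothesize it; in fact the clause is false for an unsynchronized $R$ (e.g.\ $R=\AAM^2$, $a=(\vect{1,0},\vect{0,0})$, $b=c=(\vect{1,0},\vect{1,0})$ gives $S(a,b,c)=(\vect{1,0},\vect{1,\Er})\not\leq a$). You are right that synchronization is what makes it true, but you should flag that you are supplying a hypothesis the statement omits, rather than silently using it.

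The genuine gap is in the $\supseteq$ direction of (5). Your plan is to manufacture an $X$-vector via $I(r,r)\wedge v$ for some auxiliary $v$ produced by $M$, $M'$, or $H$, and then walk the state graph with $M$, $M'$. But $\wedge$ takes the \emph{minimum} of states, and if $v$ has state $0$ (as $H(r)$ always does, and as $M(r,r)$, $M'(r)$ can), you land on $(\vect{0,\Er})^m$, and the state graph offers no exit from $0$: there is no path $0\rightsquigarrow k$ for $k\neq 0$, only $1\rightsquigarrow k$. You acknowledge the danger of ``dropping state below the target'' but never resolve it, and in a machine whose only states are $\{0,1\}$ there is no state $\geq 2$ available to avoid the drop. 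The paper sidesteps all of this: take $s=H(I(r,r))=(\vect{0,\Er})^m$ and then apply $z_k$, whose internal $I$ (inside $T(x)=S(I(x,x),I(x,x),I(x,x))$) \emph{resets} any $X$-element to state $1$ before walking, so $z_k(\vect{0,\Er})=\vect{k,\Er}$ regardless of the state you landed on. Applying $I$ after the $\wedge$ (rather than only before it) is the missing idea. Separately, your $\subseteq$ argument for (5) only checks closure under the $X$-absorbing operations and $M$, $M'$, but Lemma~\ref{lem:basic_facts_AM}(2) lists only $\wedge,M,M',H,S$ as $X$-absorbing; you still owe a check that $I$, $N_0$, $N_\Dot$, $P$ send tuples from $\{(\vect{k,\Er})^m\}$ back into it (they do, but it must be verified).
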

\begin{proof}
\Proofitem{(1)--(4)}
These items follow directly from the definitions. The most complicated one
is item (2), so we will leave the others to the reader. If $b, c, d$ do not
share the same state then $N_{\Dot}(a,b,c,d) = \X(b\wedge c\wedge d) \leq
b$. Assume now that $b,c,d$ share the same state. If
$\Dot\not\in\Content(a)$ then $N_\Dot(a,b,c,d) = b$ or $N_\Dot(a,b,c,d) =
\X(b\wedge c\wedge d)\leq b$, so also assume that there is $k$ with $a(k)\in
D$ and $a(\neq k)\in E^{m-1}$ (we use $R$ being computational here). Hence
$N_\Dot(a,b,c,d)(\neq k) \leq (b\wedge c)(\neq k)$, so we just need to show
that $N_\Dot(a,b,c,d)(k)$ is less than equal to $b$ or $c$. The
possibilities are
\[
  N_\Dot(a,b,c,d)(k) = \begin{cases}
    b = c & \text{if } b = c\not\in X, \\
    b & \text{if } c\in X, \\
    c & \text{if } b\in X, \\
    b = d & \text{if } b = d\not\in X, \\
    c = d & \text{if } c = d\not\in X, \\
    \X(b\wedge c\wedge d) & \text{otherwise}.
  \end{cases}
\]
In all cases we have $N_\Dot(a,b,c,d)(k)$ less than equal to $b$ or $c$, so
we are finished.

\Proofitem{(5)}
Let $r\in R$ and let $s = H(I(r,r))$. It is not hard to see that $s(i) =
\vect{0,\Er}$ for all $i$. We also have that $z_k(s)(i) = \vect{k,\Er}$ for
all $i$, from Lemma~\ref{lem:basic_facts_AM} item (4). The conclusion
follows immediately.
\end{proof}  
\begin{defn} \label{defn:Sm}  
Define elements $\sigma_i\in \AM^m$ for $i\in [m]$ by
\begin{equation}  \label{eqn:sigma}
  \sigma_i(j) = \begin{cases}
    \vect{1,\Dot} & \text{if } i = j, \\
    \vect{1,0} & \text{otherwise}.
  \end{cases}
\end{equation}
Let $\Sigma_m = \{ \sigma_1,\dots, \sigma_m \}$ and define the \emph{$m$-th
sequential relation} of $\AAM$ to be
\[
  \SS_m = \Sg_{\AAM^m}(\Sigma_m).
\]
\end{defn}   
\begin{defn} \label{defn:config}  
Let $k$ be a state of $\M$ and $\alpha,\beta,m\in \NN$ be such that
$\alpha+\beta < m$. Let $P_m$ be the set of permutations on $[m]$ and define
$\Config(k,\alpha,\beta)\subseteq \AM^m$ to be
\begin{multline*}
  \Config(k,\alpha,\beta) = \\
  \bigcup_{p\in P_m}
    \Bigg\{ p\Big( \big(
        \vect{k,\Dot},
        \underbrace{ \vect{k,A}, \dots, \vect{k,A} }_{\alpha},
        \underbrace{ \vect{k,B}, \dots, \vect{k,B} }_{\beta},
        \underbrace{ \vect{k,0}, \dots, \vect{k,0} }_{m-\alpha-\beta-1}
    \big) \Big) \Bigg\}
\end{multline*}
(the permutation $p$ acts on a tuple by permuting coordinates). This is the
set of vectors encoding the Minsky machine configuration $(k,\alpha,\beta)$.
\end{defn}   
\begin{lem} \label{lem:Sm}  
Let $\SS_m\leq \AAM^m$ be as in Definition~\ref{defn:Sm}.
\begin{enumerate}
  \item $\SS_m$ is computational and has capacity $m-1$.

  \item If $p$ is a permutation on $[m]$ then $p(S_m) = S_m$.

  \item $\Config(k,\alpha,\beta)\cap S_m\neq\emptyset$ if and only if
    $\Config(k,\alpha,\beta)\subseteq S_m$.

  \item If $S_m\cap C^m\neq \emptyset$ then $\SS_m$ is halting.
\end{enumerate}
\end{lem}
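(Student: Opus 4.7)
The plan is to prove items (1)--(3) by direct verification from the definitions and to establish (4) via a simulation argument inside $S_m$. For \emph{item (1)}, synchronization of $\SS_m$ follows from Lemma~\ref{lem:basic_facts_AM}(1) since each $\sigma_i$ is state-$1$ synchronized. For the bound $|r^{-1}(D)| \leq 1$, the generators satisfy it, and a case-by-case check shows each fundamental operation preserves this bound on synchronized relations: the only operations producing $D$-valued outputs are $M$ and $M'$ (whose output $D$-coordinates are contained in a single input's $D$-coordinates), $I$ (first case requires $x \in D$), and the selection-style operations $\wedge, N_0, N_\Dot, P$ (each of which selects from inputs in a way that preserves the $\leq 1$ bound). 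Capacity $m-1$ is immediate: each $\sigma_i \in Y^m$ with $\sigma_i(i) \in D$, witnessing all $m$ coordinates.

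For \emph{item (2)}, each permutation $p \in P_m$ acts coordinate-wise as an automorphism of $\AAM^m$ (commuting with every coordinate-wise fundamental operation) and sends $\sigma_i$ to $\sigma_{p^{-1}(i)}$, so it permutes $\Sigma_m$ among itself and stabilizes $S_m = \Sg_{\AAM^m}(\Sigma_m)$. For \emph{item (3)}, $\Config(k,\alpha,\beta)$ is by definition a single $P_m$-orbit, and by (2) $S_m$ is $P_m$-invariant, so the orbit is either contained in $S_m$ or disjoint from it.

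For \emph{item (4)}, the key observation is that $I(s, s) = \vect{1, 0}^m$ whenever $s \in C^m$: at each coordinate, $s(i) \in C$ fails the first case of $I$ (since $s(i) \notin D$) but triggers the second (since $s(i) \in C$), yielding $\vect{1, 0}$. Hence the constant vector $\vect{1, 0}^m$ lies in $S_m$. From this seed one derives the constant vectors $\vect{k, 0}^m$ at every state $k$ reachable from $1$ in the state graph, by iterating $M(\vect{k, 0}^m, \vect{k, 0}^m) = \vect{k', 0}^m$ (case $5$, following an increment or decrement-nonzero edge) and $M'(\vect{k, 0}^m) = \vect{k'', 0}^m$ (case $1$, following a decrement-zero edge). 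One can similarly obtain $\sigma_j$-like vectors $\tau_j^k$ (with $\vect{k,\Dot}$ at coordinate $j$ and $\vect{k,0}$ elsewhere) at each such state by computing $M(\vect{1,0}^m, \sigma_j)$ and then applying $M(\vect{k,0}^m, \tau_j^k)$ iteratively. Since some instruction $(k,R,0)$ or $(k,R,k',0)$ enters the halting state (guaranteed because $0$ is state-graph-reachable), a final application of $M(\vect{k,0}^m, \tau_j^k)$ via case $3$ (or a case $4$ analog) produces $\tau_j^0$: a vector with $\vect{0,\Dot}$ at coordinate $j$ and $\vect{0,0}$ elsewhere, which is a halting vector.

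The \emph{main obstacle} is carrying out the simulation in item (4) rigorously, in particular handling decrement instructions whose decrement branch must be traversed via case $4$ of $M$; this requires producing auxiliary vectors with register contents $\vect{k, R}$ from configurations of the form $M(\tau_i^k, \tau_{i'}^k)$, analogous to the computations in Example~\ref{ex:encoding}. One must verify via the $X$-absorbing and content-preservation properties in Lemma~\ref{lem:basic_facts_AM}(2,3) that no stray $\Er$-valued coordinates appear throughout the iteration, so that the final vector indeed lies in $\{\vect{0, 0}, \vect{0, \Dot}\}^m$.
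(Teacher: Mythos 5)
Your proofs of items (1)--(3) are correct and essentially identical to the paper's: synchronization via Lemma~\ref{lem:basic_facts_AM}(1), the $\leq 1$ bound on $D$-coordinates by checking each fundamental operation, the permutation acting as a coordinate automorphism on $\AAM^m$, and orbit-wise invariance for item (3).

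Item (4) is where your approach diverges from the paper, and where it breaks down. The paper's proof is a minimality argument: since the generators $\Sigma_m$ have no element in $C^m$, the first element of $C^m$ produced during generation must come from $H$ or $N_0$ (a case analysis of the operations shows no other operation can output a $C^m$-vector unless one of its inputs was already in $C^m$), and in either case the first-argument input $t_1(\overline\sigma)$ must already be a halting vector in $S_m$. Your approach is instead a global simulation: extract $\vect{1,0}^m$ via $I(s,s)$ (this step is fine), then propagate $\vect{k,0}^m$ and $\tau_j^k$ along a state-graph path from $1$ to $0$. The fatal gap is the one you flag yourself as the ``main obstacle'': the term $M(\vect{k,0}^m,\tau_j^k)$ follows an edge only when state $k$ has an \emph{increment} instruction (case~3 of $M$), and $M'(\tau_j^k)$ follows only the \emph{decrement-zero} branch (case~1 of $M'$). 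Neither follows a decrement-nonzero edge, since case~4 of $M$ requires a coordinate with content $R\in\{A,B\}$, which $\tau_j^k$ does not have. There are Minsky machines satisfying all the standing assumptions of Section~\ref{sec:Minsky_machines} for which \emph{every} state-graph path from $1$ to $0$ must cross a decrement-nonzero edge --- e.g.\ $\M = \{(1,A,2),\ (2,A,1,0)\}$, where the decrement-zero branch from state $2$ loops back to state $1$. Following only increment and decrement-zero edges from state $1$ gives the cycle $1\to 2\to 1\to\cdots$ and never reaches $0$. So the simulation cannot be completed in general. To ``produce auxiliary vectors with register contents $\vect{k,R}$'' as you suggest you would need to reproduce full Minsky computations as in Example~\ref{ex:encoding}, at which point you are essentially re-proving Theorem~\ref{thm:coding} and Corollary~\ref{cor:Sn_halting} rather than the local fact actually needed here.
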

\begin{proof}
\Proofitem{(1)}
The generators $\Sigma_m$ are witnesses to $\SS_m$ having capacity $m-1$.
Furthermore, $\Sigma_m$ is synchronized, so by
Lemma~\ref{lem:basic_facts_AM} item (1), $\SS_m$ must be as well. Let $s\in
S_m$ be such that $|s^{-1}(D)| \geq 2$. Examining the operations of $\AAM$,
we can see that any such element must have been generated by elements of
$\Sigma_m$ with more than one coordinate in $D$. $\Sigma_m$ contains no such
vectors.

\Proofitem{(2), (3)}
The generators $\Sigma_m$ are closed under $p$, so $\SS_m$ must be as well.
Applying item (2) for all permutations of $[m]$ proves item (3).

\Proofitem{(4)}
The generating set $\Sigma_m$ contains no vectors in $C^m$. A careful
analysis of the operations of $\AAM$ shows that the least complexity term
operation generating an element in $S_m\cap C^m$ from $\Sigma_m$ is of the
form $H(t_1(\overline{\sigma}))$ or $N_0(t_1(\overline{\sigma}),
t_2(\overline{\sigma}), t_3(\overline{\sigma}))$, where
$t_1(\overline{\sigma})(\ell)\in D$ for some $\ell$. Looking at the
definitions, we can see that $t_1(\overline{\sigma})$ is a halting vector in
either of these cases.
\end{proof}  

We are now ready to prove the main result of this section, The Coding
Theorem, which proves that $\SS_m$ encodes the computation of $\M$.

\begin{thm}[The Coding Theorem] \label{thm:coding}  
Let $\M$ be a Minsky machine.
\begin{enumerate}
  \item If $\M^n(1,0,0) = (k,\alpha,\beta)$ and $\M$ has $n$-step capacity
    $m-1$ then $\Config(k,\alpha,\beta)\subseteq S_m$.

  \item If $\Config(k,\alpha,\beta)\subseteq S_m$ and $\M$ does not halt
    with capacity $m-1$ then for some $n$ we have $\M^n(1,0,0) =
    (k,\alpha,\beta)$ and $\M$ has $n$-step capacity $m-1$.
\end{enumerate}
\end{thm}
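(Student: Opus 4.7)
For part (1) I argue by induction on $n$. The base case $n=0$ gives $\Config(1,0,0) = \Sigma_m \subseteq S_m$ directly from the definitions. For the inductive step, assume the statement holds at stage $n$ and suppose $\M$ has $(n+1)$-step capacity $\leq m-1$; let $(i,\alpha,\beta) = \M^n(1,0,0)$ and $(j,\alpha',\beta') = \M^{n+1}(1,0,0)$. The capacity bound implies the same bound at stage $n$, so $\Config(i,\alpha,\beta) \subseteq S_m$ by the inductive hypothesis, and guarantees $\alpha'+\beta' \leq m-1$ so that $\Config(j,\alpha',\beta')$ is well-defined. I split into the four Minsky instruction types at state $i$; in each case Example 4.10 shows explicitly how to apply $M$ (to two appropriately chosen elements of $\Config(i,\alpha,\beta)$) or $M'$ (to a single element) to produce an element of $\Config(j,\alpha',\beta')$, with the capacity bound supplying the ``free'' coordinates required for the construction. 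Lemma 4.9(3) then upgrades one element of $\Config(j,\alpha',\beta') \cap S_m$ to the full inclusion $\Config(j,\alpha',\beta') \subseteq S_m$, completing the induction.

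For part (2) I use a closure argument. Define $T \subseteq \AM^m$ to consist of all $t$ such that either $t$ has at least one coordinate in $X$, or $t \in \Config(k',\alpha',\beta')$ for some configuration $(k',\alpha',\beta')$ reached by $\M$ in some $n'$ steps with $n'$-step capacity $\leq m-1$. Since $\Sigma_m = \Config(1,0,0) \subseteq T$, if $T$ is closed under the fundamental operations of $\AAM$ then $S_m \subseteq T$. The hypothesis $\Config(k,\alpha,\beta) \subseteq S_m$ then places every vector of $\Config(k,\alpha,\beta)$ in $T$, and since such vectors carry no $X$-coordinate they must lie in the reachable-$\Config$ component, yielding the desired conclusion.

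The heart of the proof is verifying that $T$ is closed under each of the nine operations, a finite case analysis. The operations $\wedge$, $M$, $M'$, $H$, and $S$ are $X$-absorbing (Lemma 4.8(2)), so $X$-coordinates in any input propagate; this reduces closure to the case where all inputs are $\Config$-vectors of reachable configurations. By the state-homomorphism property (Lemma 4.8(1)) such inputs are synchronized, so either their states disagree (forcing an all-$X$ output) or they share a state $i$, in which case the definitions of $M$ and $M'$ are designed so that the result either contains an $\Er$-coordinate or equals a $\Config$-vector of the Minsky successor configuration at state $i$---this is exactly the content of Example 4.10 together with Lemma 4.8(6). The operations $N_\Dot$ and $P$ return one of their arguments or an $\X$ of a meet, so closure is automatic; $I$ is handled by tracing cases and invoking Lemma 4.8(3).

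The critical operations are $H$ and $N_0$: these could in principle synthesize an element of $\Config(0,\alpha',\beta')$ from a state-$0$ input, and this is exactly the place where the non-halting hypothesis enters. A non-$X$ input to either operation that triggers a non-$X$ output must have all coordinates of state $0$, which forces it to lie in $\Config(0,\alpha',\beta')$ for a reachable halting-state configuration---and the assumption that $\M$ does not halt with capacity $\leq m-1$ forbids this. Hence $H$ and $N_0$ always produce vectors with an $X$-coordinate, keeping us inside $T$. Enumerating each case is tedious but mechanical, and the design of $\AAM$ has been specifically tailored so that no operation leaks outside of $T$.
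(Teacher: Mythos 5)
Your part (1) matches the paper's in spirit (the paper simply defers to Example~\ref{ex:encoding} and Lemma~\ref{lem:Sm}), and your explicit induction on $n$ with the permutation-upgrade via Lemma~\ref{lem:Sm}(3) is a correct fleshing-out of it.

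For part (2) the closure-set strategy is the right idea---it is essentially the paper's induction on the generation stages $G_n$ of $S_m$ in disguise---but the set $T$ as you have defined it is not actually closed under $N_\Dot$, $P$, or $I$, and this is a genuine gap. The trouble is that the escape clause ``has at least one coordinate in $X$'' admits vectors that are not synchronized or that carry more than one $\Dot$-coordinate, and $N_\Dot$, $P$, $I$ act coordinate-wise, so they can blend such vectors into an output with \emph{no} $X$-coordinate that is also not a reachable $\Config$-vector. Concretely, with $m=4$ take $a=(\vect{1,\Dot},\vect{1,\Dot},\vect{1,\Er},\vect{1,0})$, $b=\sigma_2$, $c=(\vect{1,A},\vect{1,\Er},\vect{1,0},\vect{1,0})$, $d=(\vect{1,A},\vect{1,\Dot},\vect{1,\Er},\vect{1,0})$: all four lie in your $T$, yet $N_\Dot(a,b,c,d)=(\vect{1,A},\vect{1,\Dot},\vect{1,0},\vect{1,0})\in\Config(1,1,0)$, which escapes $T$ whenever $(1,1,0)$ is not a reachable configuration of $\M$. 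A two-coordinate, non-synchronized example does the same for $P$, and $I(a,b)$ with $a$ carrying two $\Dot$'s produces a two-$\Dot$ output that is in no $\Config$ set. Your slogan that ``$N_\Dot$ and $P$ return one of their arguments or an $\X$ of a meet'' is true only \emph{per coordinate}, not globally; the lemma you want (Lemma~\ref{lem:basic_facts_relns} items (2)--(3), giving $N_\Dot(a,b,c,d)\leq b$ or $\leq c$, and $P(a,b,c,d)\in\{c,d\}$) requires the arguments to come from a computational, hence synchronized, relation. The repair is either to restrict the $X$-escape clause of $T$ to synchronized vectors with at most one coordinate in $D$, or---equivalently and closer to the paper---to only prove closure of $T$ against inputs from $S_m$, exploiting that $S_m$ is synchronized and computational by Lemma~\ref{lem:Sm}(1); the paper's generation-stage induction with $G_n\subseteq S_m$ gets this restriction for free. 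The $H$/$N_0$ part of your argument, where the non-halting hypothesis enters, is correct as written.
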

\begin{proof}
For the first item, we refer the reader to Example~\ref{ex:encoding} and
Lemma~\ref{lem:Sm}.

For the second item, suppose that $\Config(k,\alpha,\beta)\subseteq S_m$ and
$\M$ does not halt with capacity $m-1$. We will analyze the generation of
$\SS_m = \Sg_{\AAM^m}(\Sigma_m)$. Let $G_0 = \Sigma_m$ and
\[
  G_n
  = \Big\{ F(\overline{g}) \mid F \text{ a fundamental $\ell$-ary
      operation},\ \overline{g}\in G_{n-1}^\ell \Big\}
    \cup G_{n-1}.
\]
Observe that $S_m = \bigcup G_n$, so $\Config(k,\alpha,\beta)\cap G_n \neq
\emptyset$ for some least $n$. A key observation for what follows is that
since $\Sigma_m$ is closed under coordinate permutation, so is $G_n$, so
$\Config(k,\alpha,\beta)\cap G_n\neq \emptyset$ implies
$\Config(k,\alpha,\beta)\subseteq G_n$. After proving the next claim, we
will be done.

\begin{claim*}
If $n$ is minimal such that $\Config(k,\alpha,\beta)\subseteq G_n$ then
$\M^n(1,0,0) = (k,\alpha,\beta)$ and $\M$ has $n$-step capacity $m-1$.
\end{claim*}
\begin{claimproof}
The proof shall be by induction on $n$. Observe that $\Config(1,0,0) =
\Sigma_m = G_0$, $\M^0(1,0,0) = (1,0,0)$, and $\M$ has $0$-step capacity
$m-1\geq 0$. This establishes the basis of the induction.

Suppose now that $n>0$ and let $s\in \Config(k,\alpha,\beta)\subseteq G_n$.
This implies that $s = F(\overline{g})$ for some $\ell$-ary fundamental
operation $F$ and $\overline{g}\in G_{n-1}^{\ell}$. We break into cases
depending on which fundamental operation $F$ is.

\Case{$F\in \{\wedge, N_{\Dot}, P\}$}
These operations have the property that if $s = F(\overline{g})$ then $s\leq
g_i$ for some $g_i$ amongst the $\overline{g}$. Since $s\in Y^m$, this
implies $s = g_i$, so $s\in G_{n-1}$ and hence
$\Config(k,\alpha,\beta)\subseteq G_{n-1}$, a contradiction.

\Case{$F\in \{H, S\}$}
These operations have ranges contained entirely in $E$. Since $\Dot\in
\Content(s)$, $s$ cannot be the output of such an operation.

\Case{$F = I$}
If $s = I(a,b)$ then $s\in \Config(1,0,0)$, and we are back in the base
case.

\Case{$F = N_0$}
If $s = N_0(a,b,c)$ then $N_0(a,b,c)\in Y^m$. If $a$ is not a halting vector
then we have $s = c$ by Lemma~\ref{lem:basic_facts_relns} item (1), so $s$
and hence $\Config(k,\alpha,\beta)$ are contained in $G_{n-1}$, a
contradiction. If $a$ is a halting vector then from the definition of $N_0$
we have that $a\in \Config(0,0,0)$, so $\Config(0,0,0)\subseteq G_{n-1}$,
and by the inductive hypothesis we have that $\M^{n-1}(1,0,0) = (0,0,0)$.
Hence $\M$ halts in $n-1$ steps with capacity $m-1$, contradicting the
hypotheses.

\Case{$F \in \{M,M'\}$}
Let $s = M(a,b)$. If $a\in C^m$ then by Lemma~\ref{lem:Sm} item (4), we have
that $G_{n-1}$ contains a halting vector. This gives rise to a contradiction
as in the case for $F = N_0$. If $a\not\in C^m$ then since $s\in Y^m$ we
have that $a(\ell)\in D$ for some $\ell$, from the definition of $M$. Also
from the definition, there is some instruction $(i,R,k)\in \M$ or
$(i,R,j,k)\in \M$ such that $a,b\in \Config(i, \alpha+\epsilon, \beta+\tau)$
where the different possibilities for $(\epsilon,\tau)$ correspond to the
different possibilities for the instruction. In any case, by the inductive
hypothesis we have that $\M^{n-1}(1,0,0)=(i, \alpha+\epsilon, \beta+\tau)$
and $\M$ has $(n-1)$-step capacity $m-1$. We therefore have
\[
 \M^{n}(1,0,0)
 = \M(i,\alpha+\epsilon,\beta+\tau)
 = (k,\alpha,\beta).
\]
Since $\alpha + \beta \leq m-1$ and $\M$ has $(n-1)$-step capacity $m-1$, it
follows that $\M$ has $n$-step capacity $m-1$. The analysis for $M'$ is
similar.
\renewcommand{\qedsymbol}{\ensuremath{\circ} \quad \ensuremath{\square}}
\end{claimproof}
\let\qed\relax
\end{proof}  
\begin{cor} \label{cor:Sn_halting}  
The following are equivalent.
\begin{enumerate}
  \item $\M$ halts with capacity $m-1$,

  \item $\SS_m$ is halting,

  \item every computational $\RR\leq \AAM^\ell$ with capacity $m-1$ is
    halting.
\end{enumerate}
\end{cor}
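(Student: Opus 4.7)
The plan is to close the loop $(1)\Rightarrow (2)\Rightarrow (1)$ together with $(1)\Rightarrow (3)\Rightarrow (2)$. The implications involving only $\SS_m$ follow almost directly from the Coding Theorem and from $\SS_m$ being itself computational of capacity $m-1$; the substance lies in $(1)\Rightarrow (3)$.

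For $(1)\Rightarrow (2)$, I would apply Theorem~\ref{thm:coding}(1) with $(k,\alpha,\beta)=(0,0,0)$: since $\M$ halts with capacity $m-1$, we get $\Config(0,0,0)\subseteq S_m$, and every element of $\Config(0,0,0)$ is a halting vector by construction. For $(2)\Rightarrow (1)$, suppose $r\in S_m$ is a halting vector. By Lemma~\ref{lem:Sm}(1), $\SS_m$ is computational, so $r$ has at most one coordinate in $D$, and being halting forces exactly one, giving $r\in \Config(0,0,0)$. Lemma~\ref{lem:Sm}(3) lifts this to $\Config(0,0,0)\subseteq S_m$; if $\M$ did not halt with capacity $m-1$, Theorem~\ref{thm:coding}(2) would then supply $n$ with $\M^n(1,0,0)=(0,0,0)$ and $n$-step capacity $m-1$, contradicting that assumption. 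Finally, $(3)\Rightarrow (2)$ is immediate by applying (3) to $\SS_m\leq \AAM^m$, which is computational of capacity $m-1$ by Lemma~\ref{lem:Sm}(1).

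For $(1)\Rightarrow (3)$, let $\RR\leq \AAM^\ell$ be a computational relation with capacity $m-1$, witnessed by $\sigma_1,\dots,\sigma_m\in R\cap Y^\ell$ at distinct coordinates $i_1,\dots,i_m$. Because $\RR$ is computational, each $\sigma_j$ satisfies $\sigma_j^{-1}(D)=\{i_j\}$ and $\sigma_j(k)\in C$ for $k\neq i_j$. Set $\tau_j := I(\sigma_j,\sigma_j)\in R$; applying $I$ coordinate-wise yields $\tau_j(i_j)=\vect{1,\Dot}$ and $\tau_j(k)=\vect{1,0}$ for $k\neq i_j$. Tracing through the proof of Theorem~\ref{thm:coding}(1) as laid out in Example~\ref{ex:encoding}, one obtains a term $T$ built from $\{M,M'\}$ alone, on $m$ variables, such that $r:=T(\sigma_1,\dots,\sigma_m)\in \Config(0,0,0)$ is a halting vector of $\SS_m$. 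Let $s:=T(\tau_1,\dots,\tau_m)\in R$. At coordinate $i_j$, the input tuple $(\tau_1(i_j),\dots,\tau_m(i_j))$ agrees with $(\sigma_1(j),\dots,\sigma_m(j))$, so $s(i_j)=r(j)\in \{\vect{0,0},\vect{0,\Dot}\}$. At any coordinate $k\notin \{i_1,\dots,i_m\}$ the input tuple is uniformly $(\vect{1,0},\dots,\vect{1,0})$; the $n$-ary analogue of Lemma~\ref{lem:basic_facts_AM}(6), proved by the same induction on term complexity, combined with the fact that at some coordinate of $\SS_m$ the term $T$ sends a state-$1$ tuple to $\vect{0,0}\notin X$, forces $T(\vect{1,0},\dots,\vect{1,0})=\vect{0,0}$. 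Hence $s\in \{\vect{0,0},\vect{0,\Dot}\}^\ell$ and retains the $\vect{0,\Dot}$ at the coordinate $i_{j_0}$ where $r$ has its $\Dot$, so $s$ is a halting vector of $\RR$.

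The main obstacle is pinning down the value of $T(\vect{1,0},\dots,\vect{1,0})$ at the coordinates of $\RR$ lying outside $\{i_1,\dots,i_m\}$. This hinges on two points: the halting vector in $S_m$ can be produced by a term built from $M$ and $M'$ alone (evident from the step-by-step Minsky simulation in Example~\ref{ex:encoding}), and Lemma~\ref{lem:basic_facts_AM}(6) extends verbatim from binary to $n$-ary terms in $\{M,M'\}$. Together these guarantee that the extra coordinates of $s$ collapse to $\vect{0,0}$.
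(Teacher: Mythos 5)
Your proof is correct and follows essentially the same route as the paper's, with two minor organizational differences. First, you prove the implication ``$(1)\Rightarrow(3)$'' directly, whereas the paper proves ``$(2)\Rightarrow(3)$'' (these are equivalent given $(1)\Leftrightarrow(2)$, which you both establish identically via Theorem~\ref{thm:coding} and Lemma~\ref{lem:Sm}). Second, and more substantively, the way you pin down the extra coordinates $k\notin\{i_1,\dots,i_m\}$ of the halting vector differs from the paper. You extract an explicit term $T$ in $\{M,M'\}$ from Example~\ref{ex:encoding} and invoke an $n$-ary generalization of Lemma~\ref{lem:basic_facts_AM}(6) to conclude $T(\vect{1,0},\dots,\vect{1,0})=\vect{0,0}$. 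The paper instead sets $\SS=\Sg_{\AAM^\ell}\{\sigma_i\mid i\in\cl{I}\}$ (the same $\sigma_i=I(\sigma_i',\sigma_i')$ you call $\tau_j$), observes $\SS(\cl{I})=\SS_m$, and appeals to the fact that all generators have content $0$ outside $\cl{I}$, so the constant-state set $\{\vect{k,0},\vect{k,\Er}\}$ is a subuniverse that traps those coordinates --- hence any $s\in S\cap Y^\ell$ has content $0$ there. The paper's closure argument is a bit lighter: it avoids needing the $n$-ary version of Lemma~\ref{lem:basic_facts_AM}(6), which is stated only for binary $t(x,y)$ and would require a (short but nonzero) re-run of the induction to establish. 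Your proof would be cleaner if you either supplied that $n$-ary induction explicitly or swapped it for the subuniverse observation; either way the argument goes through.
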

\begin{proof}
We begin by proving the equivalence of the first two items. Suppose that
$\M$ halts with capacity $m-1$. By Theorem~\ref{thm:coding}, this implies
that $\Config(0,0,0)\subseteq S_m$ (recall that we assumed in
Section~\ref{sec:Minsky_machines} that $\M$ would zero the registers before
halting). Any element of $\Config(0,0,0)$ is a halting vector, so $\SS_m$ is
halting. For the converse, suppose that $\SS_m$ has a halting vector $s$. It
follows that $s\in \Config(0,0,0)$ and hence, by Lemma~\ref{lem:Sm} item
(3), that $\Config(0,0,0)\subseteq S_m$. Towards a contradiction assume that
$\M$ does not halt with capacity $m-1$. By Theorem~\ref{thm:coding}, for
some $n$ we have $\M^n(1,0,0) = (0,0,0)$ and $\M$ has $n$-step capacity
$m-1$. This is a contradiction.

We next prove the equivalence of items (2) and (3). Suppose that $\SS_m$ is
halting and that $\RR\leq\AAM^\ell$ has capacity $m$ witnessed by
$(\sigma_i')_{i\in \cl{I}}$ with $|\cl{I}| = m$, say $\sigma_i'(i)\in D$ and
$\sigma_i'(\neq i)\in C^{\ell-1}$. Let $\sigma_i = I(\sigma_i',\sigma_i')$
and observe that $(\sigma_i)_{i\in \cl{I}}$ satisfies
Equation~\eqref{eqn:sigma} from the definition of $\SS_m$:
\[ \tag{\ref{eqn:sigma} redux}
  \sigma_i(j) = \begin{cases}
    \vect{1,\Dot} & \text{if } i = j, \\
    \vect{1,0} & \text{otherwise}.
  \end{cases}
\]
Let $\SS = \Sg_{\AAM^\ell}( \{\sigma_i \mid i\in \cl{I}\} )$. We have that
$\SS(\cl{I}) = \SS_m$ and if $s\in S\cap Y^\ell$ then $\Content(s(i)) = 0$
for all $i\in [\ell]\setminus [m]$. Combining these yields a halting vector
for $S$. We have that $\SS\leq \RR$, so $\RR$ must also be halting. The
converse is clear since $\SS_m$ has capacity $m-1$, by Lemma~\ref{lem:Sm}
item (1).
\end{proof}  

If $\M$ halts then $\SS_m$ is halting for some $m$. Projecting on a
single coordinate, it follows that
\[
  \m{T} = \Sg_{\AAM}\big\{ \vect{1,\Dot}, \vect{1,0} \big\}
\]
is also halting (i.e.\ $\vect{0,\Dot} \in T$). Independent of the halting
status of $\M$, let us consider this relation. Whether or not $\m{T}$ is
halting is a decidable property. If $\m{T}$ is non-halting then it is not
possible for $\M$ to halt (the converse does not hold, of course).
\textbf{We therefore assume from this point onward that $\M$ is such that
$\m{T}$ is halting}.

\section{If \texorpdfstring{$\M$}{M} does not halt} \label{sec:M_doesnt_halt}  
Recall from Section~\ref{sec:background} that $\Rel(\AAM)$ is the set of all
finitary relations of $\AAM$. If $\cl{R}\subseteq \Rel(\AAM)$ is a set of
relations then we have $\cl{R}\Entails \RR$ if and only if $\RR$ can be
obtained from relations in $\cl{R}\cup \{ = \}$, in finitely many steps, by
applying the following constructions:
\begin{enumerate}
  \item intersection of equal arity relations,
  \item (cartesian) product of finitely many relations,
  \item permutation of the coordinates of a relation, and
  \item projection of a relation onto a subset of coordinates.
\end{enumerate}
A close analysis of various projections of relations is called for, so we
remind the reader of the convention for projections adopted in
Section~\ref{sec:background}: for $m\in \NN$ and $I\subseteq [m]$,
\begin{itemize}
  \item denote the projection of $a\in \AM^m$ to coordinates $I$ by $a(I)\in
    \AM^I$,

  \item denote the projection of $S\subseteq \AM^m$ to coordinates $I$ by
    $S(I)\subseteq \AM^I$, and

  \item define $a(\neq i) = a([m]\setminus i)$ and likewise $a(\neq i,j) =
    a([m]\setminus \{i,j\})$.
\end{itemize}
Finally, for $n\in \NN$ we define $\Rel_{\leq n}(\AAM)$ to be the set of at
most $n$-ary relations of $\AAM$.

The next theorem shows that the relations built using the entailment
constructions above must have a certain form. This theorem is essentially
Theorem 3.3 from Zadori~\cite{Zadori_DualityFinRelns} and we refer the
interested reader to that paper for the proof.

\begin{thm} \label{thm:entailment_form} 
Let $\A$ be an algebra and let $\cl{R}$ be a set of relations on
$\A$. Then $\cl{R}\Entails \SS$ if and only if
\[
  \SS
  = \pi \Bigg( \bigcap_{i\in I} \mu_i \Big( \prod_{j\in J_i} \RR_{ij} \Big) \Bigg)
\]
for some finite index sets $I$ and $(J_i)_{i\in I}$, where the $\RR_{ij}\in
\cl{R} \cup \{=\}$, $\pi$ is a coordinate projection, and the $\mu_i$ are
coordinate permutations.
\end{thm}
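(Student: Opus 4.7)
The plan is to handle both directions of the equivalence, with the backward direction being immediate and the forward direction following from a closure argument on the class of relations expressible in the canonical form.

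The backward direction is a direct observation: the canonical form $\pi\bigl(\bigcap_{i} \mu_i(\prod_{j} \RR_{ij})\bigr)$ is built from $\cl{R}\cup\{=\}$ using only products, permutations, intersections of equal-arity relations, and projections, which are exactly the four entailment constructions; hence any such $\SS$ satisfies $\cl{R}\Entails \SS$.

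For the forward direction, I would let $\cl{K}$ denote the class of relations expressible in the stated canonical form and show that $\cl{K}$ contains $\cl{R}\cup\{=\}$ (trivially, using singleton index sets with identity $\pi$ and identity $\mu$) and is closed under each of the four entailment constructions. Since $\RClo(\cl{R})$ is the smallest such class, this will yield $\RClo(\cl{R})\subseteq \cl{K}$, giving the desired form. Closure under coordinate permutation and projection is immediate: a permutation composed with the outer projection is again a projection composed with a permutation on the surviving coordinates, and a projection of a projection is a single projection. Closure under cartesian product is also straightforward: given two canonical expressions $\SS_\ell = \pi_\ell(X_\ell)$ for $\ell=1,2$, I would rename all internal coordinates to be disjoint, combine the index sets as $I_1 \sqcup I_2$ (with each $\mu_{\ell,i}$ still acting only on its own disjoint block), and use the combined outer projection $\pi_1\sqcup \pi_2$.

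The main obstacle is closure under intersection, where the coordinate bookkeeping is most delicate. Given $\SS_1, \SS_2 \in \cl{K}$ of common arity $n$ with canonical expressions $\SS_\ell = \pi_\ell(X_\ell)$, the idea is to place $X_1$ and $X_2$ on disjoint blocks of coordinates (so that $X_1\times X_2$ inherits all the inner intersections and permutations carried over from both expressions), then intersect with $n$ additional copies of $=$, where the $k$-th such copy forces the coordinate chosen by $\pi_1$ to be the $k$-th output of $\SS_1$ to equal the coordinate chosen by $\pi_2$ to be the $k$-th output of $\SS_2$; each such equality is realized by an appropriate permutation $\mu_i$ placing those two coordinates adjacent to a copy of $=$ and the rest in a don't-care block. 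Finally, projecting onto the $\pi_1$-selected coordinates yields $\SS_1\cap \SS_2$, and the whole construction is manifestly of canonical form with index set $I_1\sqcup I_2 \sqcup [n]$. Verifying that this equals $\SS_1 \cap \SS_2$ is routine once the coordinates are tracked.

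Once all four closures are established, $\cl{K}\supseteq \RClo(\cl{R})$, so every $\SS$ with $\cl{R}\Entails \SS$ admits a canonical expression, finishing the forward direction. As the authors note, a worked-out version of this argument appears as Theorem~3.3 of Zadori~\cite{Zadori_DualityFinRelns}.
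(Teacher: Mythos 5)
Since the paper gives no proof of its own here --- it cites Zadori~\cite{Zadori_DualityFinRelns} and stops --- there is no in-paper argument to compare against, but the strategy you chose (show that the class $\cl{K}$ of relations admitting the displayed canonical form contains $\cl{R}\cup\{=\}$ and is closed under each of the four entailment constructions) is the standard one and is essentially what Zadori does. The backward direction, the base case, and closure under coordinate permutation and under projection are all fine.

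Your product and intersection closures, however, have a genuine gap: as you have set them up, the intersectands do not all have the same arity, yet $\bigcap_{i\in I}\mu_i\bigl(\prod_{j\in J_i}\RR_{ij}\bigr)$ is by definition an intersection of equal-arity relations. Writing $X_\ell=\bigcap_{i\in I_\ell}\mu_{\ell,i}\bigl(\prod_j\RR_{\ell,ij}\bigr)$ of arity $M_\ell$, you realize the product over the index set $I_1\sqcup I_2$ with ``each $\mu_{\ell,i}$ acting only on its own block.'' But then an intersectand coming from $I_1$ has arity $M_1$, not $M_1+M_2$; to bring it to full arity one must pad the remaining $M_2$ coordinates with an unconstraining relation, and no such padding is available as a product of members of $\cl{R}\cup\{=\}$ (in particular $A^{M_2}$ is not itself in $\cl{R}\cup\{=\}$). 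The same mismatch afflicts the unspecified ``don't-care block'' in each $=$-connector of your intersection step, and naively filling that block with copies of $=$ imposes spurious equalities between coordinates that survive into $X_1\times X_2$. The product step has an easy repair: index over $I_1\times I_2$ with intersectands $\mu_{1,i}\bigl(\prod_j\RR_{1,ij}\bigr)\times\mu_{2,i'}\bigl(\prod_j\RR_{2,i'j}\bigr)$, using the identity $\bigl(\bigcap_i A_i\bigr)\times\bigl(\bigcap_{i'}B_{i'}\bigr)=\bigcap_{(i,i')}\bigl(A_i\times B_{i'}\bigr)$. The intersection step needs an actual additional idea, not just relabelling: one device that works is to double every coordinate to a twin, carry the $I_1\times I_2$ intersectands on both copies, add a permuted $=^{M_1+M_2}$ matching each coordinate to its twin, and for each $k\in[n]$ add a permuted $=^{M_1+M_2}$ whose matching replaces the two twin-edges at the $k$-th output coordinates of $\pi_1$ and $\pi_2$ by an edge between those outputs and an edge between their twins; every edge is then either the desired equality or one already forced, and all arities agree. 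Without some such device, the product and intersection closures do not go through as written.
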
 

We now take a close look at relations of this form.

\begin{lem} \label{lem:entailed_must_halt}  
Suppose that
\[
  \sigma_1,\dots,\sigma_m
  \in \pi \Bigg( \bigcap_{i\in I} \mu_i \Big( \prod_{j\in J_i} \RR_{ij} \Big) \Bigg)
  = \SS
  \leq \AAM^m,
\]
where $\sigma_1,\dots,\sigma_m$ are the generators of $\SS_m$
(cf.\ Definition~\ref{defn:Sm}), $\pi$ is a projection, the $\mu_i$ are
permutations, the $\RR_{ij}$ are a finite collection of members of
$\Rel_{\leq n}(\AAM)$, and $n < m$. Then $S\cap C^m\neq\emptyset$.
\end{lem}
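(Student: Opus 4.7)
The plan is to apply Theorem~\ref{thm:entailment_form} to rewrite $S = \pi(R)$ where
\[
  R = \bigcap_i \mu_i\Big(\prod_j R_{ij}\Big) \leq \AAM^M
\]
is a pre-image relation with each constraint $R_{ij}$ acting, after absorbing the permutations $\mu_i$, on a coordinate set $C_{ij}\subseteq[M]$ of size at most $n<m$. Without loss of generality $\pi$ projects onto $[m]\subseteq[M]$; for each $k\in[m]$ choose a lift $\hat\sigma_k\in R$ with $\hat\sigma_k|_{[m]}=\sigma_k$. The central observation is that $|C_{ij}\cap[m]|\leq n<m$, so $[m]\setminus C_{ij}\neq\emptyset$, and for any $k\in[m]\setminus C_{ij}$ we have $\hat\sigma_k(v)=\vect{1,0}\in C$ at every $v\in C_{ij}\cap[m]$, because $\sigma_k(v)=\vect{1,0}$ whenever $v\neq k$.

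I would next decompose $[M]$ into super-blocks $B_1,B_2,\dots$, the connected components of the graph on $[M]$ whose edges join $v,w$ lying in a common $C_{ij}$. Since no constraint crosses a super-block, $R$ factors as $\prod_B R|_B$, and one can construct $\tau\in R$ super-block by super-block. For a super-block $B$ with $B\cap[m]=\emptyset$, any element of $R|_B$ works. For a super-block $B$ with $B\cap[m]\neq\emptyset$ but $[m]\not\subseteq B$, pick $k\in[m]\setminus B$ and set $\tau|_B := \hat\sigma_k|_B\in R|_B$; by the central observation, $\tau|_{B\cap[m]}$ is identically $\vect{1,0}$ and hence lies in $C^{|B\cap[m]|}$.

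The main obstacle is the case when some super-block $B$ contains all of $[m]$, since then no single lift avoids every $\Dot$ in $[m]$. Here I plan to exploit both the arity bound $|C_{ij}|<m$ and the term operations of $\AAM$---particularly $z_1$ from Lemma~\ref{lem:basic_facts_AM}(4), the semilattice $\wedge$, and the operation $P$---to produce the required element. The idea is that inside each local constraint $R_{ij}\subseteq B$ one can combine the restrictions $\hat\sigma_k|_{C_{ij}}$ for $k\in[m]\setminus C_{ij}$ to produce a member of $R_{ij}$ whose restriction to $C_{ij}\cap[m]$ equals $\vect{1,0}$ at every position, and then verify that these local choices can be made globally compatible on the shared coordinates in $[M]\setminus[m]$. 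The compatibility ultimately rests on the same arity bound: since each individual constraint sees strictly fewer than $m$ positions, no single $C_{ij}$ captures the entire ``diagonal'' $[m]$, giving enough slack on the $[M]\setminus[m]$-part to reconcile overlapping constraints and yield $\tau|_B\in R|_B$ with $\tau|_{B\cap[m]}\in C^{|B\cap[m]|}$, completing the construction.
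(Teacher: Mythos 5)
Your setup matches the paper's: lift each generator $\sigma_k$ to some $\tau_k\in R$, and use $|C_{ij}|<m$ to choose, for each constraint, some $k\in[m]\setminus C_{ij}$, so that $\tau_k$ has content $0$ on all of $C_{ij}\cap[m]$. The super-block decomposition is a tidy observation but does essentially no work: once the intersection is genuinely multi-constraint, one super-block will contain all of $[m]$, and that is the only case with any content.

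That case is precisely where your argument stops being a proof and becomes a statement of intent. Sentences of the form ``one can combine the restrictions \dots\ to produce a member \dots\ and then verify that these local choices can be made globally compatible'' and ``the arity bound \dots\ gives enough slack'' simply assert what needs proving. The obstruction is concrete: two constraints $C_{ij}$ and $C_{i'j'}$ may overlap on coordinates in $[M]\setminus[m]$, the stitched element on $C_{ij}$ comes from some $\tau_{\ell_j}$ and the one on $C_{i'j'}$ from a different $\tau_{\ell_{j'}}$, and nothing in your sketch forces these to agree on the overlap.

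The paper closes this gap with three ingredients you omit entirely. First, the lifts $\tau_\ell$ are taken \emph{minimal} in the semilattice order subject to having state $1$ (produced via $I$); without minimality there is no control over the $Q=[M]\setminus[m]$ coordinates at all. Second, a structural claim, proved by acting on the $\tau_\ell$ with $M$ and $I$, shows that for every $q\in Q$ either all $\tau_\ell(q)$ coincide and lie in $\{\vect{1,0},\vect{1,\Er}\}$, or exactly one $\tau_\ell(q)$ equals $\vect{1,\Dot}$ while the rest equal $\vect{1,0}$. This yields a partition $Q=Q_=\sqcup Q_{\neq}$ on which stitched elements behave predictably. Third, a counting argument (again from $|C_{ij}|<m$) supports an iterative re-choice of the indices $\ell_j$ that removes every stray $\Dot$ on $Q_{\neq}$, converging to an element whose description is independent of $i$ and which therefore lies in the full intersection. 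None of these steps appear in your sketch, and the tools you name ($z_1$, $P$) are not the ones doing the work; the heavy lifting is done by minimality under $\wedge$ together with $M$ and $I$.
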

\begin{proof}
We begin by establishing some notation. Let
\[
  \m{B}
  = \bigcap_{i\in I} \mu_i \Big( \prod_{j\in J_i} \RR_{ij} \Big)
  \leq \AAM^M.
\]
Without loss of generality assume that $P = [m]$ is the set of coordinates
that $\pi$ projections onto and let $Q = [M]\setminus P = \{m+1,\dots,M\}$.
Define $K_{ij}\subseteq [M]$ to be the coordinates of $\RR_{ij}$ in the
permuted product $\mu_i \big( \prod_{j\in J_i} \RR_{ij} \big)$. Let
$K_{ij}^P = K_{ij}\cap P$ and $K_{ij}^Q = K_{ij}\cap Q$. Observe that
\begin{itemize}
  \item $[M] = P\sqcup Q$ (the disjoint union),
  \item $|K_{ij}| \leq n < m$ and $K_{ij} = K_{ij}^P \sqcup K_{ij}^Q$ for
    all $i\in I$ and $j\in J_i$, and
  \item for every $i\in I$ and $j\in J_i$ we have $\ds{P = \bigsqcup_{i\in
        J_i} K_{ij}^P}$ and $\ds{Q = \bigsqcup_{i\in J_i} K_{ij}^Q}$.
\end{itemize}
We have $\sigma_1,\dots,\sigma_m\in S$, and since $B(P) = S$ there must be
elements $\tau_1,\dots,\tau_m\in B$ such that $\tau_\ell(P) = \sigma_\ell$
for all $\ell$. Take each $\tau_\ell$ to be minimal (under the semilattice
order) with this property and such that $\State(\tau_\ell) = 1$ (such
$\tau_\ell$ exist --- just use operation $I$).
\begin{claim} \label{claim:entailed_halt__tau_either_or}
Let $q\in Q$. Either 
\begin{enumerate}
  \item $\tau_\ell(q) = \tau_k(q)\in \big\{ \vect{1,0}, \vect{1,\Er}
    \big\}$ for all $\ell,k\in [m]$ or
  \item there is a unique $\ell\in [m]$ such that $\tau_\ell(q) =
    \vect{1,\Dot}$ and for all $k\in [m]\setminus \{\ell\}$,  $\tau_k(q) =
    \vect{1,0}$.
\end{enumerate}
\end{claim}
\begin{claimproof}
Observe that if $\Content(\tau_\ell(q)) \in \{ A, B \}$ then the element
\[
  \tau_\ell \wedge I(\tau_\ell, \tau_\ell)
\]
will be properly less than $\tau_\ell$ while still having projection on $P$
to $\sigma_\ell$, contradicting the minimality of $\tau_\ell$. Hence it must
be that $\Content(\tau_\ell(q))\in \{\Dot, 0, \Er\}$.

For all distinct $k,\ell\in [m]$ define
\[
  t'_{k\ell}
  = M(\tau_k, \tau_\ell),
  \qquad
  t_{k\ell}
  = I(t'_{k\ell}, t'_{k\ell}),
  \qquad\text{and}\qquad
  t_\ell
  = \bigwedge_{k\in [m]\setminus \{\ell\}} t_{k\ell}.
\]
The machine $\M$ begins with an instruction of the form $(1,R,s)$. From the
definition of $M$ and $I$ it therefore follows that $t_{k\ell}(P) =
\sigma_\ell$ for all $k\in [m]$. Thus $t_\ell(P) = \sigma_\ell$ and hence
(by minimality) $t_\ell(q) = \tau_\ell(q)$. Let us fix an $\ell\in [m]$ to
consider. By the observation at the end of the previous paragraph
$\Content(t_\ell(q))\in \{\Dot, 0, \Er\}$, giving us three cases to examine.

\Case{$t_\ell(q) = \vect{1,\Dot}$}
Note that $\tau_\ell(q) = t_\ell(q)$. Suppose towards a contradiction that
$\tau_k(q)\neq \vect{1,0}$ for some $k\in[m]\setminus \{\ell\}$. It follows that
$\Content(\tau_k(q))\in \{\Dot, \Er\}$, so we have
\[
  t'_{k\ell}(q) 
  = M(\tau_k, \tau_\ell)(q)
  = \left\{ \begin{aligned}
    & M\big( \vect{1,\Dot}, \vect{1,\Dot} \big) & \text{if } \Content(\tau_k(q)) = \Dot, \\
    & M\big( \vect{1,\Er}, \vect{1,\Dot} \big)  & \text{if } \Content(\tau_k(q)) = \Er
  \end{aligned} \right\}
  = \vect{s, \Er}
\]
for some state $s$. This yields $t_\ell(q) = \vect{1,\Er}$, which is a
contradiction since $t_\ell(q) = \vect{1,\Dot}$. Hence $\tau_k(q) = \vect{1,0}$
for all $k\in [m]\setminus \{\ell\}$, which is item (2) from the claim.

\Case{$t_\ell(q) = \vect{1,0}$}
This implies that $t_{k\ell}(q) = \vect{1,0}$ for all $k$, and so
$\Content(\tau_k(q))\in \{0,\Dot\}$ for all $k$ (from the definition of
$M$). If there were two distinct $k_1, k_2\in [m]$ such that $\tau_{k_1}(q)
= \tau_{k_2}(q) = \vect{1,\Dot}$, then we would have (as in the previous case
above) that $t_{k_2k_1}(q) = \vect{1,\Er}$ and hence $t_{k_1}(q) =
\vect{1,\Er}$. This contradicts the observation from the start of this case that
$\Content(\tau_k(q))\in \{0,\Dot\}$ for all $k$ since $\tau_{k_1}(q) =
t_{k_1}(q)$. It follows that there is at most one $k$ such that $\tau_k(q) =
\vect{1,\Dot}$ and that for all other $k'\neq k$ we have $\tau_{k'}(q) =
\vect{1,0}$. This is either item (1) or (2) of the claim.

\Case{$t_\ell(q) = \vect{1,\Er}$}
In this case, by the minimality of $\tau_\ell$ we have $\tau_\ell(q) =
\vect{1,\Er}$. It follows from the definition of $M$ that for all $k \neq
\ell$ we have $t_{\ell k}(q) = \vect{1,\Er}$ (note the order of subscripts),
and so $t_k(q) = \vect{1,\Er}$ for all $k$. Using minimality again yields
$\tau_k(q) = \vect{1,\Er}$ for all $k$, giving us item (1) of the claim.
\end{claimproof}

From Claim~\ref{claim:entailed_halt__tau_either_or} above, we can partition
$Q$ into two pieces,
\begin{align*}
  Q_= 
  &= \Big\{ q\in Q \mid 
      \text{Claim~\ref{claim:entailed_halt__tau_either_or} item (1) holds} 
    \Big\}
  & \text{and} \\
  Q_{\neq}
  &= \Big\{ q\in Q \mid 
      \text{Claim~\ref{claim:entailed_halt__tau_either_or} item (2) holds}
    \Big\}.
\end{align*}
Let $K_{ij}^{Q_{\neq}} = K_{ij}\cap Q_{\neq}$. Fix an $i\in I$ and for each
$j\in J_i$ choose an $\ell_j\in [m]\setminus K_{ij}^P$ (such $\ell_j$ exist
for all $j$ since $|K_{ij}| < m$). Let $L = \big\{ \ell_j \mid j\in J_i
\big\}$ be the set of these choices.

Observe that for any sets $Z_1, Z_2$ and any $a,b\in Z_1\times Z_2$ there is
an element $c\in Z_1\times Z_2$ with $c(1) = a(1)$ and $c(2) = b(2)$.
Applying this observation to the elements $\{ \tau_{\ell_j} \mid \ell_j\in L
\} \subseteq \mu_i\big( \prod_{j\in J_i} \RR_{ij} \big)$ yields an element
$\alpha_i^L\in \mu_i\big( \prod_{j\in J_i} \RR_{ij} \big)$ such that
$\alpha_i^L(K_{ij}) = \tau_{\ell_j}(K_{ij})$ for all $j\in J_i$. Expanding
upon this, we have
\begin{itemize}
  \item $\alpha_i^L(K_{ij}^Q) = \tau_{\ell_j}(K_{ij}^Q)$ for all $j \in J_i$ and

  \item for all $p \in P$ and for the unique $j\in J_i$ such that $p\in
    K_{ij}$,
    \[
     \alpha_i^L(p)
     = \tau_{\ell_j}(p)
     = \sigma_{\ell_j}(p)
     = \vect{1,0}
    \]
    (this follows from $\ell_j\in [m]\setminus K_{ij}^P$).
\end{itemize}
It follows from this that $\alpha_i^L(P)\in C^m$, so to prove the lemma it
suffices to show that there is some system of choices $(L_i)_{i\in I}$ such
that for all $i,i' \in I$ we have $\alpha_i^{L_i} = \alpha_{i'}^{L_{i'}}$.
That is, the element $\alpha_i^{L_i}$ does not depend on $i$ and thus lies
in the intersection $\bigcap_{i\in I} \mu_i\big( \prod_{j\in J_i} \RR_{ij}
\big)$.

\begin{claim} \label{claim:entailed_halt__Qeq}
Fix an $i\in I$. For all $q\in Q_=$, all choices of $L$ as above, and all
$\ell\in [m]$, we have $\alpha_i^L(q) = \tau_\ell(q)\in \big\{ \vect{1,0},
\vect{1,\Er} \big\}$.
\end{claim}
\begin{claimproof}
For each $q\in Q_=$ there is a unique $j$ such that $q\in K_{ij}^Q$, so by
the construction of $\alpha_i^L$ we have $\alpha_i^L(q) = \tau_{\ell_j}(q)$
for some unique $\ell_j\in L$. Since $q\in Q_=$, by
Claim~\ref{claim:entailed_halt__tau_either_or} the conclusion follows.
\end{claimproof}

By Claim~\ref{claim:entailed_halt__tau_either_or}, for each $q\in Q_{\neq}$
there is a unique $k_q\in [m]$ such that $\tau_{k_q}(q) = \vect{1,\Dot}$. It
follows that $Q_{\neq}$ can be partitioned,
\[
  Q_{\neq}
    = \bigsqcup_{k\in [m]} Q_{\neq}^k
  \qquad\text{where}\qquad
  Q_{\neq}^k
    = \big\{ q\in Q_{\neq} \mid \tau_k(q) = \vect{1,\Dot} \big\}.
\]

\begin{claim} \label{claim:entailed_halt__Qneq}
Fix an $i\in I$. For all $k\in [m]$ and all $j\in J_i$ such that $Q_{\neq}^k\cap
K_{ij} \neq \emptyset$ there exists some $k\in [m]\setminus K_{ij}^P$ such that
\[
  \Content\big( \tau_k(K_{ij}^P \cup K_{ij}^{Q_{\neq}}) \big)
  = \{0\}.
\]
\end{claim}
\begin{claimproof}
Let $K = [m]\setminus K_{ij}^P$ and observe that for every $k\in K$ we have
$\Content\big( \tau_k(K_{ij}^P) \big) = \{0\}$ and $\Content\big(
\tau_k(K_{ij}^{Q_{\neq}}) \big) \subseteq \{ \Dot, 0\}$ by
Claims~\ref{claim:entailed_halt__tau_either_or} and
\ref{claim:entailed_halt__Qeq}. As mentioned before the statement of the
claim, for each $q\in Q_{\neq}$ there is a unique $k_q\in [m]$ such that
$\tau_{k_q}(q) = \vect{1,\Dot}$ and $\tau_k(q) = \vect{1,0}$ for all $k\neq
k_q$. Towards a contradiction, let us assume that for all $k\in K$ we have
$\Dot\in \Content\big( \tau_k(K_{ij}^{Q_{\neq}}) \big)$. It follows that
\[
  K
  = \Big\{ k_q \mid q\in K_{ij}^{Q_{\neq}} \Big\}.
\]
We therefore have $|K| = m - |K_{ij}^P|$ (from the start of the proof of the
claim) and
$|K|\leq |K_{ij}^{Q_{\neq}}|$. Hence $m - |K_{ij}^P| \leq
|K_{ij}^{Q_{\neq}}|$, so
\[
  m
  \leq \big| K_{ij}^P \big| + \big| K_{ij}^{Q_{\neq}} \big|
  \leq \big| K_{ij}^P \big| + \big| K_{ij}^Q \big|
  = \big| K_{ij} \big|
  \leq n
  < m,
\]
a contradiction.
\end{claimproof}

Consider $\alpha_i^L$ for some fixed $i\in I$ and fixed $L$. Suppose that
for some $\ell_h\in L$ we have $\Dot\in \Content\big(
\alpha_i^L(K_{ih}^{Q_{\neq}}) \big)$. The set $Q_{\neq}^{\ell_h}$ has a
covering
\[
  Q_{\neq}^{\ell_h}
  \subseteq \bigsqcup_{j\in J_i^h} K_{ij}^{Q_{\neq}}
  \qquad\text{where}\qquad 
  J_i^h 
    = \big\{ j\in J_i \mid Q_{\neq}^{\ell_h}\cap K_{ij}^{Q_{\neq}} \neq\emptyset \big\}.
\]
For each $K_{ij}^{Q_{\neq}}$ in this covering, replace $\ell_j$ in $L$ with
some $k_j$ satisfying the conclusion of the
Claim~\ref{claim:entailed_halt__Qneq}. $\ell_h$ will be replaced in this
process, along with possibly others. After this replacement, the number of
$\ell_k\in L$ such that $\Dot\in \Content\big( \alpha_i^L(K_{ik}^{Q_{\neq}})
\big)$ will have decreased by Claim~\ref{claim:entailed_halt__tau_either_or}
and the construction of $Q_{\neq}$ and $L$. 

Repeat the above procedure on the newly obtained $\alpha_i^L$ until
$\Dot\not\in \Content\big( \alpha_i^L(Q_{\neq}) \big)$ and call the final
result $\alpha_i$. For a fixed $i$, we thus have constructed an element
$\alpha_i$ such that
\begin{itemize}
  \item $\alpha_i(p) = \vect{1,0}$ for all $p\in P$,
  \item $\alpha_i(Q_=) = \tau_1(Q_=) = \cdots = \tau_m(Q_=)$ (by
    Claim~\ref{claim:entailed_halt__Qeq}), and
  \item $\alpha_i(q) = \vect{1,0}$ for all $q\in Q_{\neq}$ (by
    Claim~\ref{claim:entailed_halt__Qneq} and construction).
\end{itemize}
The description of $\alpha_i$ above does not depend on $i$, so $\alpha_i$ is
a common element in the intersection $\bigcap_{i\in I} \mu_i\big(
\prod_{j\in J_i} \RR_{ij} \big)$. It follows that $\alpha_i(P)\in S\cap
C^m$.
\end{proof}  

\begin{thm} \label{thm:M_not_halt_high_degree} 
The following hold for any Minsky machine $\M$.
\begin{enumerate}
  \item If $\M$ does not halt with capacity $m$ then $m < \deg(\AAM)$.
  \item If $\M$ does not halt then $\AAM$ is not finitely related.
\end{enumerate}
\end{thm}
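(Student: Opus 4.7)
The plan is to derive part (1) by contradiction using Lemma~\ref{lem:entailed_must_halt}, and then obtain part (2) as an immediate consequence. The key observation is that if $\deg(\AAM)\leq m$, then every relation of $\AAM$ is entailed by some set $\cl{R}$ of relations of arity at most $m$; applied to $\SS_{m+1}$, which contains the generators $\sigma_1,\dots,\sigma_{m+1}$, this will let us invoke Lemma~\ref{lem:entailed_must_halt} to put a vector in $\SS_{m+1}\cap C^{m+1}$, and then Lemma~\ref{lem:Sm}(4) and Corollary~\ref{cor:Sn_halting} will force $\M$ to halt with capacity $m$.

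Concretely, for (1) suppose toward a contradiction that $\deg(\AAM)\leq m$; we may assume $m\geq 2$ (for small $m$ the bound is trivial once one checks $\AAM$ is not determined by unary relations). By definition of degree there exists $\cl{R}\subseteq \Rel_{\leq m}(\AAM)$ with $\Pol(\cl{R})=\Clo(\AAM)$, so in particular $\cl{R}\Entails \SS_{m+1}$. Apply Theorem~\ref{thm:entailment_form} to write
\[
  \SS_{m+1}
  = \pi\Bigg( \bigcap_{i\in I} \mu_i\Big( \prod_{j\in J_i} \RR_{ij} \Big) \Bigg)
\]
with each $\RR_{ij}\in \cl{R}\cup\{=\}$ of arity at most $m$. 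Since $\sigma_1,\dots,\sigma_{m+1}\in \SS_{m+1}$ and $n=m<m+1$, Lemma~\ref{lem:entailed_must_halt} applies and yields $\SS_{m+1}\cap C^{m+1}\neq\emptyset$. By Lemma~\ref{lem:Sm}(4), $\SS_{m+1}$ is then halting, and by Corollary~\ref{cor:Sn_halting} this means $\M$ halts with capacity $m$, contradicting the hypothesis of (1). Hence $m<\deg(\AAM)$, as claimed.

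Part (2) is then immediate: if $\M$ does not halt at all, then $\M$ fails to halt with capacity $m$ for every $m\in\NN$, so by (1) we get $m<\deg(\AAM)$ for all $m$, and hence $\deg(\AAM)=\infty$. There is no serious obstacle remaining at this point, since all the combinatorial content has been packaged into Lemma~\ref{lem:entailed_must_halt}; the only minor care is to ensure that the equality relation (which has arity $2$) is permitted among the $\RR_{ij}$, which is automatic for $m\geq 2$, together with a trivial verification that $\deg(\AAM)\geq 2$ to handle the small cases uniformly.
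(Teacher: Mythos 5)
Your proof is correct and follows essentially the same route as the paper: contradict $\deg(\AAM)\leq m$ by applying Theorem~\ref{thm:entailment_form} to $\SS_{m+1}$, invoke Lemma~\ref{lem:entailed_must_halt} to produce an element of $\SS_{m+1}\cap C^{m+1}$, then use Lemma~\ref{lem:Sm}(4) and Corollary~\ref{cor:Sn_halting} to conclude $\M$ halts with capacity $m$. The extra remarks about $m\geq 2$ and the arity of equality are harmless but unnecessary, since in the regime of interest $m$ is always large and the lemma already accommodates the equality relation.
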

\begin{proof}
For item (1), suppose that $\deg(\AAM) \leq m$. This implies in particular
that
\[
  \Rel_{\leq m}(\AAM) \Entails \SS_{m+1}.
\]
By Theorem~\ref{thm:entailment_form} there is some projection $\pi$,
permutations $\mu_i$, and a finite collection of relations $\RR_{ij}\in
\Rel_{\leq m}(\AAM)$ such that
\[
  \SS_{m+1}
  = \pi \Bigg( \bigcap_{i\in I} \mu_i \Big( \prod_{j\in J_i} \RR_{ij} \Big) \Bigg).
\]
By Lemma~\ref{lem:entailed_must_halt}, this implies that $S_{m+1}\cap
C^{m+1}\neq \emptyset$, and by Lemma~\ref{lem:Sm} and
Corollary~\ref{cor:Sn_halting}, this implies that $\M$ halts with capacity
$m$, a contradiction. Item (2) follows from item (1).
\end{proof}  

\section{If \texorpdfstring{$\M$}{M} halts --- tools}\label{sec:M_halts_tools} 
The argument showing that $\AAM$ is finitely related when $\M$ halts is
quite long and intricate. This section develops the necessary machinery.
Throughout this section and the next (Section~\ref{sec:M_halts_entailment}),
we assume that $\M$ halts with capacity $\kappa$. We begin by highlighting
some important relations of $\AAM$.

The strategy for the main proof is to show that for some suitably chosen
$k$, we have $\Rel_{\leq k}(\AAM) \Entails \Rel_{\leq n}(\AAM)$ for all $n$.
We therefore consider an arbitrary $m$-ary operation $f$ which preserves
$\Rel_{\leq k}(\AAM)$, arbitrary $\RR\leq \AAM^n$, and arbitrary
$r_1,\dots,r_m\in R$ and endeavor to show that $f(r_1,\dots, r_m)\in R$. The
relations which we define below will play an important role in analyzing the
behavior of $f$ on $R$, and following each definition we attempt to give the
reader some intuition for how they can be used.

\begin{defn} \label{defn:mu_chi}  
Let
\begin{align*}
  &\mu = \left\{
      \pmat{ a \\ a }, \pmat{ a \\ \X(a) } \mid a\in E
    \right\}
    \subseteq \AM^2, \\[1em]
  &\chi = \left\{ 
      \pmat{ a_1 \\ a_2 \\ a_2 }, \pmat{ \X(a_1) \\ a_2 \\ \X(a_2) } 
      \mid (a_1,a_2)\in E^2 \text{ synchronized}
    \right\}
    \subseteq \AM^3.
\end{align*}
Operations which preserve $\mu$ are monotone on $E$ (see
Lemma~\ref{lem:f_E_monotone}). The property that $\chi$ describes is more
subtle. Let $f$ be an operation and consider an evaluation of the form
\[
  f\pmat{ a_1, & \cdots & \X(a_k), & \cdots & a_m \\
          b_1, & \cdots & b_k,     & \cdots & b_m }
    = \pmat{ \alpha_1 \\
             \alpha_2 }
\]
where each $(a_i,b_i)\in E^2$ is synchronized and $\alpha_1\not\in X$. If
$f$ preserves $\mu$ and $\chi$ then we can conclude that replacing $b_k$
with $\X(b_k)$ in the second line of input does not change the output of
$f$:
\[
  f\pmat{ a_1, & \cdots & \X(a_k), & \cdots & a_m \\
          b_1, & \cdots & b_k,     & \cdots & b_m \\
          b_1, & \cdots & \X(b_k), & \cdots & b_m }
    = \pmat{ \alpha_1 \\
             \alpha_2 \\
             \alpha_2 }
\]
(the input vectors are elements of $\chi$, so the output is in $\chi$ as
well). The details of this are contained in
Lemma~\ref{lem:not_chi_entailed_helper}.
\end{defn}   
\begin{defn} \label{defn:deltas}  
Define three subsets of $\AM^3$,
\begin{align*}
  &\Delta_{\forall}
    = \left\{
      \pmat{ z \\ z \\ z },
      \pmat{ z \\ a \\ z },
      \pmat{ z \\ b \\ z },
      \pmat{ a \\ z \\ z },
      \pmat{ a \\ a \\ a },
      \pmat{ a \\ b \\ z },
      \pmat{ b \\ z \\ z },
      \pmat{ b \\ a \\ z },
      \pmat{ b \\ b \\ b }, 
    \right. \\
    &\left. \qquad \quad\;
      \pmat{ \X(z) \\ c_1 \\ c_2 },
      \pmat{ c_1 \\ \X(z) \\ c_2 }
      \mid \begin{gathered}
          (a,b,z,c_1,c_2)\in E^5 \text{ synchronized}, \\
          \Content(a) = A,\ \Content(b) = B,\ \Content(z) = 0
        \end{gathered}
    \right\},
\end{align*}
\begin{align*}
  &\Delta_{\exists A}
    = \left\{
      \pmat{ z \\ z \\ z },
      \pmat{ z \\ a \\ a },
      \pmat{ z \\ b \\ z },
      \pmat{ a \\ z \\ a },
      \pmat{ a \\ a \\ a },
      \pmat{ a \\ b \\ a },
      \pmat{ b \\ z \\ z },
      \pmat{ b \\ a \\ a },
      \pmat{ b \\ b \\ b }, 
    \right. \\
    &\left. \qquad \quad\ \ \,
      \pmat{ \X(z) \\ c_1 \\ c_2 },
      \pmat{ c_1 \\ \X(z) \\ c_2 }
      \mid \begin{gathered}
          (a,b,z,c_1,c_2)\in E^5 \text{ synchronized}, \\
          \Content(a) = A,\ \Content(b) = B,\ \Content(z) = 0
        \end{gathered}
    \right\},
\end{align*}
\begin{align*}
  &\Delta_{\exists B}
    = \left\{
      \pmat{ z \\ z \\ z },
      \pmat{ z \\ a \\ z },
      \pmat{ z \\ b \\ b },
      \pmat{ a \\ z \\ z },
      \pmat{ a \\ a \\ a },
      \pmat{ a \\ b \\ b },
      \pmat{ b \\ z \\ b },
      \pmat{ b \\ a \\ b },
      \pmat{ b \\ b \\ b }, 
    \right. \\
    &\left. \qquad \quad\ \ \,
      \pmat{ \X(z) \\ c_1 \\ c_2 },
      \pmat{ c_1 \\ \X(z) \\ c_2 }
      \mid \begin{gathered}
          (a,b,z,c_1,c_2)\in E^5 \text{ synchronized}, \\
          \Content(a) = A,\ \Content(b) = B,\ \Content(z) = 0
        \end{gathered}
    \right\}.
\end{align*}
As an example of how $\Delta_{\exists A}$ can be used, consider an
evaluation of an operation $f$,
\[
  f\pmat{ \vect{i, A}, & \vect{i, B}, & \vect{i, 0}, & \vect{i, B}, & \vect{i, A} \\
          \vect{i, 0}, & \vect{i, B}, & \vect{i, B}, & \vect{i, A}, & \vect{i, A} }
  = \pmat{ \vect{j, A} \\
           \vect{j, A} }.
\]
We can add a row to this evaluation in such a way that the input vectors are
in $\Delta_{\exists A}$, and if $f$ preserves $\Delta_{\exists A}$ then the
output will be in $\Delta_{\exists A}$ and therefore equal to $\vect{j,A}$:
\[
  f\pmat{ \vect{i, A}, & \vect{i, B}, & \vect{i, 0}, & \vect{i, B}, & \vect{i, A} \\
          \vect{i, 0}, & \vect{i, B}, & \vect{i, B}, & \vect{i, A}, & \vect{i, A} \\
          \vect{i, A}, & \vect{i, B}, & \vect{i, 0}, & \vect{i, A}, & \vect{i, A} }
  = \pmat{ \vect{j, A} \\
           \vect{j, A} \\
           \vect{j, A} }.
\]
Let us call this new third row the ``added row for $\Delta_{\exists A}$''.
Similar manipulations can be performed using $\Delta_{\forall}$ and
$\Delta_{\exists B}$. Doing this for the $2$-line evaluation at the start
and writing just the ``added'' rows, we obtain
\[
  f\pmat{ \vect{i, 0}, & \vect{i, B}, & \vect{i, 0}, & \vect{i, 0}, & \vect{i, A} \\
          \vect{i, A}, & \vect{i, B}, & \vect{i, 0}, & \vect{i, A}, & \vect{i, A} \\
          \vect{i, 0}, & \vect{i, B}, & \vect{i, B}, & \vect{i, B}, & \vect{i, A} }
  = \pmat{ \vect{j, A} \\
           \vect{j, A} \\
           \vect{j, A} }.
\]
The first row is the added row for $\Delta_{\forall}$, the second for
$\Delta_{\exists A}$, and the third for $\Delta_{\exists B}$. The subpower
$\Gamma$ defined next can be used to further manipulate the input. This
technique is discussed in detail in the proof of
Theorem~\ref{thm:few_D_entailment}.
\end{defn}   
\begin{defn}  \label{defn:gamma}  
Define a subset of $\AM^4$,
\begin{align*}
  &\Gamma
    = \left\{
      \pmat{ z \\ z \\ z \\ z },
      \pmat{ a \\ a \\ a \\ a },
      \pmat{ b \\ b \\ b \\ b },
      \pmat{ z \\ a \\ z \\ \alpha },
      \pmat{ z \\ a \\ b \\ \gamma },
      \pmat{ z \\ z \\ b \\ \beta },
      \pmat{ c_1 \\ c_2 \\ c_3 \\ c_1\wedge c_2\wedge c_3 },
    \right. \\
    &\left. \qquad\ \,
      \pmat{ \X(z) \\ c_1 \\ c_2 \\ c_3 },
      \pmat{ c_1 \\ \X(z) \\ c_2 \\ c_3 },
      \pmat{ c_1 \\ c_2 \\ \X(z) \\ c_3 }
      \mid \begin{gathered}
          (a,b,z,c_1,c_2,c_3)\in E^6 \text{ synchronized}, \\
          \Content(a) = A,\ \Content(b) = B,\ \Content(z) = 0, \\
          \alpha\in \{z,a\},\ \beta\in \{z,b\},\ \gamma\in \{z,a,b\}
        \end{gathered}
    \right\}.
\end{align*}
As an example of how $\Gamma$ can be used, consider the ``added row''
evaluation that we ended the discussion of the $\Delta_{\forall}$,
$\Delta_{\exists A}$, $\Delta_{\exists B}$ relations with:
\[
  f\pmat{ \vect{i, 0}, & \vect{i, B}, & \vect{i, 0}, & \vect{i, 0}, & \vect{i, A} \\
          \vect{i, A}, & \vect{i, B}, & \vect{i, 0}, & \vect{i, A}, & \vect{i, A} \\
          \vect{i, 0}, & \vect{i, B}, & \vect{i, B}, & \vect{i, B}, & \vect{i, A} }
  = \pmat{ \vect{j, A} \\
           \vect{j, A} \\
           \vect{j, A} }.
\]
If $f$ preserves $\Gamma$ then a row can be added to this evaluation so
that the input vectors will be in $\Gamma$ and the output will remain
unchanged:
\[
  f\pmat{ \vect{i, 0},      & \vect{i, B}, & \vect{i, 0},     & \vect{i, 0},      & \vect{i, A} \\
          \vect{i, A},      & \vect{i, B}, & \vect{i, 0},     & \vect{i, A},      & \vect{i, A} \\
          \vect{i, 0},      & \vect{i, B}, & \vect{i, B},     & \vect{i, B},      & \vect{i, A} \\
          \vect{i, \alpha}, & \vect{i, B}, & \vect{i, \beta}, & \vect{i, \gamma}, & \vect{i, A} }
  = \pmat{ \vect{j, A} \\
           \vect{j, A} \\
           \vect{j, A} \\
           \vect{j, A} }
\]
where $\alpha\in \{0,A\}$, $\beta\in \{0,B\}$, and $\gamma\in \{0,A,B\}$.
Note that different choices of $\alpha, \beta, \gamma$ result in the first
three rows of the original evaluation of $f$ in
Definition~\ref{defn:deltas}. As a result, if $f$ preserves $\Gamma$ then
the behavior of $f$ on the three rows above determines the behavior of $f$
on many other rows. This technique is discussed in detail in the proof of
Theorem~\ref{thm:few_D_entailment}.
\end{defn}   
\begin{lem} \label{lem:relns} 
The subpowers $\mu$, $\chi$, $\Delta_{\forall}$, $\Delta_{\exists A}$,
$\Delta_{\exists B}$ of Definitions~\ref{defn:mu_chi} and~\ref{defn:deltas}
are relations of $\AAM$.
\end{lem}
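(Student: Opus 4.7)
The plan is to verify, for each of the five relations and each fundamental operation $F$ of $\AAM$, that $F$ applied coordinate-wise to tuples in the relation yields a tuple in the relation. The argument rests on three general facts, all of which follow by inspection of the operation definitions. First, $\State$ is a homomorphism of $\AAM$ by Lemma~\ref{lem:basic_facts_AM}(1), so the output's state is determined by and synchronized with the input states. Second, $\wedge, M, M', H, S$ are $X$-absorbing and $I(a,x)$ is $X$-absorbing in $x$ when $a\in E$, by Lemma~\ref{lem:basic_facts_AM}(2)--(3); hence any input column containing an $X$-entry produces an output lying in $X$, and by state-preservation this output equals $\X$ applied to the corresponding ``all-non-$X$'' output. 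Third, $E$ is closed under every fundamental operation, since a $\Dot$-content output can appear only when some input has $\Dot$-content.

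For $\mu$: tuples have the form $(a, a)$ or $(a, \X(a))$ with $a\in E$. Applying $F$ coordinate-wise to tuples $(a_i, b_i)$ with $b_i \in \{a_i, \X(a_i)\}$ gives an output whose first coordinate lies in $E$ by closure. If every $b_i = a_i$, the output is diagonal; otherwise some $b_i$ is in $X$, and the three principles above force the second-coordinate output to equal $\X$ of the first-coordinate output. The non-$X$-absorbing operations $N_0, N_\Dot, P$ are compatible with $\mu$ because their clauses either return one of their arguments (automatically in $\mu$) or return an $\X(\cdot)$-value whose state matches $F(\bar a)$ by the state homomorphism; note that $N_0$'s clause $x = \vect{0,\Dot}$ is never triggered on $E$-inputs. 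The verification of $\chi$ is analogous: the invariant that coordinates $2$ and $3$ agree on non-$X$ content while coordinate $3$'s $X$-ness is driven by coordinate $1$'s $X$-ness is maintained column-by-column, with $X$-absorption linking coordinates $1$ and $3$ simultaneously.

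For $\Delta_{\forall}, \Delta_{\exists A}, \Delta_{\exists B}$: each relation consists of nine ``content table'' rows with contents in $\{0, A, B\}$ on a common state, plus two ``$\X(z)$'' rows. The nine-row tables encode, respectively, universal, $A$-existential, and $B$-existential combination rules for content. The verification splits into two cases. If every output coordinate has non-$X$ content, then the output comes from the operation-cases that copy or meet inputs ($\wedge$, $M$-case~(5), $M'$-case~(1), and the argument-returning clauses of $P, N_\Dot, N_0$), and the tables are constructed precisely so that valid combinations of input rows yield a valid output row. If some output coordinate lies in $X$, then by state-preservation it equals $\vect{i,\Er}$ for the common state $i$, matching one of the two $\X(z)$ rows. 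The main obstacle is the bookkeeping in this second case: one must track, column by column, which operation clause fires and which rows have $X$ entries. Since the relations and operations are finite and fully explicit, the author's computational-verification framework can be invoked to discharge the remaining case analysis.
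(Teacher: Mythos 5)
Your proposal follows the same strategy as the paper: verify preservation by each fundamental operation, using the state homomorphism, $X$-absorption, and closure of $E$ as the main tools to cut down the case analysis. The arguments for $\mu$ and $\chi$ are sound and are in line with the paper's approach.

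There is, however, a gap in the $\Delta_{\forall}$, $\Delta_{\exists A}$, $\Delta_{\exists B}$ case. You split on ``all output coordinates in $Y$'' versus ``some output coordinate in $X$,'' and claim that in the latter case state-preservation forces the output to match one of the two $\X(z)$ rows. That is only true when the $X$-entry is in position $1$ or $2$: the $\X(z)$ rows pin down only those two positions, while position $3$ (the $c_2$ slot) ranges freely over $E$. An output $\alpha$ with $\alpha(3)\in X$ but $\alpha(1),\alpha(2)\in Y$ matches neither an $\X(z)$ row nor a content-table row (the latter require $\Content(\alpha(3))\in\{0,A,B\}$), so such an $\alpha$ lies outside $\Delta$. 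The actual burden --- which the paper isolates in its observations ``if $d\in\Delta_{\exists A}$ has $d(1,2)\in Y^2$ then $d\in Y^3$'' and ``we may assume $\alpha(1,2)\in Y^2$'' --- is to show that this bad configuration never arises: reduce to the subcase $\alpha(1,2)\in Y^2$ (the complementary subcase is automatic since $c_1$ and $c_2$ are unconstrained) and then verify for each operation that $\alpha(3)$ lands in $Y$ with content determined from $\alpha(1,2)$ by the table. Your invocation of ``bookkeeping'' and the computational framework does not repair this, because the invariant you describe checking (``matches an $\X(z)$ row'') is not the one that fails; a verifier following your outline would never look at the problematic $\alpha(3)\in X$, $\alpha(1,2)\in Y^2$ case.
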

\begin{proof}
It is a straightforward (though tedious) procedure to verify that these are
all relations. We will sketch the proof for $\Delta_{\exists A}$ and leave
the others to the reader.

It suffices to show that if $F$ is an $\ell$-ary fundamental operation and
$g_1,\dots,g_\ell\in \Delta_{\exists A}$ then 
\[
  \alpha 
  = F(g_1,\dots,g_\ell)
  \in \Delta_{\exists A}.
\]
There are a few observations that we can make.
\begin{itemize}
  \item $\Delta_{\exists A} \subseteq E^3$ (i.e.\ $\Delta_{\exists A}$ has
    no elements with content $\Dot$). This simplifies the definitions of
    many of the operations of $\AAM$.

  \item If $d\in \Delta_{\exists A}$ has $d(1,2)\in Y^2$ then $d\in Y^3$.

  \item If $\Er\in \{\Content(\alpha(1)), \Content(\alpha(2))\}$ then
    $\alpha\in \Delta_{\exists A}$ since the elements $c_1$ and $c_2$ are
    unconstrained. Hence, we may assume that $\alpha(1,2)\in Y^2$.

  \item If $d\in \Delta_{\exists A}\cap Y^3$ then $d(1,2)$ uniquely
    determines $d(3)$.
\end{itemize}
The proof can be done by cases depending on which operation $F$ is, and all
of the cases are straightforward using the observations above.
\end{proof} 
\begin{lem} \label{lem:relns_reduct}  
The subpower $\Gamma$ of Definition~\ref{defn:gamma} is closed under all
operations of $\AAM$ except for $I$.
\end{lem}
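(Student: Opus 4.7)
The plan is to mirror the sketch proof of Lemma~\ref{lem:relns} and carry out a case analysis over each fundamental operation $F$ of $\AAM$ other than $I$, verifying that $F$ applied coordinate-wise to tuples in $\Gamma$ yields another tuple in $\Gamma$. Two structural observations drive the analysis. First, $\Gamma \subseteq E^4$: no coordinate of any element of $\Gamma$ has content $\Dot$. This annihilates every branch of the piecewise operation definitions that demands an input in $D$, and it is precisely what distinguishes $I$ from the remaining operations, since $I$ can produce content $\Dot$ and therefore can push a tuple out of $\Gamma$. Second, the state map is a homomorphism (Lemma~\ref{lem:basic_facts_AM}), so each output tuple is automatically synchronized at a single state.

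The operation $P$ is almost immediate: because every element of $\Gamma$ is synchronized, the equality $\State(u(i)) = \State(v(i))$ holds uniformly across all $i$ or fails uniformly, so $P(u,v,x,y)$ returns the whole tuple $x$ or the whole tuple $y$, both already in $\Gamma$. For the $X$-absorbing operations $\wedge, M, M', H, S$ (cf.\ Lemma~\ref{lem:basic_facts_AM}), any input tuple with an $X$ entry in one of the first three coordinates forces the output to have an $X$ entry in that coordinate, placing the result in one of the three $\X$-forms of $\Gamma$; the permissiveness of those forms (the $c_i$ parameters range over arbitrary synchronized elements of $E$, which includes $X$) absorbs any additional $X$ entries created in other coordinates. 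When no input has an $X$ in coordinates $1, 2, 3$, a direct computation shows that $\wedge$ either returns one of its inputs or produces $X$ entries in the disagreeing coordinates, and $M, M'$ on matching non-$X$ inputs merely shift the state, preserving the content pattern and keeping us in one of the constant forms $(z,z,z,z), (a,a,a,a), (b,b,b,b)$, a mixed form $(z,a,z,\alpha), (z,a,b,\gamma), (z,z,b,\beta)$, or the meet form $(c_1,c_2,c_3, c_1 \wedge c_2 \wedge c_3)$.

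The more delicate operations are $N_0$ and $N_\Dot$, which I expect to be the main obstacle. For $N_\Dot(u,x,y,z)$, the hypothesis $u \in E^4$ immediately kills the three middle branches of the definition; only the first (which returns $x$ when $x = y \notin X$ and the states of $x, y, z$ match) and the default $\X(x \wedge y \wedge z)$ survive, and one checks coordinate-by-coordinate that the resulting mix of non-$X$ and $X$ outputs matches a pattern of $\Gamma$. For $N_0(x,y,z)$, the condition $x(i) = \vect{0,\Dot}$ never fires, so $N_0$ outputs $z(i)$ at coordinates where $x(i) = \vect{0,0}$ and $\State(y) = \State(z)$, and $\X(y(i) \wedge z(i))$ elsewhere; the coordinates where $x(i) \neq \vect{0,0}$ are exactly the ``non-$z$'' coordinates of the pattern describing $x$, so a short enumeration over the pattern types $(z,z,z,z), (z,a,z,\alpha), (z,a,b,\gamma), (z,z,b,\beta), \ldots$ places the output either in a constant or mixed form (when the fourth coordinate lands genuinely on one of $z, a, b$) or in an $\X$-form (when leftover $X$ entries are produced). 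The main bookkeeping challenge is to track, in each of these mixed cases, that the fourth-coordinate entry is still one of the values permitted by the pattern on the first three coordinates, which is exactly where the generosity of the $c_i$ parameters in the meet and $\X$-forms is used.
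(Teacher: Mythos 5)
Your proposal follows essentially the same route as the paper's sketch: a case analysis over the fundamental operations, driven by the observation that $\Gamma \subseteq E^4$ (which kills every $D$-branch and is precisely what separates $I$ from the rest) together with the synchronization that the state homomorphism guarantees. The paper's sketch additionally isolates two structural facts about $\Gamma$ that make the bookkeeping you allude to go through cleanly, namely that a tuple in $\Gamma$ with $d(4)\in X$ and $d(1,2,3)\in Y^3$ must have at least two distinct contents among $d(1),d(2),d(3)$ (so the meet form $(c_1,c_2,c_3,c_1\wedge c_2\wedge c_3)$ catches it), and that a tuple in $\Gamma\cap Y^4$ satisfies $d(4)\in\{d(1),d(2),d(3)\}$. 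You gesture at these via ``the generosity of the $c_i$ parameters'' rather than stating them explicitly; stating them would tighten the argument, particularly for the $\wedge$ and $N_\Dot$ subcases where both inputs lie in $Y^4$ and the output disagrees only at coordinate 4, which is the one place where a naive reader might worry the output escapes $\Gamma$. Your handling of $N_0$ and $N_\Dot$ (the hypothesis $u\in E^4$ killing the three middle $N_\Dot$ branches and the first $N_0$ branch) matches the paper's intent exactly, and the rest is the same coordinate-wise case check. No gap, same method.
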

\begin{proof}
As in the previous lemma, the proof is straightforward after making a few
observations. We will therefore provide only a sketch of it. It is
enough to show that if $F$ is an $\ell$-ary fundamental operation and
$g_1,\dots,g_\ell\in \Gamma$ then
\[
  \alpha 
  = F(g_1,\dots,g_\ell)
  \in \Gamma.
\]
Observe the following.
\begin{itemize}
  \item $\Gamma \subseteq E^4$ (i.e.\ $\Gamma$ has no elements with content
    $\Dot$). This simplifies the definitions of many of the operations of
    $\AAM$.

  \item If $d\in \Gamma$ has $d(4)\in X$ and $d(1,2,3)\in Y^3$ then
    \[
      \big| \big\{ \Content(d(i)) \mid i\in \{1,2,3\} \big\} \big| \geq 2.
    \]
    In particular, if $d(1) = d(2) = d(3)\in Y$ then $d(4) = d(1)$.

  \item If $d\in \Gamma\cap Y^4$ then $d(4)\in \big\{ d(1), d(2), d(3) \big\}$.
\end{itemize}
The proof can be done by cases depending on which operation $F$ is. All of
these cases are straightforward using these observations.
\end{proof} 

\begin{lem} \label{lem:f_E_monotone} 
Assume that there is $\ell$ such that
\begin{itemize}
  \item $\Rel_{\leq 2}(\AAM) \Entails f$ and $f$ is $n$-ary,

  \item $G = \{g_1,\dots,g_n\} \subseteq E$ and $g_\ell\in C$,

  \item $f(g_1,\dots, g_\ell, \dots, g_n) = \alpha\in Y$, and

  \item $f(g_1,\dots, \X(g_\ell), \dots, g_n)\in Y$.
\end{itemize}
Then $f(g_1,\dots, \X(g_\ell), \dots, g_n) = \alpha$.
\end{lem}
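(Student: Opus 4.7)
The plan is to use the relation $\mu$ directly: since $\mu \subseteq \AM^2$ is a $2$-ary relation of $\AAM$ by Lemma~\ref{lem:relns}, the hypothesis $\Rel_{\leq 2}(\AAM) \Entails f$ implies that $f$ preserves $\mu$. The entire argument then amounts to reading off what this preservation gives.

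More concretely, I would build an $n$-column, $2$-row matrix whose columns all lie in $\mu$. For each $i \neq \ell$, take the column $(g_i, g_i)^T$; this is in $\mu$ because $g_i \in E$, which is exactly the condition in the definition of $\mu$. For the $\ell$-th column, take $(g_\ell, \X(g_\ell))^T$; this is in $\mu$ because $g_\ell \in C \subseteq E$. Applying $f$ row-wise, the first row evaluates to $\alpha$ and the second row evaluates to $\beta := f(g_1, \dots, \X(g_\ell), \dots, g_n)$. Because $f$ preserves $\mu$, the output column $(\alpha, \beta)^T$ must itself lie in $\mu$.

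Now examine the two possibilities allowed by the definition of $\mu$. Either $\beta = \alpha$ (in which case we are done), or $\beta = \X(\alpha)$. The latter is ruled out by the hypothesis $\beta \in Y$, because $\X$ always outputs an element of $X = \AM \setminus Y$ (indeed $\X(\vect{i,c}) = \vect{i,\Er}$ by definition). Hence $\beta = \alpha$, as required.

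There is no real obstacle here; the only things to verify are the membership of the chosen columns in $\mu$ (immediate from $G \subseteq E$) and that $\X$ maps into $X$ (immediate from the definition of $\X$). The lemma is essentially a packaging of the fact that preservation of $\mu$ forces $f$ to treat a coordinate and its $\X$-image identically whenever both evaluations produce something outside of $X$.
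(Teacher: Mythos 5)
Your proof is correct and follows essentially the same route as the paper's: both form the $2\times n$ input matrix with columns $(g_i,g_i)$ for $i\neq\ell$ and $(g_\ell,\X(g_\ell))$ in the $\ell$-th spot, observe these columns lie in $\mu$, and use preservation of $\mu$ plus the hypothesis $\beta\in Y$ to rule out $\beta = \X(\alpha)$ and conclude $\beta=\alpha$. Citing Lemma~\ref{lem:relns} for $\mu\in\Rel(\AAM)$ is a harmless bit of extra explicitness that the paper leaves implicit.
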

\begin{proof}
The function $f$ respects binary relations, so in particular it respects
$\mu$ from Definition~\ref{defn:mu_chi}. Consider
\[
  f\pmat{ g_1, & \cdots & g_\ell,     & \cdots & g_n \\
          g_1, & \cdots & \X(g_\ell), & \cdots & g_n }
  = \pmat{\alpha \\ \beta}.
\]
The hypotheses on $G$ mean that all the argument vectors are in $\mu$, so
the output must be as well. By hypothesis $\beta\not\in X$, so the only
possibility for $(\alpha,\beta)\in \mu$ is if $\beta = \alpha$, as claimed.
\end{proof}  

We next analyze some metrics which can be defined on relations. A major
component of the argument in Section~\ref{sec:M_halts_entailment} is proving
that entailment by lower arity relations is guaranteed when these metrics
are small or large enough.

\begin{defn} \label{defn:D_H} 
Let $\RR\leq \AAM^m$ be computational and define
\begin{align*}
  \D(\RR)
    &= \big\{ i\in [m] \mid R(i)\cap D\neq\emptyset \big\}, \\
  \H(\RR)
    &= \big\{ i\in [m] \mid \RR(\neq i) \text{ is halting} \big\}.
\end{align*}
We call $\D(\RR)$ the \emph{dot} part of $\RR$ and $\H(\RR)$ the
\emph{approximately halting} part of $\RR$. When the relation is clear, we
will sometimes use $\D$ for $\D(\RR)$ and $\H$ for $\H(\RR)$.
\end{defn}   
\begin{lem} \label{lem:capacity}  
Let $\RR\leq \AAM^m$ be computational.
\begin{enumerate}
  \item Let $I = \H(\RR)\cap \D(\RR)$. There are vectors 
    $(\sigma_i)_{i\in I}$ in $R$ satisfying Equation~\eqref{eqn:sigma}:
    \[ \tag{\ref{eqn:sigma} redux}
      \sigma_i(j) = \begin{cases}
        \vect{1,\Dot} & \text{if } i = j, \\
        \vect{1,0} & \text{otherwise}.
      \end{cases}
    \]

  \item If $\D(\RR)\neq\emptyset$ then $\RR$ is halting if and only if
    $R\cap C^m\neq \emptyset$.

  \item $\RR$ has capacity $|\D(\RR)\cap \H(\RR)|-1$ and this is the largest
    capacity it has.

  \item If $\RR$ is non-halting then $|\D(\RR)\cap \H(\RR)|\leq \kappa$.

  \item If $R\cap C^m\neq\emptyset$ and $\RR$ has weak capacity $k$ then
    $\RR$ has capacity $k$.
\end{enumerate}
\end{lem}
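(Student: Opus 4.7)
The plan is to dispatch the five items in order, using the earlier ones as tools for the later, with the $I$-construction at the heart.

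For (1), fix $i\in\D\cap\H$. Choose $r_D\in R$ with $r_D(i)\in D$ (from $i\in\D$) and $r_H\in R$ whose projection $r_H(\neq i)$ is a halting vector of $\RR(\neq i)$ (from $i\in\H$). Apply $H$ coordinate-wise to $r_H$: since $r_H(j)\in\{\vect{0,0},\vect{0,\Dot}\}$ for $j\neq i$, we get $H(r_H)(j)=\vect{0,0}\in C$ for all $j\neq i$, while the $i$-th coordinate of $H(r_H)$ does not matter. Then $I(r_D,H(r_H))$ evaluates to $\vect{1,\Dot}$ at coordinate $i$ (since $r_D(i)\in D$) and to $\vect{1,0}$ at each $j\neq i$ (since $r_D(j)\notin D$ by computationality and $H(r_H)(j)=\vect{0,0}\in C$), so $I(r_D,H(r_H))=\sigma_i\in R$.

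For (2), forward: if $r\in R$ is a halting vector, then $H(r)=\vect{0,0}^m\in R\cap C^m$. For the reverse, pick $c\in R\cap C^m$ and some $r\in R$ with $r(i_0)\in D$ (which exists since $\D\neq\emptyset$). The $I$-construction of (1) gives $I(r,c)=\sigma_{i_0}\in R$, and Lemma~\ref{lem:basic_facts_AM}(4) applied to $c$ gives $\vect{s,0}^m=z_s(c)\in R$ for every state $s$. Starting from $\tau_1:=\sigma_{i_0}$, simulate the halting computation of $\M$ inside $R$ by composing $M$ and $M'$: an increment $(s,R,j)$ advances $\Dot$ as $\tau_j=M(\vect{s,0}^m,\tau_s)$ (case 3 of $M$) and in parallel creates a register-carrying vector $\nu_j=M(\tau_s,\vect{s,0}^m)$ (case 1 of $M$); a decrement $(s,R,k,j)$ is realized either as the zero-branch $\tau_k=M'(\tau_s)$ (case 1 of $M'$) or as a true decrement $\tau_j=M(\nu_s,\tau_s)$ (case 4 of $M$) when the current $\nu_s$ has matching content. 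Since $\M$ halts and every state has a directed path to $0$ in the state graph, some such schedule produces the halting vector $(\vect{0,\Dot}$ at $i_0$, $\vect{0,0}$ elsewhere$)$.

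For (3), (1) supplies $\sigma_i\in R\cap Y^m$ for each $i\in\D\cap\H$, giving $\D\cap\H\subseteq\{i\mid\exists r\in R\cap Y^m,\ r(i)\in D\}$ and so capacity at least $|\D\cap\H|-1$. For the matching upper bound, if $r\in R\cap Y^m$ has $r(i)\in D$ then computationality places the unique $\Dot$ of $r$ at $i$ and $r\in Y^m$ forbids $\Er$'s, so $r(\neq i)\in C^{m-1}$; applying (2) to $\RR(\neq i)$ forces it to be halting, hence $i\in\H$, and together with $i\in\D$ we obtain the reverse inclusion.

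For (4), suppose $\RR$ is non-halting. By (1) the elements $\{\sigma_i:i\in\D\cap\H\}$ lie in $R$, and the subalgebra they generate contains an embedded copy of $\SS_{|\D\cap\H|}$ in the sense of the (2)$\Leftrightarrow$(3) argument of Corollary~\ref{cor:Sn_halting}. If $|\D\cap\H|>\kappa$ then $\M$ halts with capacity $|\D\cap\H|-1\geq\kappa$, so Corollary~\ref{cor:Sn_halting} forces $\SS_{|\D\cap\H|}$ and hence $\RR$ to be halting, contradicting our assumption. For (5), given $c\in R\cap C^m$ and $i\in\D$ witnessed by some $r\in R$ with $r(i)\in D$, the $I$-construction of (1) yields $\sigma_i=I(r,c)\in R\cap Y^m$; hence $\D$ embeds into the capacity-witnessing index set, forcing weak capacity and capacity to agree.

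The principal difficulty is the reverse direction of (2), where one must orchestrate $M$, $M'$ and the $\nu_s$-elements to simulate $\M$'s halting run inside $R$ while preserving a single $\Dot$-track at coordinate $i_0$. The key observation is that the zero-branch $M'(\tau_s)$ is always available regardless of what content is carried, and that true decrements become available exactly when the $\nu_s$ currently carried has content matching the register to be decremented.
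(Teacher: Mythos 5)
Your proofs of items (1), (3), and (5) match the paper's approach essentially word for word (the $I$-construction in (1) and (5), the two inclusions and reduction to (2) in (3)). Item (4) is also correct in substance; the paper merely cites item (3) and Corollary~\ref{cor:Sn_halting}, while you unfold the argument through the $\sigma_i$ and the (2)$\Leftrightarrow$(3) step of that corollary, which buys nothing new but is harmless.

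The genuine problem is the reverse direction of item (2), where your approach diverges and has a gap. The paper's proof invokes the standing assumption (fixed at the end of Section~\ref{sec:encoding}) that $\m{T}=\Sg_{\AAM}\{\vect{1,\Dot},\vect{1,0}\}$ is halting. That assumption produces a \emph{binary} term $t$ in $\{M,M'\}$ with $t(\vect{1,\Dot},\vect{1,0})=\vect{0,\Dot}$; applying $t$ coordinatewise to $(\sigma,c)$ and using Lemma~\ref{lem:basic_facts_AM} item (6) forces every coordinate $j\neq i$ to $t(\vect{1,0},\vect{1,0})=\vect{0,0}$, yielding the halting vector in one step. You instead claim to ``simulate the halting computation of $\M$ inside $R$'' with $\tau$- and $\nu$-vectors carrying a single $\Dot$ or register symbol at coordinate $i_0$. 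This is conceptually off: $\M$'s actual halting run passes through configurations with arbitrarily large register values, which cannot be encoded by a single-coordinate track (that track encodes only configurations of the form $(s,0,0)$), so you are not simulating the run at all. The statement ``some such schedule produces the halting vector'' is unsubstantiated: for a decrement instruction whose non-zero branch must be traversed, your construction needs a $\nu_s$ carrying exactly the register $R$ at state $s$, and you give no argument that such a $\nu_s$ is ever producible. Worse, the justification you offer, that every state has a directed path to $0$ in the state graph, holds for \emph{every} normalized Minsky machine regardless of whether it halts, so it cannot by itself drive the conclusion. The missing ingredient is the $\m{T}$-halting assumption itself: it is what guarantees that $\vect{0,\Dot}$ is reachable from $\{\vect{1,\Dot},\vect{1,0}\}$ (and, after a short argument about which operations can produce elements of $D$ outside $X$, reachable by $M$ and $M'$ alone). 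Your schedule must be driven by the term witnessing that reachability, not by a blind walk in the state graph; and once you have that term, Lemma~\ref{lem:basic_facts_AM} item (6) replaces all the ad hoc bookkeeping on the remaining coordinates.
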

\begin{proof}
\Proofitem{(1)}
Let $\tau'_i\in R$ be such that $\tau'_i(\neq i)$ is a halting vector and
let $\sigma'_i\in R$ be such that $\sigma'_i(i) \in D$. Define $\sigma_i =
I(\sigma'_i,H(\tau'_i))$. It is easy to check that $\sigma_i$ satisfies
Equation~\eqref{eqn:sigma}.

\Proofitem{(2)}
If $\RR$ is halting then there is some vector $r\in R$ such that $r(i) =
\vect{0,\Dot}$ and $r(\neq i)\in \{ \vect{0,0} \}^{m-1}$. It follows that
$H(r) = (\vect{0,0}, \dots, \vect{0,0})\in C^m$. For the other direction, if
$c'\in R\cap C^m$ then let $c = I(c',c') = (\vect{1,0}, \dots, \vect{1,0})$.
Since $\D(\RR)\neq\emptyset$, $\RR$ has non-negative weak capacity (see
Definition~\ref{defn:capacity}). Let $\sigma'$ be a witness to $\RR$ having
weak capacity $0$, say $\sigma'(i) \in D$. Let $\sigma = I(\sigma',c)$ so
that $\sigma(i) = \vect{1,\Dot}$ and $\sigma(\neq i) = (\vect{1,0}, \dots,
\vect{1,0})$. We assumed at the end of Section~\ref{sec:encoding} that
$\m{T} = \Sg_{\AAM}\big( \big\{ \vect{1,0}, \vect{1,\Dot} \big\} \big)$ was
halting, so $\RR(i)$, containing this subalgebra, must halt. This means that
there is a term $t$ in the operations $\{ M,M' \}$ such that $t(\sigma,c)(i)
= \vect{0,\Dot}$. From the definitions of $\sigma$ and $c$ and by
Lemma~\ref{lem:basic_facts_AM} item (6), this implies
\[
  t(\sigma,c)(j)
  = t\big( \vect{1,0}, \vect{1,0} \big)
  = \vect{0,0}
\]
for all $j\neq i$. Hence $t(\sigma,c)(i) = \vect{0,\Dot}$ and
$t(\sigma,c)(\neq i)\in \{ \vect{0,0} \}^{m-1}$, so $t(\sigma,c)$ is a
halting vector and $\RR$ is therefore halting.

\Proofitem{(3)}
Item (1) implies that $\RR$ has capacity $|\D(\RR)\cap \H(\RR)| - 1$ (the
$\sigma_i$ are witnesses). Suppose now that we have a vector $r\in R\cap
Y^m$ such that $r(j)\in D$. It follows that $j\in \D(\RR)$ and that $r(\neq
j)\in C^{m-1}$. By item (2) we have that $\RR(\neq j)$ is halting and thus
$j\in \H(\RR)$. Therefore $j\in \D(\RR)\cap \H(\RR)$.

\Proofitem{(4)} This follows from item (3) (recall that $\M$ halts with
capacity $\kappa$).

\Proofitem{(5)}
Let $c\in R\cap C^m$ and let $\tau_i$ be a witness to $\RR$ having weak
capacity $0$, say $\tau_i(i) \in D$. Define $\sigma_i = I(\tau_i,c)$ and
observe that $\sigma_i\in Y^m$ satisfies equation~\eqref{eqn:sigma}. Doing
this for all $k$ witnesses of $\RR$'s weak capacity yields witnesses to
$\RR$ having capacity $k$.
\end{proof} 

The set $\Gamma$ from Definition~\ref{defn:gamma} will play an important
role in the argument for entailment. Since $\Gamma$ is closed under all
operations except for $I$ by Lemma~\ref{lem:relns_reduct}, it will be
necessary to understand a bit about how $I$ can interact with the other
operations. The next lemma and proposition are our first steps in this
direction.

\begin{defn} \label{defn:RI}  
Let $\RR\leq \AAM^m$. Define  $\ds{ \RR_I = \Sg_{\AAM^m}\big( I(R\cap Y^m,
R\cap Y^m) \big) }$.
\end{defn}   
\begin{lem} \label{lem:RI}  
Let $\RR\leq \AAM^m$ be computational.
\begin{enumerate}
  \item If $p$ is a permutation on $[m]$ which restricts to a permutation on 
    \[
      K 
      = \big\{ i \mid \exists r\in R\cap Y^m \text{ such that } r(i)\in D \big\}
    \]
    then $p(R_I) = R_I$.

  \item If $\D(\RR)\neq \emptyset$ then $\RR$ is halting if and only if
    $\RR_I$ is halting.

  \item If $|\D(\RR)|\geq 2$ then $\D(\RR_I) = \D(\RR)\cap \H(\RR)$.

  \item Let $\D_I = \D(\RR_I)$. Then $\RR_I(\D_I) = \SS_{|\D_I|}$. In
    particular there are elements $(\sigma_i)_{i\in \D_I}$ in $R_I$
    satisfying Equation~\eqref{eqn:sigma} and
    \[
      \RR_I
      = \Sg_{\AAM^m}\big\{ \sigma_i \mid i\in \D_I \big\}.
    \]
\end{enumerate}
\end{lem}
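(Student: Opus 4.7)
The plan is to exploit the rigid structure of $I$-images. Since $\RR$ is computational, each $r\in R\cap Y^m$ has $r^{-1}(D)$ either empty or a singleton $\{i\}\subseteq K:=\{j : \exists r'\in R\cap Y^m,\ r'(j)\in D\}$, and $I(r,s)$ is determined by the pair $(r^{-1}(D), s^{-1}(D))$. Hence every generator of $\RR_I$ takes one of at most four shapes in $M_1^m$: the constant $c=(\vect{1,0},\dots,\vect{1,0})$; the standard $\sigma_i$ with $\vect{1,\Dot}$ at coordinate $i$; the single-$\Er$ vector of shape $S(\sigma_j,\sigma_j,\sigma_j)$; or the $\Dot$/$\Er$ pair $I(\sigma_i,\sigma_j)$ with $i\neq j$. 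This classification drives all four parts. For (1), a permutation $p$ of $[m]$ restricting to $K$ sends each such generator to another of the same shape with its $\Dot$- and $\Er$-coordinates moved within $K$; by choosing witnesses at the new positions one realizes $p(I(r,s))=I(r',s')$ for some $r',s'\in R\cap Y^m$, so the set of generators is $p$-invariant and, since permutations commute with coordinate-wise operations, $p(\RR_I)=\RR_I$.

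For (2), the direction $\RR_I$ halting $\Rightarrow \RR$ halting is immediate from $R_I\subseteq R$. For the converse, take a halting $r\in R$ with $r(i_0)=\vect{0,\Dot}$; then $\sigma:=I(r,r)$ and $c:=I(H(r),H(r))$ both lie in $R_I$, with $\sigma$ having $\vect{1,\Dot}$ at $i_0$ and $\vect{1,0}$ elsewhere, and $c$ the constant $\vect{1,0}$. The standing assumption that $\m{T}$ is halting, together with Example~\ref{ex:encoding} and the proof of Theorem~\ref{thm:coding} (in which the halting configuration of $\SS_m$ is generated using only $M$ and $M'$ for $m\geq\kappa+1$), implies that $\vect{0,\Dot}$ can be produced in $\m{T}$ by a $\{M,M'\}$-term; projecting to a single coordinate yields a two-variable term $t\in\{M,M'\}$ with $t(\vect{1,0},\vect{1,\Dot})=\vect{0,\Dot}$, and by Lemma~\ref{lem:basic_facts_AM}(6) also $t(\vect{1,0},\vect{1,0})=\vect{0,0}$. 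Evaluating $t$ on the pair $(c,\sigma)$ gives $\vect{0,\Dot}$ at $i_0$ and $\vect{0,0}$ elsewhere, i.e.\ a halting vector in $\RR_I$.

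For (3), I first show $\D(\RR_I)=K$: the inclusion $\supseteq$ is witnessed by $\sigma_i:=I(r_i,r_i)\in R_I$ for $i\in K$, and the inclusion $\subseteq$ is a short case check that no fundamental operation of $\AAM$ creates $\Dot$-content at a coordinate where none was present among its inputs. The equality $K=\D(\RR)\cap\H(\RR)$ then uses $|\D(\RR)|\geq 2$. For $i\in K$, pick a witness $r\in R\cap Y^m$ with $r^{-1}(D)=\{i\}$; then $r(\neq i)\in C^{m-1}$ and $\D(\RR(\neq i))=\D(\RR)\setminus\{i\}\neq\emptyset$, so Lemma~\ref{lem:capacity}(2) forces $\RR(\neq i)$ halting and $i\in\H(\RR)$. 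Conversely, given $i\in \D(\RR)\cap\H(\RR)$ with witnesses $r',r''\in R$ (the former having $r'(i)\in D$, the latter having $r''(\neq i)$ a halting vector), the element $I(r',H(r''))$ lies in $R\cap Y^m$ with $D$-coordinate at $i$, establishing $i\in K$.

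For (4), $\D_I=K$ follows from the $\D$-argument in (3) (which does not need $|\D(\RR)|\geq 2$), and the $\sigma_i=I(r_i,r_i)$ are the desired elements of $R_I$. To show $\RR_I=\Sg_{\AAM^m}(\{\sigma_i:i\in\D_I\})$, the classification of $I$-image shapes from the first paragraph rewrites every generator of $\RR_I$ as $\sigma_i$, $I(\sigma_i,\sigma_j)$, $S(\sigma_j,\sigma_j,\sigma_j)$, or $c$; only the last is not \emph{a priori} in $\Sg(\{\sigma_i\})$, and by Lemma~\ref{lem:capacity}(2) it arises only when $\RR$ is halting. This is where I expect the main obstacle. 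To produce $c$ inside $\Sg(\{\sigma_i\})$ in the halting case, I appeal to Corollary~\ref{cor:Sn_halting} to obtain a $\{M,M'\}$-term whose evaluation on the generators of $\SS_{|\D_I|}$ is a halting vector; applied to the $\sigma_i\in R_I$, Lemma~\ref{lem:basic_facts_AM}(6) guarantees that the same term outputs $\vect{0,0}$ at coordinates outside $\D_I$, so the result is a global halting vector $h\in \Sg(\{\sigma_i\})$, and then $c=I(H(h),H(h))\in\Sg(\{\sigma_i\})$. Finally $\RR_I(\D_I)=\SS_{|\D_I|}$ is immediate: each $\sigma_i$ restricts to the $i$-th generator of $\SS_{|\D_I|}$, and subalgebra generation commutes with coordinate projection.
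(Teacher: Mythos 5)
Items (1)--(3) of your proof take essentially the paper's route: you identify the four possible shapes of $I$-generators and close each case with the appropriate witness or with Lemma~\ref{lem:capacity}, and your reformulation of (3) as $\D(\RR_I)=K$ (where $K$ is the $D$-coordinate set of $R\cap Y^m$) with $K=\D(\RR)\cap\H(\RR)$ as a consequence of $|\D(\RR)|\geq 2$ is a clean way to present what the paper does in two stages.

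The genuine gap is in your treatment of (4) in the halting case. You write that to produce the constant $c=(\vect{1,0},\dots,\vect{1,0})$ in $\Sg\{\sigma_i : i\in\D_I\}$ you ``appeal to Corollary~\ref{cor:Sn_halting} to obtain a $\{M,M'\}$-term whose evaluation on the generators of $\SS_{|\D_I|}$ is a halting vector.'' But Corollary~\ref{cor:Sn_halting} gives that $\SS_{|\D_I|}$ is halting only when $\M$ halts with capacity $|\D_I|-1$, equivalently $|\D_I|\geq\kappa+1$, and nothing you have said forces $|\D_I|$ to be that large. Indeed, $\RR$ being halting does not bound $|\D(\RR)\cap\H(\RR)|$ from below: for example, with $\kappa\geq 2$ take $\RR=\Sg_{\AAM^3}\{\sigma_1,\sigma_2,c\}$, which is computational and halting (the halting vector arises from a $\{M,M'\}$-term applied to $\sigma_1$ and $c$, using that $\m{T}$ is halting), yet has $\D_I=\{1,2\}$, so $\SS_{|\D_I|}=\SS_2$ is non-halting by Corollary~\ref{cor:Sn_halting}, and $\Sg\{\sigma_1,\sigma_2\}(\{1,2\})=\SS_2$ therefore misses $(\vect{1,0},\vect{1,0})$; so $c\notin\Sg\{\sigma_1,\sigma_2\}$ while $c\in R_I$. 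Your Corollary~\ref{cor:Sn_halting} step fails here, and in fact no argument can close this case --- the conclusion $\RR_I=\Sg\{\sigma_i\}$ is simply false for such $\RR$. So the obstacle you correctly flagged is not a missing lemma but a missing hypothesis: item (4) (like Lemma~\ref{lem:RI_off_dot}, which quotes it) is only used in the paper for non-halting $\RR$, and for non-halting $\RR$ the generator $c$ never appears (by Lemma~\ref{lem:capacity}(2)), so your classification closes the argument immediately without the Corollary~\ref{cor:Sn_halting} detour. You should restrict (4) to the non-halting case and drop the last paragraph's attempt to manufacture $c$.
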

\begin{proof}
\Proofitem{(1)}
Let $a,b\in R\cap Y^m$. From the definition of $I$ and since $a,b\in Y^m$,
\[
  I(a,b)(j)
  = \begin{cases}
    \vect{1,\Dot} & \text{if } a(j)\in D, \\
    \vect{1,\Er}  & \text{if } a(j)\not\in D \text{ and } b(j)\in D, \\
    \vect{1,0}    & \text{otherwise}.
  \end{cases}
\]
This is a typical element of $I(R\cap Y^m, R\cap Y^m)$. Observe that the
first two cases in the equation imply $j\in K$. For all pairs
$i,j\in K$, choose $r_i,r_j\in R\cap Y^m$ such that $r_i(i),r_j(j)\in D$ and
define elements $s_{ij} = I(r_i,r_j)$. From the description of elements of
$I(R\cap Y^m, R\cap Y^m)$, we have
\[
  \big\{ s_{ij} \mid i,j\in K \big\}
  = I(R\cap Y^m, R\cap Y^m).
\]
The set on the left is closed under the permutation $p$, so $I(R\cap Y^m,
R\cap Y^m)$ must be as well. These are the generators of $\RR_I$, so the
conclusion follows.

\Proofitem{(2)}
Since $\RR_I\leq \RR$ and $\D(\RR_I)\subseteq \D(\RR)$, if $\RR_I$ is
halting then so is $\RR$. Conversely, if $\RR$ is halting then $R\cap
C^m\neq\emptyset$ by Lemma~\ref{lem:capacity} item (5). Pick any $r\in R\cap
C^m$. It follows that $I(r,r)\in C$ and
\[
  I(r,r)
  \in I(R\cap Y^m, R\cap Y^m),
\]
so $R_I\cap C^m\neq\emptyset$. Therefore $\RR_I$ is halting.

\Proofitem{(3)} 
Suppose that $i\in \D(\RR_I)$. Since $\RR_I\leq \AAM^m$, it follows that
there is a generator $g = I(a,b)$, $a,b\in R\cap Y^m$, with $g(i)\in D$.
This implies that $a(i)\in D$, so $i\in \D(\RR)$ and $a(\neq i)\in C^{m-1}$.
Since $|\D(\RR)|\geq 2$, we have $\D(\RR(\neq i))\neq \emptyset$ and
$\RR(\neq i)\cap C^{m-1}\neq\emptyset$. By Lemma~\ref{lem:capacity} item (2)
$\RR(\neq i)$ must halt, so $i\in \H(\RR)$. For the reverse inclusion,
suppose that $i\in \D(\RR)$ and $i\in \H(\RR)$. By Lemma~\ref{lem:capacity}
item (1), we have that there is an element $\sigma_i\in R$ such that
$\sigma_i(i) = \vect{1,\Dot}$ and $\sigma_i(\neq i)\in \{ \vect{1,0}
\}^{m-1}$. Hence $\sigma_i\in R_I$, so $i\in \D(\RR_I)$.

\Proofitem{(4)}
This follows from items (2) and (3) above, Lemma~\ref{lem:capacity} item
(1), and the definition of $\SS_k$.
\end{proof}  
\begin{prop} \label{prop:I_avoidance} 
Let $\RR\leq \AAM^m$ be computational non-halting. If $t$ is a $k$-ary term
operation and $\overline{r}\in R^k$ is such that $t(\overline{r}) \in Y^m$
then
\begin{enumerate}
  \item $t(\overline{r})\in R_I$ or

  \item there is a term operation $s$ without operation $I$ in its term tree
    such that $s(\overline{r}) = t(\overline{r})$.
\end{enumerate}
\end{prop}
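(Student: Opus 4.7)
The plan is to induct on the complexity of the term $t$. The base case is immediate: if $t = x_i$ is a projection, then $s := t$ is $I$-free and $s(\overline{r}) = r_i = t(\overline{r})$, giving option (2).

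For the inductive step, write $t = F(t_1,\dots,t_\ell)$ for some fundamental operation $F$. The case $F = I$ is the direct route to option (1). Setting $a := t_1(\overline{r})$ and $b := t_2(\overline{r})$ (both in $R$), the hypothesis $I(a,b)\in Y^m$ forces, at each coordinate $i$, either $a(i)\in D$ or $b(i)\in C$. Computationality of $\RR$ gives $|\D(a)|\leq 1$, and non-halting together with $\D(\RR)\neq\emptyset$ gives $R\cap C^m = \emptyset$ via Lemma~\ref{lem:capacity}(2), ruling out $b\in C^m$. Combining these, $a$ has a unique $\Dot$-coordinate $j$, we have $b(\neq j)\in C^{m-1}$, and a direct computation yields $I(a,b) = \sigma_j$. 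Since $\sigma_j\in R\cap Y^m$ and $I(\sigma_j,\sigma_j) = \sigma_j$, this $\sigma_j$ lies in $I(R\cap Y^m, R\cap Y^m)\subseteq R_I$, giving option (1). The degenerate case $\D(\RR) = \emptyset$ is handled by a similar direct check, yielding $I(a,b)\in R_I$ as well.

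When $F\neq I$, the first step is to argue $t_j(\overline{r})\in Y^m$ for every $j$, so that the inductive hypothesis applies. For the $X$-absorbing operations $F\in \{\wedge, M, M', H, S\}$ this is immediate from Lemma~\ref{lem:basic_facts_AM}(2). For $F\in \{N_0, N_\Dot, P\}$, a direct case analysis of the operation's defining clauses, using that $t(\overline{r})\in Y^m$ rules out the $\X(\cdot)$ default branches that would otherwise produce $X$-coordinates in the output, forces each relevant $t_j(\overline{r})\in Y^m$. The inductive hypothesis then gives, for each $j$, either $t_j(\overline{r})\in R_I$ or an $I$-free term $s_j$ with $s_j(\overline{r}) = t_j(\overline{r})$. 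If every $t_j$ admits an $I$-free equivalent, $s := F(s_1,\dots,s_\ell)$ is $I$-free and realizes option (2) for $t$.

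The main obstacle is the remaining subcase: some $t_j$ satisfies only option (1), meaning $t_j(\overline{r})\in R_I$ without any $I$-free equivalent. Here the goal is to establish option (1) for $t$ itself. The strategy leans on Lemma~\ref{lem:RI}(4), which identifies $R_I$ with $\Sg\{\sigma_i : i\in \D_I\}$, combined with the constraint $t(\overline{r})\in Y^m$ and the computationality of $\RR$ to force the other inputs $t_k(\overline{r})$ into a form compatible with the $\sigma_i$ generators; the output $F(t_1(\overline{r}),\dots,t_\ell(\overline{r}))$ can then be reassembled from the $\sigma_i$'s, placing $t(\overline{r})$ in $R_I$. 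Carrying out this alignment requires a case analysis over the remaining operations of $\AAM$, and this is where the specific design of those operations — in particular the carefully chosen clauses of $N_0$, $N_\Dot$, and $P$ — is used most heavily.
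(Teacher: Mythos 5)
Your overall scaffolding — induction on term complexity, case split on the outermost fundamental operation, a direct treatment of $F=I$, the observation that the $X$-absorbing operations force the subterms into $Y^m$ — matches the paper's argument, and your handling of the $F=I$ case (showing $I(a,b)=\sigma_j$ via computationality and $R\cap C^m=\emptyset$) is a slightly more elaborate but correct variant of the paper's simpler observation that $\alpha=I(\alpha,\alpha)$.

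However, your ``main obstacle'' paragraph has a genuine gap, and it is in fact \emph{the} gap. The genuinely hard case is $F=M$, and your proposed strategy — using Lemma~\ref{lem:RI}(4) to ``reassemble the output from the $\sigma_i$'s'' — does not supply an argument. The paper's argument there is quite specific: if $\alpha=M(f_1(\overline{r}),f_2(\overline{r}))\in Y^m$ with $\D(\RR)\neq\emptyset$, then $R\cap C^m=\emptyset$ (Lemma~\ref{lem:capacity}(2)) forces each $f_i(\overline{r})$ to carry exactly one $\Dot$-coordinate, and inspecting the clauses of $M$ shows $f_1(\overline{r})$ and $f_2(\overline{r})$ are equal up to transposing those two $\Dot$-coordinates $j,k$. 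Since $j,k$ lie in the set $K$ of Lemma~\ref{lem:RI}(1) and $R_I$ is closed under permutations of $K$, one of $f_1(\overline{r}),f_2(\overline{r})$ being in $R_I$ forces the other into $R_I$ as well, whence $\alpha\in R_I$. Your sketch never identifies this coordinate-transposition structure nor invokes the permutation-invariance of $R_I$, and I don't see how ``force the other inputs into a form compatible with the $\sigma_i$ generators'' produces that conclusion. Relatedly, your remark that the carefully designed clauses of $N_0$, $N_\Dot$, $P$ are ``used most heavily'' here is backwards: those are the \emph{easy} cases, handled entirely by the ``output $\leq$ some input'' property (Lemma~\ref{lem:basic_facts_relns}), which reduces them immediately to a single application of the inductive hypothesis because $\alpha\in Y^m$ forces equality. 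It is $M$, not those three, whose design does the heavy lifting, and your sketch does not close that case.
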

\begin{proof}
Let $t(\overline{r}) = \alpha$ and assume that (2) is not the case, so if we
have $s(\overline{r}) = \alpha$ then $s$ has $I$ in its term tree. We will
prove that $\alpha\in R_I$. The proof shall be by induction on the
complexity of $t$. If we have $\alpha = I(a, b)$ for some $a,b\in R\cap Y^m$
then $\alpha\in R_I$ by definition. This establishes the basis of the
induction. Assume now that $t$ is not a projection, so $t$ can be written as
\[
  t(\overline{x})
  = F\big( f_1(\overline{x}),\dots,f_\ell(\overline{x}) \big),
\]
where $F$ is an $\ell$-ary fundamental operation and the $f_i$ are other
$k$-ary term operations. We will proceed by cases depending on which
operation $F$ is.

\Case{$F\in \{\wedge, N_0, N_\Dot, P\}$}
Since $\RR$ is computational and non-halting, such $F$ have the property
that $F(\overline{a})\leq a_i$ for some $a_i$ amongst the $\overline{a}$, by
the various parts of Lemma~\ref{lem:basic_facts_relns}. Therefore, if
$\alpha = F(f_1(\overline{r}),\dots,f_n(\overline{r}))$ then $\alpha\leq
f_j(\overline{r})$ for some $j$. Since $\alpha\in Y^m$, this implies that
$f_j(\overline{r}) = \alpha$. As (2) does not hold, $f_j$ must have $I$ in
its term tree, so by the inductive hypothesis we have that
$\alpha = f_j(\overline{r})\in R_I$.

\Case{$F\in \{M', H\}$}
In this case, $F$ is $X$-absorbing and unary, by
Lemma~\ref{lem:basic_facts_AM} item (2). It follows that
$F(f_1(\overline{r})) = \alpha\in Y^m$ implies $f_1(\overline{r})\in Y^m$
and that $I$ is in the term tree of $f_1$. Therefore the inductive
hypothesis applies and $f_1(\overline{r})\in R_I$. Hence $\alpha\in R_I$.

\Case{$F = M$}
Since $\alpha\in Y^m$, by Lemma~\ref{lem:basic_facts_AM} item (2) we have
$f_1(\overline{r}), f_2(\overline{r})\in Y^m$. The term operation $t$ has
$I$ in its term tree, so one of the $f_i$ does as well. By the inductive
hypothesis, one of $f_i(\overline{r})$ is in $R_I$. If $\D(\RR) = \emptyset$
then $f_1(\overline{r}) = f_2(\overline{r})$, so both belong to $R_I$. If
$\D(\RR)\neq\emptyset$ then $R\cap C^m = \emptyset$ by
Lemma~\ref{lem:capacity} item (2). It follows from this and the definition
of $M$ that there are coordinates $j,k$ such that
\begin{itemize}
  \item $f_1(\overline{r})(k)$, $f_2(\overline{r})(j)$, $\alpha(j) \in D$,

  \item $f_1(\overline{r})(j) = f_2(\overline{r})(k)$, and

  \item $f_1(\overline{r})(\ell) = f_2(\overline{r})(\ell)$ for $\ell\not\in
    \{j,k\}$.
\end{itemize}
That is, $f_1(\overline{r})$ and $f_2(\overline{r})$ equal under the
coordinate transposition swapping $j$ and $k$. By Lemma~\ref{lem:RI} item
(1), one of them being in $R_I$ implies the other is in $R_I$ as well.
Therefore $\alpha\in R_I$.

\Case{$F\in \{I,S\}$}
From the definitions and Lemma~\ref{lem:basic_facts_relns} item (4), we have
that $\alpha = I(\alpha,\alpha)$ in this case. Thus $\alpha\in R_I$.

\smallskip

This completes the case analysis, the induction, and the proof.
\end{proof} 

The next proposition and subsequent definition establishes the biggest tool
we have for analyzing the halting status of a relation. It is absolutely
essential to the proofs in the next section.

\begin{prop} \label{prop:inherent_nonhalt}  
Suppose that the relation $\RR\leq \AAM^m$ is computational non-halting.
There exists $\N\subseteq [m]$ such that
\begin{enumerate}
  \item $\RR(\N)$ is non-halting,

  \item $|\N\cap \D(\RR)| \leq \kappa$,

  \item if $\D(\RR)\neq\emptyset$ then $\N\cap \D(\RR)\neq \emptyset$, and

  \item $\big( [m]\setminus \D(\RR) \big) \subseteq \N$.
\end{enumerate}
\end{prop}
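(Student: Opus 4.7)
The plan is a straightforward greedy coordinate-removal argument. Start with $\N = [m]$, which trivially satisfies (1), (3), (4), with (2) to be verified at termination. Now iterate: while there exists an index $i \in \N \cap \D(\RR)$ with $|\N \cap \D(\RR)| \geq 2$ and $\RR(\N \setminus \{i\})$ non-halting, replace $\N$ by $\N \setminus \{i\}$. This terminates in at most $|\D(\RR)|$ steps. Property (4) is preserved because only coordinates from $\D(\RR)$ are ever removed, property (1) by the guard on $\RR(\N\setminus\{i\})$, and property (3) by the guard $|\N \cap \D(\RR)| \geq 2$, which prevents $|\N \cap \D(\RR)|$ from dropping below $1$.

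To verify (2), let $K = \N \cap \D(\RR)$ at termination. If $|K| \leq 1$ then $|K| \leq \kappa$, using that $\kappa \geq 1$: the standing assumption that $\M$ begins with an increment instruction $(1, R, s)$ forces $\M^1(1,0,0)$ to have register sum $1$. Otherwise $|K| \geq 2$, and the greedy halted because for every $i \in K$ the projection $\RR(\N \setminus \{i\})$ is halting. Consider the projected relation $\RR(\N) \leq \AAM^{|\N|}$: it is computational (projection preserves synchronization and the single-$\Dot$-per-vector property) and non-halting (maintained throughout the loop). Its $\D$ and $\H$ sets satisfy $\D(\RR(\N)) = K$ directly from the definition, and $\H(\RR(\N)) \supseteq K$ because $(\RR(\N))(\neq i) = \RR(\N \setminus \{i\})$ is halting for every $i \in K$ by the termination condition. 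Thus $\D(\RR(\N)) \cap \H(\RR(\N)) = K$, so Lemma~\ref{lem:capacity}(4) applied to the computational non-halting $\RR(\N)$ yields $|K| \leq \kappa$.

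The crux is the last observation: the very stopping condition for the greedy---that every remaining dot coordinate is ``approximately halting'' in the projection---is precisely the hypothesis needed to invoke the capacity bound of Lemma~\ref{lem:capacity}(4) on $\RR(\N)$. Everything else is routine bookkeeping about how $\D$ and $\H$ interact with coordinate projection, which is immediate from their definitions.
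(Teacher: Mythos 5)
Your proof is correct and follows essentially the same strategy as the paper: shrink to a locally minimal $\N$ (one for which removing any dot coordinate forces the projection to halt), then apply Lemma~\ref{lem:capacity}(4) to the computational non-halting projection $\RR(\N)$. Your greedy never removes non-dot coordinates, so property (4) is maintained automatically; the paper instead takes a subset-minimal $\N'$ (which may drop non-dot coordinates) and then re-inserts $[m]\setminus\D(\RR)$ at the end, requiring one extra verification via Lemma~\ref{lem:capacity}(2). You also make explicit the fact $\kappa\geq 1$, which both proofs rely on and which the paper leaves implicit.
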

\begin{proof}
If $\D(\RR) = \emptyset$ then take $\N = [m]$. It is not hard to see that
$\N$ satisfies (1)--(4). Assume now that $\D(\RR)\neq\emptyset$ and let
$\N'$ be minimal such that $\N'\cap \D(\RR)\neq\emptyset$ and $\RR(\N')$ is
non-halting. Since $\RR$ is already non-halting, there is at least one such
$\N'$. We begin by proving that $|\N'\cap \D(\RR)| \leq \kappa$.

Suppose that we have distinct $i_1,\dots,i_{\kappa+1}\in \N'\cap \D(\RR)$.
By the minimality of $\N'$, we have that $\RR(\N'\setminus \{i_k\})$ is
halting for each $k$. Therefore 
\[
  i_1,\dots,i_{\kappa+1}
  \in \H\big( \RR(\N') \big) \cap \D\big( \RR(\N') \big),
\]
so $\RR(\N')$ is halting by Lemma~\ref{lem:capacity} item (4), a
contradiction. Hence $|\N'\cap \D(\RR)|\leq \kappa$. Let 
\[
  \N = \N' \cup \big( [m]\setminus \D(\RR) \big).
\]
It is easy to see that $|\N\cap \D(\RR)| = |\N' \cap \D(\RR)|\leq \kappa$.
Suppose towards a contradiction that $\RR(\N)$ is halting. By
Lemma~\ref{lem:capacity} item (2) we have $\RR(\N)\cap C^{\N}\neq\emptyset$.
It follows that $\RR(\N')\cap C^{\N'}\neq \emptyset$, and so $\RR(\N')$ is
halting, contradicting the choice of $\N'$. Therefore $\RR(\N)$ is
non-halting, and we are done.
\end{proof}  
\begin{defn} \label{defn:inherent_nonhalt}  
For each $\RR$ that is computational non-halting we fix a set of indices
$\N(\RR)$ satisfying the conclusion of
Proposition~\ref{prop:inherent_nonhalt}. We call $\N(\RR)$ the
\emph{inherently non-halting} part of $\RR$. As with $\D$ and $\H$, if the
relation is clear then we will sometimes use $\N$ instead of $\N(\RR)$.
\end{defn}   
\begin{lem} \label{lem:inherent_properties} 
Let $\RR\leq \AAM^m$ be computational non-halting and suppose that $r\in R$.
\begin{enumerate}
  \item If $\D(\RR)\neq\emptyset$ then there is $j\in \N(\RR)$ with $r(j)\in
    D\cup X$.

  \item If $i\not\in \N(\RR)$ and $r(i)\in D$ then there is $j\in \N(\RR)$
    with $r(j)\in X$.

  \item $\D(\RR_I) \subseteq \N(\RR) \cap \D(\RR)$.

  \item $\H(\RR)\subseteq \N(\RR)$.

  \item If $|\D(\RR)| \leq 1$ then $\N(\RR) = [m]$.
\end{enumerate}
\end{lem}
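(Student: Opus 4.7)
The plan is to prove the items in the order (5), (1), (2), (4), (3), so that the earlier parts feed the later ones. In every case the mechanism is the same: combine the defining properties of $\N(\RR)$ from Proposition~\ref{prop:inherent_nonhalt} with Lemma~\ref{lem:capacity}(2), which upgrades ``$\RR(\N)\cap C^{\N}\neq\emptyset$'' to ``$\RR(\N)$ halts'' whenever $\D(\RR(\N))\neq\emptyset$. I will repeatedly use the easy fact that projections of computational relations are computational, so that Lemma~\ref{lem:capacity}(2) applies to $\RR(\N)$ and $\RR(\neq j)$.

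Item (5) is essentially immediate from Proposition~\ref{prop:inherent_nonhalt} items (3) and (4): if $\D(\RR)=\emptyset$ then (4) already gives $\N=[m]$, while if $\D(\RR)$ is the singleton $\{i\}$ then (4) provides $[m]\setminus\{i\}\subseteq\N$ and (3) forces $i\in\N$ as well. For item (1), suppose toward a contradiction that $r(j)\notin D\cup X$ for every $j\in\N$, i.e.\ $r(\N)\in C^{\N}$; then $R(\N)\cap C^{\N}\neq\emptyset$ and, since $\D(\RR)\neq\emptyset$ implies $\N\cap\D(\RR)\neq\emptyset$ by Proposition~\ref{prop:inherent_nonhalt}(3), Lemma~\ref{lem:capacity}(2) declares $\RR(\N)$ halting, contradicting Proposition~\ref{prop:inherent_nonhalt}(1). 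Item (2) uses the same template: assuming $r(j)\notin X$ for all $j\in\N$ gives $r(\N)\in Y^{\N}$, but since $r$ is computational and its unique $D$-coordinate $i$ lies outside $\N$, in fact $r(\N)\in Y^{\N}\cap E^{\N}=C^{\N}$, and we are back in the setting of item (1).

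For item (4), take $j\in\H(\RR)$, so $\RR(\neq j)$ is halting. First observe $\D(\RR)\not\subseteq\{j\}$, for otherwise $\RR(\neq j)$ would contain no coordinate with an element of $D$ and could not be halting. Hence $\D(\RR(\neq j))\neq\emptyset$, and Lemma~\ref{lem:capacity}(2) produces some $r\in R$ with $r(\neq j)\in C^{m-1}$. Assume for contradiction $j\notin\N$; then $\N\subseteq[m]\setminus\{j\}$, so $r(\N)\in C^{\N}$, and by Proposition~\ref{prop:inherent_nonhalt}(3) we have $\D(\RR(\N))=\N\cap\D(\RR)\neq\emptyset$. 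Lemma~\ref{lem:capacity}(2) again forces $\RR(\N)$ to halt, contradicting Proposition~\ref{prop:inherent_nonhalt}(1). Finally, item (3) is a short synthesis. The containment $\D(\RR_I)\subseteq\D(\RR)$ is trivial from $\RR_I\leq\RR$. If $|\D(\RR)|\geq 2$, Lemma~\ref{lem:RI}(3) identifies $\D(\RR_I)=\D(\RR)\cap\H(\RR)$, which item (4) places inside $\D(\RR)\cap\N(\RR)$. The remaining cases $|\D(\RR)|\leq 1$ are absorbed by item (5), which puts $\N=[m]$.

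The main obstacle is item (4), because halting of $\RR(\neq j)$ is not automatically inherited by the further projection $\RR(\N)$; the bridge is Lemma~\ref{lem:capacity}(2), which is why I must separately certify $\D(\RR(\neq j))\neq\emptyset$ (to extract a vector in $C^{m-1}$) and $\D(\RR(\N))\neq\emptyset$ (to convert its projection into a halting witness for $\RR(\N)$). Every other item is a short corollary once this bridge is in place.
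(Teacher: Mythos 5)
Your proof is correct and well organized; every invocation of Proposition~\ref{prop:inherent_nonhalt}, Lemma~\ref{lem:capacity}~(2), and Lemma~\ref{lem:RI}~(3) is applied under the stated hypotheses, and the reordering (5), (1), (2), (4), (3) avoids any circularity. For items~(1), (2), and (5) your argument is essentially the contrapositive form of the paper's (the paper argues $\RR(\N)$ non-halting $\Rightarrow R(\N)\cap C^{\N}=\emptyset \Rightarrow r(\N)\notin C^{\N}$, whereas you assume $r(\N)\in C^{\N}$ and derive halting), so the substance is the same.

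Two places genuinely diverge from the paper. For item~(4) you spell out the intermediate step that the paper compresses into ``so $\RR(\N)$ is halting as well'': the projection of a halting vector of $\RR(\neq j)$ onto $\N$ need not itself be a halting vector, so one must pass through $R(\neq j)\cap C^{m-1}\neq\emptyset$ and then use Lemma~\ref{lem:capacity}~(2) together with $\D(\RR(\N))\neq\emptyset$; your write-up makes this bridge explicit, which is arguably a clarification the paper would benefit from. For item~(3) you take a different route altogether: rather than the paper's direct generator analysis --- a generator of $\RR_I$ has the form $I(a,b)$ with $a,b\in R\cap Y^m$, and item~(2) forces $a(j)\in X$ for some $j\in\N$ if the $D$-coordinate fell outside $\N$ --- you invoke Lemma~\ref{lem:RI}~(3) to identify $\D(\RR_I)=\D(\RR)\cap\H(\RR)$ when $|\D(\RR)|\geq 2$ and then apply item~(4), absorbing the $|\D(\RR)|\leq 1$ cases with item~(5). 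Your version is more modular and reuses the characterization lemma; the paper's is more elementary (it only needs item~(2), not item~(4) or Lemma~\ref{lem:RI}) but redoes a small piece of the generator analysis. Both are sound, and the choice is largely one of taste; yours makes the dependency graph among the lemmas a little denser, the paper's keeps item~(3) self-contained.

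One small stylistic nit: in your item~(2) you write ``since $r$ is computational'' --- it is $\RR$ that is computational; $r$ merely inherits the bound $|r^{-1}(D)|\leq 1$ from that.
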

\begin{proof}
\Proofitem{(1)}
We have that $\RR(\N(\RR))$ is non-halting. Since $\N(\RR)\cap
\D(\RR)\neq\emptyset$, this implies $\RR(\N(\RR))\cap C^{\N(\RR)} =
\emptyset$ by Lemma~\ref{lem:capacity} item (2). Therefore
$r(\N(\RR))\not\in C^{\N(\RR)}$. The conclusion follows.

\Proofitem{(2)} 
This follows from item (1). If we have $r(i)\in D$ and $r(j)\in D\cup X$ for
$j\neq i$ then $r(j)\in X$ since $\RR$ is computational.

\Proofitem{(3)} 
We already have $\D(\RR_I)\subseteq \D(\RR)$, so we only need to show
$\D(\RR_I)\subseteq \N(\RR)$. Let $i\in \D(\RR_I)$. The only way this is
possible is if there is a generator $g = I(a,b)$, $a,b\in R\cap Y^m$, with
$g(i), a(i)\in D$. If $i\not\in \N(\RR)$ then by item (2) above there is
$j\in \N(\RR)$ with $a(j)\in X$, contradicting $a\in Y^m$.

\Proofitem{(4)}
Let $i\in \H(\RR)$, so that $\RR(\neq i)$ is halting. If $i\not\in \N(\RR)$
then $\RR(\N) = \RR(\neq i)(\N)$, so we have that $\RR(\N)$ is halting as
well, contradicting Proposition~\ref{prop:inherent_nonhalt} item (1). Hence
$i\in \N(\RR)$.

\Proofitem{(5)}
If $\D(\RR) = \emptyset$ then $\N(\RR) = [m]$ from the proof of
Proposition~\ref{prop:inherent_nonhalt}. If $\D(\RR) = \{i\}$ then $i\in
\N(\RR)$ since $\D(\RR)\cap \N(\RR)\neq\emptyset$. Since we also have
$([m]\setminus \D(\RR))\subseteq \N(\RR)$, the conclusion follows.
\end{proof}  

We have now built enough tools to attack the main problem.

\section{If \texorpdfstring{$\M$}{M} halts --- entailment}\label{sec:M_halts_entailment}  
As with the previous section, we assume throughout that $\M$ halts with
capacity $\kappa$. The overall structure of the argument will be to consider
a relation $\RR\in \Rel_{\leq m}(\AAM)$, and proceed by cases. These cases
are laid out in the proof of the main entailment theorem, which we begin the
section with (after introducing some notation). The proof references the
theorems later in this section, but it is useful at the outset to see the
overall strategy.

\begin{defn} \label{defn:approx_element}  
Let $\RR\leq \AAM^m$ and $\alpha\in \AM^m$.
\begin{itemize}
  \item We say that $\Approx(\alpha,i)$ holds for $\RR$ if
    $\alpha(\neq i)\in R(\neq i)$.

  \item We say that $\ApproxI(\alpha,i)$ holds for $\RR$ if
    $\Approx(\alpha,i)$ holds for $\RR_I$.

  \item If $\RR$ has $\Approx(\alpha,i)$ then fix an element $\alpha_i\in
    R$ such that $\alpha_i(\neq i) = \alpha(\neq i)$, and likewise if
    $\ApproxI(\alpha,i)$ holds. If $\RR$ has both $\Approx(\alpha,i)$ and
    $\ApproxI(\alpha,i)$ then take $\alpha_i\in R_I\subseteq R$.
\end{itemize}
If the relation $\RR$ is clear, we will use $\Approx(\alpha,i)$ and
$\ApproxI(\alpha,i)$ without reference to the relation.
\end{defn}   
\begin{cor} \label{cor:finitely_related} 
If $\M$ halts then $\deg(\AAM) \leq \kappa+15$.
\end{cor}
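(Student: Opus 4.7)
The plan is to show $\Rel_{\leq \kappa+15}(\AAM) \Entails \RR$ for every $\RR \leq \AAM^n$; via the Galois connection this amounts to showing that every $m$-ary operation $f$ preserving all relations of arity at most $\kappa+15$ satisfies $f(r_1,\dots,r_m) \in R$ for every $\RR \leq \AAM^n$ and every $r_1,\dots,r_m \in R$. The cases $n \leq \kappa+15$ are immediate, so assume $n > \kappa+15$. I would organize the argument by the invariants developed in Section~\ref{sec:M_halts_tools}: first split on whether $\RR$ is computational and synchronized, then on whether it is halting, and finally in the non-halting case on the structure of $\N(\RR)$, $\D(\RR)$, $\H(\RR)$.

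If $\RR$ fails to be synchronized or computational, the operation $P$ (which forces state-synchronization) and $N_\Dot$ (which prevents two $\Dot$-coordinates from coexisting outside $X$) together with Lemma~\ref{lem:basic_facts_relns} should let us factor $f(r_1,\dots,r_m)$ through a projection onto an effective synchronized/computational part, reducing to the core cases. If $\RR$ is computational and halting, Corollary~\ref{cor:Sn_halting} identifies halting with the presence of a halting vector, and the term operations $z_i, w_i$ from Lemma~\ref{lem:basic_facts_AM} together with $H$, $I$, $N_0$ should let us reconstruct every element of $R$ from a small bounded-arity generating picture that $f$ already preserves. If $\RR$ is computational non-halting, Proposition~\ref{prop:inherent_nonhalt} gives us a set $\N(\RR)$ with $|\N(\RR) \cap \D(\RR)| \leq \kappa$; this is the sole place where the bound $\kappa$ enters.

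In the computational non-halting case the heart of the plan is to show that the value $f(r_1,\dots,r_m)$ on coordinates outside $\N(\RR)$ is forced by its values on a bounded subset of representatives from $\N(\RR)$ and from $[n]\setminus \N(\RR)$. The auxiliary relations of the section are designed precisely for this: $\mu$ yields monotonicity on $E$ (Lemma~\ref{lem:f_E_monotone}); $\chi$ permits coordinated application of $\X$; and $\Delta_\forall, \Delta_{\exists A}, \Delta_{\exists B}$ together with $\Gamma$ reduce the dependence of $f$ on content-types $\{0,A,B\}$ at non-inherently-non-halting coordinates to a fixed $O(1)$-sized template. Projecting onto $\N(\RR)$ together with a bounded witness set of representative coordinates yields a relation of arity at most $\kappa + 15$, which $f$ is assumed to preserve. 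The constant $15$ should arise as the sum of the arities of $\mu,\chi,\Delta_*,\Gamma$ (i.e.\ $2+3+3+3+3+4 = 18$, but trimmed via shared coordinates and overlap) plus a constant slack for the coordinate projections and permutations in Theorem~\ref{thm:entailment_form}; the exact accounting will be done in the concluding entailment theorem of the section.

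The main obstacle is the interaction between $I$-generation and the relations $\Gamma, \Delta_*$: by Lemma~\ref{lem:relns_reduct}, $\Gamma$ is closed under every fundamental operation except $I$, so a naive preservation argument breaks when $I$ appears in the generation of $f(r_1,\dots,r_m)$. The plan here is to use Proposition~\ref{prop:I_avoidance} to assume without loss of generality that $f(r_1,\dots,r_m)$ is either an element of $R_I$ (where Lemma~\ref{lem:RI} gives strong structural control, since $\RR_I$ is essentially $\SS_{|\D(\RR_I)|}$ up to embedding) or is produced by an $I$-free term, in which case $\Gamma$-preservation applies directly. A second delicate point will be bookkeeping the ``approximately halting'' predicates $\Approx$ and $\ApproxI$ from Definition~\ref{defn:approx_element} so that the representative set chosen for projection simultaneously witnesses all the constraints required across the sub-cases; this is where I expect the bulk of the technical work to lie.
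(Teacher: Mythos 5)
Your plan has the right high-level skeleton (split on computational vs.\ not, halting vs.\ non-halting, bring in $\mu,\chi,\Delta_*,\Gamma$, sidestep $I$ via Proposition~\ref{prop:I_avoidance}), but it is missing the single structural device that makes the paper's argument go: an induction on arity $m$ showing $\Rel_{\leq \kappa+15}(\AAM)\Entails \Rel_{\leq m}(\AAM)$, so that in the inductive step one may assume the stronger hypothesis $\Rel_{\leq m-1}(\AAM)\Entails f$. You phrase the goal as ``every $m$-ary operation $f$ preserving all relations of arity at most $\kappa+15$ satisfies $f(\overline{r})\in R$,'' and then propose projecting onto $\N(\RR)$ plus a bounded witness set of arity $\leq \kappa+15$. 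That does not supply the data the later theorems consume. In particular, the paper's entailment theorems (Theorems~\ref{thm:I_entailment}, \ref{thm:X_entailment}, \ref{thm:few_D_entailment}, \ref{thm:main_entailment}) all take as a hypothesis that $\alpha = f(\overline{r})$ satisfies $\Approx(\alpha,i)$ for \emph{every} $i\in[m]$ --- i.e.\ $\alpha(\neq i)\in R(\neq i)$ for each of the $m$ coordinate deletions. That follows precisely because $R(\neq i)$ has arity $m-1$ and the inductive hypothesis gives $\Rel_{\leq m-1}(\AAM)\Entails f$; it does \emph{not} follow from $f$ merely preserving $\Rel_{\leq\kappa+15}$, since $m-1$ can be far larger than $\kappa+15$. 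Without the induction your ``bounded witness set'' of arity $\kappa+15$ has no way to pin down $\alpha$ coordinate-by-coordinate.

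A secondary but nontrivial issue is your accounting of the constant $15$. It does not come from summing the arities $2+3+3+3+3+4$ of the auxiliary relations and trimming. In the paper, the case split is on whether $|[m]\setminus \D(\RR)|\geq 11$: the $11$ arises because Theorem~\ref{thm:few_D_entailment} needs, after deleting $\D$ and one $\ApproxI$-coordinate, at least $10$ surviving coordinates so that pigeonhole on the three contents $\{0,A,B\}$ yields four coordinates of $\alpha$ with a common content, which is what the $\Delta_*$ and $\Gamma$ manipulations require. In the complementary case $|[m]\setminus\D(\RR)|\leq 10$, combining with $|\N(\RR)\cap\D(\RR)|\leq\kappa$ gives $|\N(\RR)|\leq\kappa+10$, and then Theorem~\ref{thm:main_entailment} needs a couple of further slack coordinates, giving $m\geq\kappa+16$ and hence the bound $\kappa+15$. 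You should also make explicit the case split on $\Er\in\Content(\alpha)$ (handled by Theorem~\ref{thm:X_entailment}) and the case of at least two $\ApproxI$-coordinates outside $\N(\RR)$ (Theorem~\ref{thm:I_entailment}); these are the two assumptions that feed into Theorem~\ref{thm:main_entailment} and are not merely bookkeeping.
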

\begin{proof}
We will show that $\Rel_{\leq \kappa+15}(\AAM)\Entails \Rel_{\leq m}(\AAM)$
by induction on $m$. The base case of $m = \kappa+15$ is included in the
hypotheses. Suppose now that $m\geq \kappa+16$, $\RR\leq \AAM^m$,
$\Rel_{\leq m-1}(\AAM) \Entails f$, and $f(r_1,\dots,r_n) = \alpha$ for some
$r_1,\dots, r_n\in R$. We endeavor to prove $\alpha\in R$. Let $G =
\{r_1,\dots,r_n\}$. Without loss of generality we may assume that $\RR =
\Sg_{\AAM^m}(G)$.

If $\RR$ is not computational or is halting then
Theorem~\ref{thm:noncomp_halting_entailed} yields $\Rel_{\leq m-1}\Entails
\RR$, so $\alpha\in R$. Therefore we assume that
\begin{enumerate}[label=\textbf{(\arabic*)}, series=assumptions]
  \item $\RR$ is both computational and non-halting, so $|\N(\RR)\cap
    \D(\RR)|\leq \kappa$ by Proposition~\ref{prop:inherent_nonhalt}.
\end{enumerate}
If $\Er\in \Content(\alpha)$ then Theorem~\ref{thm:X_entailment} yields
$\alpha\in R$. Therefore we assume that 
\begin{enumerate}[label=\textbf{(\arabic*)}, resume=assumptions]
  \item $\alpha\in Y^m$.
\end{enumerate}
By the inductive hypothesis $\RR$ has $\Approx(\alpha, k)$ for all $k \in
[m]$. If there are distinct $i,j\not\in \N$ such that $\RR$ has
$\ApproxI(\alpha,i)$ and $\ApproxI(\alpha,j)$ then
Theorem~\ref{thm:I_entailment} yields $\alpha\in R$. Therefore we assume
that 
\begin{enumerate}[label=\textbf{(\arabic*)}, resume=assumptions]
  \item there is at most one $i\not\in \N(\RR)$ such that
    $\ApproxI(\alpha,i)$.
\end{enumerate}
If $|[m] \setminus \D(\RR)|\geq 11$ then Theorem~\ref{thm:few_D_entailment}
yields $\alpha\in R$. Therefore we assume that
\begin{enumerate}[label=\textbf{(\arabic*)}, resume=assumptions]
  \item $\big| [m] \setminus \D(\RR) \big|\leq 10$, so $|\N|\leq \kappa+10$
    by Proposition~\ref{prop:inherent_nonhalt}.
\end{enumerate}
Finally, our list of assumptions agrees with the hypotheses of
Theorem~\ref{thm:main_entailment}, so $\alpha\in R$.
\end{proof}  

Having established the overall strategy we will be pursuing, we prove our
first entailment theorem --- entailment for non-computational or halting
relations.

\begin{thm} \label{thm:noncomp_halting_entailed} 
If $m\geq 3$ and $\RR\leq \AAM^m$ fails to be computational or is halting
then $\Rel_{\leq m-1}(\AAM) \Entails \RR$.
\end{thm}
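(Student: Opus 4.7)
The plan is to establish the natural subdirect identity
\[
\RR \;=\; \bigcap_{i=1}^{m} \bigl\{\alpha \in \AM^m : \alpha(\neq i) \in R(\neq i)\bigr\}.
\]
Each intersectand is obtained from the $(m-1)$-ary relation $\RR(\neq i)$ by forming its cartesian product with $\AAM$ and permuting coordinates, so the intersection is entailed by $\{\RR(\neq i) : i \in [m]\} \subseteq \Rel_{\leq m-1}(\AAM)$, which is precisely the desired conclusion. The inclusion $\RR \subseteq \bigcap$ is immediate, so the content lies in the reverse direction.

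For that direction, fix $\alpha$ in the intersection and, for each $i \in [m]$, choose $\alpha_i \in R$ with $\alpha_i(\neq i) = \alpha(\neq i)$; let $\mathcal{D} = \{i : \alpha_i(i) \neq \alpha(i)\}$. The semilattice meet $\gamma := \alpha_1 \wedge \cdots \wedge \alpha_m \in R$ satisfies $\gamma(i) = \alpha(i)$ for $i \notin \mathcal{D}$ and $\gamma(i) \in X$ for $i \in \mathcal{D}$, since away from $\mathcal{D}$ every $\alpha_j(i)$ equals $\alpha(i)$, while at a defective position the meet of $\alpha(i)$ with the disagreeing $\alpha_i(i)$ lands in $X$. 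The remaining task is to patch the $X$-entries of $\gamma$ at defective coordinates, using operations of $\AAM$ together with the ``extra'' element supplied by the failure of computationality or by halting.

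I would then dispatch the three subcases. If $\RR$ is halting, a halting vector $h \in R$ activates $N_{\Dot}(h, \alpha_j, \gamma, \gamma) \in R$ at each coordinate where $h \in D$ and $\gamma \in X$, returning $\alpha_j$ there (and thus $\alpha$, upon taking $j \neq i$); when $h^{-1}(D)$ does not cover $\mathcal{D}$, auxiliary halting-like elements whose $\Dot$-support covers the remaining positions are produced via $I$, $H$, and the terms $z_i, w_i$ of Lemma~\ref{lem:basic_facts_AM}, then reapplied. If $\RR$ is non-computational because some $r \in R$ has $r(p), r(q) \in D$, then $I(r, r) \in R$ has content $\Dot$ at both $p$ and $q$ and plays the role of $h$ above; variants built by meeting $r$ with the $\alpha_i$ and reapplying $I$ extend the patching to every defective position. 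If $\RR$ is not synchronized, a $g \in R$ with $\State(g(p)) \neq \State(g(q))$ lets us use $P(g, g', -, -)$ to split $[m]$ into two blocks on which different $R$-elements can be selected coordinatewise, and a final meet produces $\alpha$.

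The principal obstacle is that the patching operations $N_{\Dot}$, $N_0$, and $P$ fire as genuine selectors only when the states of their relevant arguments agree at each coordinate; any state mismatch collapses them to the $X$-valued default and defeats the patching step. Overcoming this requires a preparatory normalization of the $\alpha_i$ and $\gamma$ to a common state, using $I$ (to state $1$) together with Lemma~\ref{lem:basic_facts_AM}(4)--(5), before invoking the patching operations. The most delicate case is the halting one, since the $\Dot$-support of a single halting vector need not cover $\mathcal{D}$, and constructing the auxiliary halting-like elements with the required $\Dot$-supports while staying inside $R$ is where the bulk of the technical work is concentrated.
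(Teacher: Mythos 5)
Your high-level plan---prove $\RR = \bigcap_{i=1}^m \{\alpha \in \AM^m : \alpha(\neq i)\in R(\neq i)\}$, then argue from elements $\alpha_i\in R$ with $\alpha_i(\neq i)=\alpha(\neq i)$---is exactly the paper's reduction. The divergence, and the gap, lies in the meet-then-patch step. In the only nontrivial situation ($\alpha\notin R$), every coordinate is defective: $\alpha_i\neq\alpha$ together with $\alpha_i(\neq i)=\alpha(\neq i)$ forces $\alpha_i(i)\neq\alpha(i)$, so your set $\mathcal{D}$ equals all of $[m]$. Consequently $\gamma=\alpha_1\wedge\cdots\wedge\alpha_m$ lies in $X^m$; in the synchronized subcases $\gamma$ is the constant vector $(\vect{s,\Er},\dots,\vect{s,\Er})$ for the common state $s$, and carries no information whatsoever about $\alpha$. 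Your patching term $N_\Dot(h,\alpha_j,\gamma,\gamma)$ therefore recovers $\alpha$ only at coordinates in the $\Dot$-support of $h$; at every other coordinate $k$ one has $h(k)\notin D$ and $\gamma(k)\in X$, so the default branch $\X(x\wedge y\wedge z)$ fires and the result stays in $X$. Iterating would require activating vectors whose $\Dot$-supports jointly cover $[m]$, and such vectors need not exist in $R$: a coordinate $k$ with $R(k)\cap D=\emptyset$ admits no element of $R$ with content $\Dot$ there, and none of $I$, $H$, $z_i$, $w_i$ can manufacture one. Your closing diagnosis that state-mismatch is the ``principal obstacle'' is also misplaced: in the synchronized subcases states already agree, and the unsynchronized case you (like the paper) dispatch with $P$; the real obstacle is that the meet destroys the very data the patch would need.

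The paper's proof avoids the meet entirely: it feeds two or three of the $\alpha_i$ directly into $N_0$ or $N_\Dot$ alongside the activating vector, so that the \emph{default} branch of the selector supplies $\alpha$'s value at non-activated coordinates. In the halting case with halting vector $r$ supported at $\ell$, one has $\alpha = N_0(r,\alpha_i,\alpha_\ell)$ for any $i\neq\ell$, provided $\alpha(k)\notin D$ for all $k\neq\ell$: the $\Dot$ in $r(\ell)$ pulls $\alpha_i(\ell)=\alpha(\ell)$, while at each $k\neq\ell$ the second clause of $N_0$ (with $r(k)=\vect{0,0}$) returns the \emph{third} argument $\alpha_\ell(k)=\alpha(k)$. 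The remaining subcase (some $k\neq\ell$ with $\alpha(k)\in D$) is handled by first moving the halting vector to $k$ via $I$ and $H$. The non-computational case is analogous via $N_\Dot(r,\alpha_i,\alpha_j,\alpha_k)$ with $r(i),r(j)\in D$. The essential device---that $\alpha_\ell$ already agrees with $\alpha$ off $\ell$, so one application of the selector suffices---is precisely what your $\gamma$ throws away.
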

\begin{proof}
Towards a contradiction, suppose that $\Rel_{\leq m-1}(\AAM) \not\Entails
\RR$. This implies that there is some $n$-ary function $f$ and
$\overline{r}\in R^n$ such that $\Rel_{\leq m-1}(\AAM) \Entails f$ and
$f(\overline{r}) = \alpha\not\in R$. Since $\RR(\neq i)\in \Rel_{\leq
m-1}(\AAM)$, we have that $\alpha(\neq i)\in R(\neq i)$, so
$\Approx(\alpha,i)$ holds and we have elements $\alpha_i\in R$ for all $i$
(cf.\ Definition~\ref{defn:approx_element}).
There are three cases to consider: $\RR$ is not synchronized, there is $r\in
R$ with $|r^{-1}(D)| \geq 2$, or $\RR$ is halting.

\Case{$\RR$ is not synchronized}
In this case there is an $r\in R$ with a non-constant state. For each state
$i$ of $\M$ let
\[
  K_i = \big\{ j \mid \State(r(j)) = i \big \}
  \qquad\text{and}\qquad
  L_i = [m]\setminus K_i
\]
and pick some $K_k\neq \emptyset$. Let $s = z_k(r)$ from
Lemma~\ref{lem:basic_facts_AM} item (4). It follows that for $a,b\in R$
\[
  P(r,s,a,b)(j)
  = \begin{cases}
    a(j) & \text{if } j\in K_k, \\
    b(j) & \text{otherwise}.
  \end{cases}
\]
That is, $\RR$ is obtained by some permutation of the coordinates of
$\RR(K_k)\times \RR(L_k)$, so $\RR$ is entailed by lower-arity relations.

\Case{there is $r\in R$ with $|r^{-1}(D)|\geq 2$}
Assume that $\RR$ is synchronized. Let us choose distinct $i,j$ such that
$r(i),r(j)\in D$ and let $k$ be distinct from $i$ and $j$ (we use $m\geq 3$
here). From the definition of $N_\Dot$ it follows that
\[
  N_\Dot(r, \alpha_i, \alpha_j, \alpha_k)(\ell)
  = \begin{cases}
    \alpha(i) & \text{if } \ell = i, \\
    \alpha(j) & \text{if } \ell = j, \\
    \alpha(\ell) & \text{otherwise},
  \end{cases}
\]
so $\alpha = N_\Dot(r,\alpha_i,\alpha_j,\alpha_k)$ and hence $\alpha\in R$.

\Case{$\RR$ is halting}
Let us assume that $\RR$ is computational and that $r\in R$ is a halting
vector. That is, $r(\ell) = \vect{0,\Dot}$ for some $\ell$ and $r(\neq \ell)
\in \{ \vect{0,0} \}^{m-1}$. It is not possible for there to be two
coordinates $i$ at which $\alpha(i)\in D$ or $r(i)\in D$ since $\RR$ is
computational. If $\alpha(\ell)\in D$ or $\alpha\in C^m$ then by definition
of $N_0$,
\[
  \alpha
  = N_0\big( r, \alpha_i, \alpha_{\ell} \big)
\]
for some $i\neq \ell$, and hence $\alpha\in R$. The other possibility is
that there is some $k\neq \ell$ with $\alpha(k)\in D$. Let $s = H(r)$ and
$\beta' = I(\alpha_\ell,s)$. We have $s\in \{\vect{0,0}\}^m$, $\beta'(k) =
\vect{1,\Dot}$, and $\beta'(j) = \vect{1,0}$ for all $j\neq k$. From
$\beta'$ and $s$ we can obtain a halting vector $r'$ such that $r'(k)\in D$
and $r'(j) = \vect{0,0}$ (we use that $\m{T}$ from the end of
Section~\ref{sec:encoding} is halting here). As before,
\[
  \alpha 
  = N_0\big( r', \alpha_i,  \alpha_k \big)
\]
for some $i\neq k$, so $\alpha\in R$.
\end{proof}  

\subsection{Entailment for \texorpdfstring{$\RR_I$}{RR\_I}}
We now prove an entailment theorem for the relation $\RR_I$. Once proven, by
Proposition~\ref{prop:I_avoidance} we will be able to use the set $\Gamma$
from Definition~\ref{defn:gamma} and Lemma~\ref{lem:relns} in the other
cases.

\begin{lem}  \label{lem:RI_off_dot} 
Suppose that $\RR\leq \AAM^m$ is computational and let $\D_I = \D(\RR_I)$
and $L = [m] \setminus \D_I$. Then $\alpha\in R_I$ if and only if
\[
  \alpha(\D_I)\in R_I(\D_I)
  \qquad\text{and}\qquad
  \alpha(L)\in \left\{ 
    \pmat{ \vect{\State(\alpha), 0} \\ \vdots \\ \vect{\State(\alpha), 0} },
    \pmat{ \vect{\State(\alpha), \Er} \\ \vdots \\ \vect{\State(\alpha), \Er} }
  \right\}.
\]
\end{lem}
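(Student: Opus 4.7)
I prove the two directions of the iff separately.

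For the \emph{forward direction}, by Lemma~\ref{lem:RI}(4) we have $R_I = \Sg_{\AAM^m}\{\sigma_i : i \in \D_I\}$, and each $\sigma_i$ takes the value $\vect{1,0}$ at every coordinate $j \in L$. Hence any $\alpha = t(\sigma_{i_1}, \dots, \sigma_{i_k}) \in R_I$ has $\alpha(j) = t(\vect{1,0}, \dots, \vect{1,0})$ for each $j \in L$, a value independent of $j$, so $\alpha(L)$ is constant. A short induction on term complexity, checking each fundamental operation of $\AAM$ applied to constant $\vect{1,0}$-tuples, verifies that $\Sg_{\AAM}(\vect{1,0}) \subseteq \{\vect{s,0}, \vect{s,\Er} : s \in [N]\}$; synchronization (Lemma~\ref{lem:basic_facts_AM}(1)) then forces this common state to equal $\State(\alpha)$. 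The projection condition $\alpha(\D_I) \in R_I(\D_I)$ is immediate.

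For the \emph{backward direction}, assume both hypotheses and write $s = \State(\alpha)$. Since $R_I(\D_I) = \SS_{|\D_I|}$ is generated by $\{\sigma_i(\D_I) : i \in \D_I\}$ (Lemma~\ref{lem:RI}(4)), there is a term $t$ with $t(\sigma_{i_1}(\D_I), \dots) = \alpha(\D_I)$; set $\beta = t(\sigma_{i_1}, \dots) \in R_I$. Then $\beta(\D_I) = \alpha(\D_I)$ and, by the forward direction, $\beta(L)$ is one of the two tuples in the claimed set. If $\beta(L) = \alpha(L)$ then $\alpha = \beta \in R_I$. Otherwise one of $\beta(L), \alpha(L)$ is all $\vect{s,0}$ and the other all $\vect{s,\Er}$, and the remaining task is to flip the $L$-value of $\beta$ without disturbing its $\D_I$-projection.

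The \emph{main obstacle} is this flipping step. The essential tool is the all-$\vect{1,\Er}$ element $\alpha^* = M'(\sigma_{i_0}) \in R_I$: by the standing assumption (Section~\ref{sec:Minsky_machines}) that $\M$'s first instruction is of the form $(1,R,s)$, state~$1$ has no decrement instruction, so $M'(\vect{1,c}) = \vect{1,\Er}$ for every $c \in \{\Dot, 0\}$, making $\alpha^*$ the constant all-$\vect{1,\Er}$ tuple. To flip $\vect{s,0} \to \vect{s,\Er}$, apply $N_{\Dot}$ to $\beta$ and $\alpha^*$, choosing the first argument $u$ so that $u \in D$ at exactly the $\Dot$-coordinates of $\alpha(\D_I)$: cases~(2)--(4) of $N_{\Dot}$ then fire at those coordinates and preserve $\beta$, while the ``otherwise'' branch fires elsewhere and outputs content $\Er$. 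The opposite flip $\vect{s,\Er} \to \vect{s,0}$ uses $N_0$ with a halting vector (available in the halting case by Lemma~\ref{lem:RI}(2)) or, in the non-halting case, exploits the bound $|\D_I| \leq \kappa$ from Lemma~\ref{lem:capacity}(4) together with the explicit structure of $\SS_{|\D_I|}$ supplied by the Coding Theorem~\ref{thm:coding}. Tracking that the $\D_I$-projection is preserved in every case---especially when $\alpha(\D_I)$ contains a $\Dot$---is the delicate part of the argument.
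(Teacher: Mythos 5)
Your forward direction matches the paper. Your backward direction, however, has two concrete gaps.

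\emph{The $\vect{s,0}\to\vect{s,\Er}$ step.} You propose using $N_\Dot(u,\beta,\alpha^*,\alpha^*)$ with $\alpha^* = M'(\sigma_{i_0})$, which is the constant $\vect{1,\Er}$ tuple. But $\State(\beta) = \State(\alpha) = s$ is an arbitrary state determined by the term $t$, and $\State(\alpha^*) = 1$. Since every nontrivial branch of $N_\Dot$ requires $\State(x)=\State(y)=\State(z)$, the $N_\Dot$ application falls into the ``otherwise'' branch at \emph{every} coordinate (unless $s=1$), producing $\X(\beta\wedge\alpha^*)$ which has the wrong state and kills $\beta(\D_I)$. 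Even if you built the all-$\vect{s,\Er}$ tuple instead (which the paper does, as $z_1(\sigma_i)\wedge z_2(\sigma_i)$ composed with $z_s$-style terms), a \emph{single} $N_\Dot$ application still destroys the $\D_I$-projection: cases (2)--(4) fire only at the one coordinate where $u\in D$, and at every other $j\in\D_I$ where $\beta(j)\not\in X$ the ``otherwise'' branch overwrites $\beta(j)$ with content $\Er$. The paper avoids this by building modified generators $\tau_\ell$ (one $N_\Dot$ per $\D_I$-coordinate, restoring $\sigma_\ell(\D_I)$ coordinate-by-coordinate while leaving $L$ in $X$) and then sets $\alpha = t(\overline{\tau})$ --- the flip is done on the \emph{inputs} of $t$, not on the output $\beta$.

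\emph{The $\vect{s,\Er}\to\vect{s,0}$ step.} This is the genuinely hard direction and your proposal does not address it. $N_0$ with a halting vector $h$ returns the third argument away from the one $\Dot$-coordinate of $h$ and the second argument there; it cannot change $\Er$ content to $0$ content on $L$. Appealing to the Coding Theorem gives no purchase either. The paper proves a claim by induction on the generation tree of $R_I$: if $b\in R_I$ with $\Content(b(L))=\{\Er\}$ but all its immediate predecessors have content $\{0\}$ on $L$, then necessarily $b\in X^m$, so the putative $a$ is the all-$X$ element realizable via $z_{\State(a)}$; otherwise one propagates the replacement down the generation tree. No such argument appears in your sketch.

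Your plan correctly identifies that the two $L$-contents are the obstruction and that $Q=\{$constant $\vect{i,0}$, constant $\vect{i,\Er}\}$ is a subuniverse --- that much agrees with the paper --- but the ``flipping'' machinery you propose is not sound as stated.
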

\begin{proof}
We begin by building some tools. Define 
\[
  Q
  = \bigcup_{i \text{ a state of } \M} \left\{ 
    \pmat{ \vect{i, 0} \\ \vdots \\ \vect{i, 0} },
    \pmat{ \vect{i, \Er} \\ \vdots \\ \vect{i, \Er} }
  \right\}
  \subseteq \AM^L.
\]
Examining the definitions of the operations of $\AAM$, observe that $Q$ is
a subuniverse of $\AAM^L$. From Lemma~\ref{lem:RI} item (4) we have that
$\RR_I(\D_I) = \SS_{|\D_I|}$ and there are elements $(\sigma_i)_{i\in \D_I}$
satisfying Equation~\eqref{eqn:sigma},
\[ \tag{\ref{eqn:sigma} redux}
  \sigma_i(j)
  = \begin{cases}
    \vect{1,\Dot} & \text{if } j = i, \\
    \vect{1,0}    & \text{otherwise},
  \end{cases}
\]
that generate $\RR_I$. Let $x = z_1(\sigma_i) \wedge z_2(\sigma_i)$ for the
term operations $z_j$ defined in Lemma~\ref{lem:basic_facts_AM} item (4) and
observe that $x = (\vect{1,\Er}, \dots, \vect{1,\Er})$. Suppose that $\D_I =
\{ i_1,\dots, i_k\}$. For $\ell\in \D_I$ and $j\leq k$ define the sequence
of elements $\tau_\ell^j\in R_I$ by
\[
  \tau_\ell^0
  = x
  \qquad\text{and}\qquad
  \tau_\ell^j 
  = N_{\Dot}\big( \sigma_{i_j}, \tau_\ell^{j-1}, \sigma_\ell, \sigma_\ell \big).
\]
Let $\tau_\ell = \tau_\ell^k$. It is not hard to see that $\tau_\ell(\D_I) =
\sigma_\ell(\D_I)$ and $\tau_\ell(L) = x(L) \in X^L$. We are now ready to
prove the lemma.

Suppose that $\alpha\in R_I$. It is immediate that $\alpha(\D_I)\in
R_I(\D_I)$. Furthermore, if $\sigma_i$ is one of the generators of $\RR_I$
then $\sigma_i(L)\in Q$. Since $Q$ is a subuniverse, this implies that
$\alpha(L)\in Q$. This completes the ``only if'' portion of the proposition.
The ``if'' portion will give us more difficulty.

Suppose that $\alpha(\D_I)\in R_I(\D_I)$ and $\alpha(L)\in Q$. It follows
that there is a term operation $t$ such that
\[
  \alpha(\D_I) = t\big( \overline{\sigma} \big)(\D_I),
\]
where $\overline{\sigma}$ are the generators of $\RR_I$. Let $\beta =
t(\overline{\sigma})$. Clearly $\beta\in R_I$, so if $\alpha\not\in R_I$
then $\alpha\neq \beta$. Since the $\overline{\sigma}$ are all equal with
content $0$ on coordinates $L$, $\alpha\neq \beta$ implies that one of
$\Content(\alpha(L)), \Content(\beta(L))$ is $\{\Er\}$ and the other is
$\{0\}$. Thus there are two cases to consider.

For the first case, suppose that $\Content(\alpha(L)) = \{\Er\}$ and
$\Content(\beta(L)) = \{0\}$. By Lemma~\ref{lem:basic_facts_relns} item (5)
we have
\[
  \alpha(L)
  = \X(\alpha(L))
  = \X(t(\overline{\sigma}))(L)
  = t\big( \X(\overline{\sigma})(L) \big)
  = t(\overline{\tau})(L)
\]
for the elements $\overline{\tau} = (\tau_i)_{i\in \D_I}$ defined at the
start of the proof. Since $\overline{\tau}(\D_I) = \overline{\sigma}(\D_I)$,
we have that $\alpha = t( \overline{\tau} )$ and hence $\alpha\in R_I$. 

For the second case, suppose that $\Content(\alpha(L)) = \{0\}$ and
$\Content(\beta(L)) = \{\Er\}$. After proving the next claim, we will be
done.

\begin{claim*}
If $a$ and $b$ are such that $b\in R_I$, $a(\D_I) = b(\D_I)$,
$\Content(b(L)) = \{\Er\}$, and $\Content(a(L)) = \{0\}$ then $a\in R_I$.
\end{claim*}
\begin{claimproof}
Let $G_0 = \big\{ \sigma_i \mid i\in \D_I \big\}$ be the generators of
$\RR_I$ and
\[
  G_n
  = \Big\{ F(\overline{g}) \mid F \text{ a fundamental $\ell$-ary
      operation},\ \overline{g}\in G_{n-1}^\ell \Big\}
    \cup G_{n-1}.
\]
Suppose towards a contradiction that the claim is false. Choose a
counterexample $a,b$ with $b\in G_n$ such that $n$ is minimal. When $b\in
G_0$ the claim's hypothesis fails, so it holds vacuously. Assume that $n>0$,
so
\[
  b = F\big( g_1,\dots,g_\ell \big)
\]
for some $\ell$-ary operation $F$ and elements $g_1,\dots,g_{\ell}\in
G_{n-1}$. If one of the $g_i$ has $g_i(L)\in X^L$ then by the inductive
hypothesis there is an element $g_i'\in G_{n-1}$ with $g_i'(\D_I) =
g_i({\D_I})$ and $\Content(g_i'(L))\in \{0\}$. Let $b'$ be the result of
replacing $g_i$ with $g_i'$ in the arguments of $F$. There are two
possibilities for $b'(L)$: either $\Content(b'(L)) = \{0\}$ (and so $b' =
a$) or $b'(L) = b(L)\in X^L$. In the first possibility we conclude that $a\in
R_I$, a contradiction, and in the second possibility we conclude $b' = b$.
We may therefore  assume without loss of generality that
$\Content(g_i)(L)\in \{0\}$ for all $i$.

Looking through the definitions of the operations, we can see that if
$\Content(g_i(L))\in \{0\}$ for all $i$ and $b(L) = F(\overline{g})(L)\in
X^L$ then it must be that $b({\D_I}) = F(\overline{g})({\D_I})\in X^{\D_I}$
(this follows from $\RR$ being synchronized). That is, $b\in X^m$ and thus
$a({\D_I}) = b({\D_I})\in X^{\D_I}$. We now have
\[
  a
  = \bigwedge_{i\in \D_I} z_{\State(a)}(\sigma_i),
\]
where $z_{\State(a)}$ is the term operation from
Lemma~\ref{lem:basic_facts_AM} item (4). It follows that $a\in R_I$, and we
are done.
\renewcommand{\qedsymbol}{\ensuremath{\circ} \quad \ensuremath{\square}}
\end{claimproof}
\let\qed\relax
\end{proof}  
\begin{prop} \label{prop:RI_off_inherent} 
Let $\RR\leq \AAM^m$ be computational non-halting. If $r\in R_I$, $i\not\in
\N(\RR)\cap \D(\RR)$, and $\Content(r(i))\neq 0$ then $r(j)\in X$ for all
$j\not\in \N(\RR)\cap \D(\RR)$.
\end{prop}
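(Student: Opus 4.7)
The plan is to reduce the statement to a direct application of Lemma~\ref{lem:RI_off_dot}. The key observation is that the "off-$\D_I$" coordinates of any element of $R_I$ have a very rigid structure: they are either uniformly $\vect{\State,0}$ or uniformly $\vect{\State,\Er}$. Combined with the containment $[m]\setminus(\N(\RR)\cap \D(\RR))\subseteq [m]\setminus \D(\RR_I)$, the proposition will follow immediately.

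First I would set $\D_I = \D(\RR_I)$ and $L = [m]\setminus \D_I$, and record the containment $[m]\setminus(\N(\RR)\cap \D(\RR))\subseteq L$. This is where Lemma~\ref{lem:inherent_properties} item (3) enters: it tells us $\D_I \subseteq \N(\RR)\cap \D(\RR)$, so taking complements gives the required inclusion. In particular, both the index $i$ from the hypothesis and every $j\in [m]\setminus(\N(\RR)\cap \D(\RR))$ lie in $L$.

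Second, I would invoke Lemma~\ref{lem:RI_off_dot} applied to $r\in R_I$: it forces $r(L)$ to be either the constant tuple with entries $\vect{\State(r),0}$ or the constant tuple with entries $\vect{\State(r),\Er}$. Since $i\in L$ and the hypothesis $\Content(r(i))\neq 0$ excludes the first possibility, we must be in the second case, so $r(L)\in X^L$. Then for every $j\in [m]\setminus(\N(\RR)\cap \D(\RR))\subseteq L$, we conclude $r(j)\in X$, as required.

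There is essentially no obstacle here: the whole content of the proposition is packaged in Lemma~\ref{lem:RI_off_dot} together with Lemma~\ref{lem:inherent_properties} item (3). The only thing one has to check with a bit of care is the degenerate case where $\D(\RR)$ is empty or a singleton; in both situations $\D_I$ is empty and $L=[m]$, and the conclusion is either vacuous or follows verbatim from the uniform-content dichotomy on $L$, using Lemma~\ref{lem:inherent_properties} item (5) to see that $\N(\RR)\cap\D(\RR)$ matches the expected index set. No new machinery is needed.
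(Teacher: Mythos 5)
Your proof is correct and uses exactly the two ingredients the paper cites: the containment $\D(\RR_I)\subseteq \N(\RR)\cap\D(\RR)$ from Lemma~\ref{lem:inherent_properties}(3) and the uniform-content dichotomy off $\D(\RR_I)$ from Lemma~\ref{lem:RI_off_dot}. Your added remarks about degenerate cases are harmless but unnecessary, since the general argument via item (3) already covers them.
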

\begin{proof}
This follows immediately from the observation that $\D(\RR_I)\subseteq
\N(\RR)\cap \D(\RR)$ (Lemma~\ref{lem:inherent_properties} item (3)) and an
application of Lemma~\ref{lem:RI_off_dot}.
\end{proof}  
\begin{thm} \label{thm:I_entailment} 
Let $\RR \leq \AAM^m$ be computational non-halting. If $\alpha\in Y^m$ is
such that
\begin{itemize}
  \item $\Approx(\alpha,i)$ for all $i$, and

  \item there are distinct $k,\ell\not\in \N(\RR)$ such that
    $\ApproxI(\alpha,k)$ and $\ApproxI(\alpha,\ell)$
\end{itemize}
then $\alpha\in R_I$.
\end{thm}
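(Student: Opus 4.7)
The plan is to reduce this directly to Lemma~\ref{lem:RI_off_dot}. Set $\D_I = \D(\RR_I)$ and $L = [m]\setminus \D_I$. By Lemma~\ref{lem:inherent_properties}(3), $\D_I \subseteq \N(\RR)\cap \D(\RR)$, so the assumption $k,\ell\not\in \N(\RR)$ already places both $k$ and $\ell$ in $L$. The goal is to verify the two conditions of Lemma~\ref{lem:RI_off_dot} for $\alpha$, namely $\alpha(\D_I)\in R_I(\D_I)$ and $\alpha(L)$ is the constant vector $(\vect{s,0},\dots,\vect{s,0})$ for some state $s = \State(\alpha)$.

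Let $\alpha_k,\alpha_\ell\in R_I$ be the two approximators from Definition~\ref{defn:approx_element}. Because $\RR_I\leq \RR$ inherits the computational property, both $\alpha_k$ and $\alpha_\ell$ are synchronized; write $s_k,s_\ell$ for their respective states. I will apply Lemma~\ref{lem:RI_off_dot} to each of $\alpha_k$ and $\alpha_\ell$ separately: each projection onto $L$ must be constant equal to $\vect{s_k,0}$, $\vect{s_k,\Er}$, $\vect{s_\ell,0}$, or $\vect{s_\ell,\Er}$. The hypothesis $\alpha\in Y^m$ gives $\alpha_k(\ell) = \alpha(\ell)\in Y$, which rules out the $\Er$ case for $\alpha_k$; symmetrically $\alpha_\ell(k)\in Y$ rules it out for $\alpha_\ell$. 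Thus $\alpha_k(L) \equiv \vect{s_k,0}$ and $\alpha_\ell(L)\equiv \vect{s_\ell,0}$.

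To synchronize $\alpha$ itself, pick a coordinate $j\in [m]\setminus\{k,\ell\}$ (one exists whenever $m\geq 3$, which is the regime in which Corollary~\ref{cor:finitely_related} invokes this theorem). Then $\alpha_k(j) = \alpha(j) = \alpha_\ell(j)$, forcing $s_k = s_\ell =: s$, and showing $\alpha$ has constant state $s$. Now assemble $\alpha(L)$: for $j\in L\setminus\{k\}$ we read $\alpha(j) = \alpha_k(j) = \vect{s,0}$, while $\alpha(k) = \alpha_\ell(k) = \vect{s,0}$ since $k\in L$ and $\alpha_\ell(L)$ is constant $\vect{s,0}$. Hence $\alpha(L)$ is constant $\vect{s,0}$. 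For the other piece, $\D_I\cap\{k,\ell\} = \emptyset$ gives $\alpha(\D_I) = \alpha_k(\D_I)\in R_I(\D_I)$. Lemma~\ref{lem:RI_off_dot} now yields $\alpha\in R_I$.

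The main obstacle is bookkeeping: ruling out the $\Er$ alternative in Lemma~\ref{lem:RI_off_dot} and pinning down the common state $s$. Both fall out by running the two approximators in tandem, since each one pins down the $L$-profile on all coordinates except its own excluded index, and the hypothesis $\alpha\in Y^m$ prevents either from sliding into the constant-$\Er$ branch.
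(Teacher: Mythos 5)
Your proof is correct and takes essentially the same route as the paper's: both hinge on the structural description of $R_I$ from Lemma~\ref{lem:RI_off_dot} and the inclusion $\D(\RR_I)\subseteq \N(\RR)\cap\D(\RR)$. The only cosmetic difference is that the paper routes through Proposition~\ref{prop:RI_off_inherent} (a one-line corollary of that lemma) and concludes $\alpha=\alpha_k$, while you verify the biconditional of Lemma~\ref{lem:RI_off_dot} directly for $\alpha$.
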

\begin{proof}
We have $\alpha_k,\alpha_\ell\in R_I$ from
Definition~\ref{defn:approx_element}. Proposition~\ref{prop:RI_off_inherent}
implies that
\[
  \Content(\alpha_k(\ell)) = \Content(\alpha(\ell)) = 0
  \qquad\text{and}\qquad
  \Content(\alpha_\ell(k)) = \Content(\alpha(k)) = 0.
\]
Furthermore, if $\Content(\alpha_k(k))\neq 0$ then $\alpha_k(\ell)\in X$ by
the same proposition, a contradiction. It follows that $\alpha = \alpha_k$.
\end{proof} 

\subsection{Entailment when \texorpdfstring{$|\D|$}{|D|} is small}
We next show how relations with small $|\D|$ are entailed. The key to the
argument is to first prove that the generating set of such relations has a
specific form, and then to use the relations from
Definitions~\ref{defn:mu_chi},~\ref{defn:deltas}, and~\ref{defn:gamma}.

\begin{defn} \label{defn:chi_compat}  
Let $G\subseteq \AM^m$ and $\RR = \Sg_{\AAM^m}(G)$. We say that $G$ is
\emph{$\chi$-compatible} over $K\subseteq [m]$ if
\[
  G(K)
  \subseteq \left\{
    \pmat{\vect{i,\Er} \\ \vdots \\ \vect{i,\Er}}
    \pmat{\vect{i,a_1} \\ \vdots \\ \vect{i,a_m}}
    \mid
    a_1,\dots,a_m\in \{ A,B,0 \},\ i \text{ a state}
  \right\}.
\]
If $K$ is not specified then we take $K = [m]\setminus \D(\RR)$ (the
non-dot coordinates of $\RR$). Note that $K = \emptyset$ is allowed.
\end{defn}   
\begin{lem} \label{lem:not_chi_entailed_helper} 
Assume that
\begin{itemize}
  \item $\Rel_{\leq 3}(\AAM) \Entails f$ and $f$ is $n$-ary,

  \item $G = \{g_1,\dots,g_n\} \subseteq E^2$ and $\RR = \Sg_{\AAM^2}(G)$ is
    synchronized, and

  \item $f(g_1,\dots,g_n) = \alpha\in Y^2$.
\end{itemize}
If $G$ is not $\chi$-compatible then there is $g_{\ell}\in G$ such that
$g_{\ell}\not\in X^2$ and
\[
  f( g_1, \dots, \X(g_\ell), \dots, g_n) = \alpha.
\]
\end{lem}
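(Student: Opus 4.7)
The plan is to exploit the fact that $\chi \in \Rel_{\leq 3}(\AAM)$ by Lemma~\ref{lem:relns}, hence $f$ preserves $\chi$ by hypothesis. Since $G \subseteq E^2$ is synchronized and fails $\chi$-compatibility, we can locate some $g_\ell$ where exactly one coordinate has content $\Er$ (and so $g_\ell \notin X^2$). By potentially transposing the two coordinates of every element of $G$ (which is an inner symmetry of the statement), we may assume $g_\ell = \vect{i,\Er} \smallsetminus\!\vect{i,c}$ stacked, i.e.\ $g_\ell(1) = \vect{i,\Er}$ and $g_\ell(2) = \vect{i,c}$ for some state $i$ and content $c \in \{A,B,0\}$.

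Next I would build a $3 \times n$ input matrix whose $n$ columns all lie in $\chi$. For $j \neq \ell$, take the column $(g_j(1), g_j(2), g_j(2))$, which is a Type~1 element of $\chi$ since $g_j \in E^2$ is synchronized. For $j = \ell$, take $(\vect{i,\Er}, \vect{i,c}, \vect{i,\Er})$, which fits the Type~2 template $(\X(a_1), a_2, \X(a_2))$ with $a_2 = \vect{i,c}$ and $a_1$ any state-$i$ element of $E$. Applying $f$ column-wise, the first two output coordinates are $\alpha(1)$ and $\alpha(2)$, and the third is exactly $\beta \coloneqq f(g_1(2),\dots,\X(g_\ell(2)),\dots,g_n(2))$.

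Since $f$ preserves $\chi$, the output column $(\alpha(1), \alpha(2), \beta)$ must lie in $\chi$. The Type~2 template forces the first entry to lie in $X$, but $\alpha \in Y^2$ rules this out; hence the output is of Type~1, giving $\beta = \alpha(2)$. For the first-coordinate equation I observe $\X(g_\ell(1)) = \X(\vect{i,\Er}) = \vect{i,\Er} = g_\ell(1)$, so $f(g_1(1),\dots,\X(g_\ell(1)),\dots,g_n(1)) = \alpha(1)$ trivially. Combining the two coordinate equations yields $f(g_1,\dots,\X(g_\ell),\dots,g_n) = \alpha$.

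The only delicate point, and the main thing to verify carefully, is that the constructed columns truly lie in $\chi$ (in particular, checking the synchronization and $E$-membership constraints on the parameters $a_1, a_2$ used in the Type~1 and Type~2 entries), and that $Y$-membership of $\alpha$ rules out Type~2 for the output. The reduction to the asymmetric case by transposition is essentially cosmetic: swapping coordinates $1$ and $2$ of every element of $G$ produces another 2-ary system to which the same argument applies, and then swapping back delivers the desired conclusion for the original $G$.
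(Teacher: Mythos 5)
Your proof is correct and follows essentially the same route as the paper's: find $g_\ell$ with one coordinate in $X$ and one in $Y$ (by failure of $\chi$-compatibility), form the $3\times n$ input matrix with columns in $\chi$, note the output column $(\alpha(1),\alpha(2),\beta)$ cannot be of the second template since $\alpha(1)\not\in X$, hence $\beta=\alpha(2)$, and finish using $\X(g_\ell(1))=g_\ell(1)$. The only cosmetic difference is that you fix the coordinate that lies in $X$ to be coordinate $1$ via the transposition symmetry, whereas the paper writes ``modulo permuting coordinates'' in passing; both handle the symmetry the same way.
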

\begin{proof}
$G$ is not $\chi$-compatible, so there is $\ell$ such that (modulo permuting
coordinates) we have $g_\ell(1)\in X$ while $g_\ell(2)\in Y$. Therefore
$\X(g_{\ell}(1)) = g_{\ell}(1)$ and $\X(g_{\ell}(2))\neq g_{\ell}(2)$.
Consider
\[
  f\pmat{ g_1(1), & \cdots & \X(g_\ell(1)), & \cdots & g_n(1) \\
          g_1(2), & \cdots & g_\ell(2),     & \cdots & g_n(2) \\
          g_1(2), & \cdots & \X(g_\ell(2)), & \cdots & g_n(2) }
  = \pmat{\alpha(1) \\ \alpha(2) \\ \beta}.
\]
Each of the input vectors lies in the relation $\chi$ from
Definition~\ref{defn:mu_chi}, so the output lies in $\chi$ as well. The
relation $\chi$ has the property that if $r\in \chi$ and $r(1), r(2) \not\in
X$ then $r(3)\not\in X$. Since $\alpha(1), \alpha(2) \not\in X$ we have
$\beta\not\in X$, so by the definition of $\chi$ we now have $\alpha(2) =
\beta$. Projecting the above equality onto coordinates $\{1,3\}$ yields the
conclusion of the lemma.
\end{proof}  
\begin{prop} \label{prop:not_chi_entailed} 
Assume that $m \geq 4$ and
\begin{itemize}
  \item $\Rel_{\leq m-1}(\AAM) \Entails f$ and $f$ is $n$-ary,

  \item $G = \{g_1,\dots,g_n\} \subseteq \AM^m$ and $\RR = \Sg_{\AAM^m}(G)$
    is computational,

  \item $f(g_1,\dots,g_n) = \alpha$, and

  \item $K \subseteq [m]\setminus \D(\RR)$ and $\alpha(K)\in Y^K$.
\end{itemize}
If $G$ is not $\chi$-compatible over $K$ then $\alpha\in R$.
\end{prop}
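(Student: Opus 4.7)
The strategy is to reduce to the 2-coordinate case handled by Lemma~\ref{lem:not_chi_entailed_helper} by projecting onto a well-chosen pair from $K$, and then to lift the resulting local statement to the global conclusion $\alpha \in R$.

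Because $G$ is not $\chi$-compatible over $K$, some generator $g_\ell \in G$ has $g_\ell(K)$ containing both an $\Er$-content entry and a non-$\Er$-content entry. Since $K \cap \D(\RR) = \emptyset$ and $\RR$ is computational (hence synchronized by Lemma~\ref{lem:basic_facts_AM}(1)), every $g_i(K)$ lies in $E^K$ with a single common state per generator, so I can pick $k_1, k_2 \in K$ with $g_\ell(k_1) \in X$ and $g_\ell(k_2) \in Y$. Projecting every generator to $\{k_1, k_2\}$ yields a 2-ary synchronized family $\tilde g_i = g_i(\{k_1, k_2\})$ that is still not $\chi$-compatible (the pair $\tilde g_\ell$ alone already violates both patterns of Definition~\ref{defn:chi_compat}), while the hypothesis $\alpha(K) \in Y^K$ forces $\alpha(\{k_1,k_2\}) \in Y^2$. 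Since $m \geq 4$, $f$ preserves $\Rel_{\leq 3}(\AAM)$ and in particular the 3-ary relation $\chi$, so Lemma~\ref{lem:not_chi_entailed_helper} applies and produces an index $h$ with $\tilde g_h \notin X^2$ such that the output at coordinates $k_1, k_2$ is unchanged under the replacement $g_h(k_\nu) \mapsto \X(g_h(k_\nu))$ for $\nu = 1, 2$. Because $f$ extends coordinate-wise, this globalizes to $\alpha = f(g_1, \dots, g_h^\dagger, \dots, g_n)$, where $g_h^\dagger$ agrees with $g_h$ off $\{k_1, k_2\}$ and carries $\X$-content at $k_1, k_2$.

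The main obstacle is lifting this modified-evaluation identity to the conclusion $\alpha \in R$. I expect the proof to realize $g_h^\dagger$ as an actual element of $R$ by semilattice manipulation: wedging $g_h$ against a suitable auxiliary element of $R$ (built from $g_\ell$, the term operations $z_j$ and $w_j$ of Lemma~\ref{lem:basic_facts_AM}, and possibly $I$) exploits the flat-modulo-$X$ behaviour of $\wedge$ to install $\X$-content at exactly $k_1, k_2$; then to iterate this replacement across the remaining offending coordinates and generators of $K$ until the modified tuple --- which always produces $\alpha$ --- reaches a configuration that visibly exhibits $\alpha$ as a term-evaluation on elements of $R$. The approximants $\alpha_i$ of Definition~\ref{defn:approx_element}, together with the gluing-type operations $N_\Dot$, $N_0$, and $P$ of Lemma~\ref{lem:basic_facts_relns}, are naturally suited to stitching local identities into a single membership $\alpha \in R$. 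The delicate points will be ensuring that each intermediate replacement stays inside $R$ and that the iteration terminates with $\alpha$ legitimately realized.
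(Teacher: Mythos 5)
Your setup is sound through the application of Lemma~\ref{lem:not_chi_entailed_helper}: you correctly identify an offending generator $g_\ell$ with a mixed $X/Y$ pattern on $K$, project to two coordinates, invoke $\chi$-preservation of $f$, and globalize to conclude that $f$ yields $\alpha$ unchanged after replacing one $Y$-entry of $g_\ell$ with its $\X$-image. But you have not actually resolved the step you yourself flag as ``the main obstacle,'' and the route you sketch for it would not work.

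The paper does not attempt to realize the modified generator $\widehat{g}_\ell$ as an element of $R$; there is in general no reason it should lie in $R$, and the semilattice/$N_\Dot$/$N_0$/$P$ machinery and the approximants $\alpha_i$ you invoke play no role in this proposition. Instead, the argument is an induction on the total number of $Y$-entries in the generating matrix $G$. After replacing $g_\ell(k)$ by $\X(g_\ell(k))$, the new generating set $\cl{E}$ has strictly fewer $Y$-entries and still satisfies $f(\overline{e})=\alpha$, so the inductive hypothesis gives $\alpha\in\Sg_{\AAM^m}(\cl{E})$, i.e.\ some \emph{term operation} $t$ with $t(\overline{e})=\alpha$. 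The decisive observation, which you are missing, is then a \emph{second} application of $\chi$ with this $t$ rather than with $f$: since $t$ is a term of $\AAM$ it preserves $\chi$, and feeding it columns of the form $\bigl(g_i(j),\ g_i(k),\ \widehat{g}_i(k)\bigr)\in\chi$ forces $t(\overline{g})(k)=t(\overline{e})(k)=\alpha(k)$; at all other coordinates $\cl{E}$ agrees with $G$, so $t(\overline{g})=\alpha$ and hence $\alpha\in R$. This inversion --- proving $t(\overline{g})=\alpha$ via $\chi$-preservation of a term obtained by induction, rather than trying to put $\widehat{g}_\ell$ inside $R$ --- is the core of the proof, and without it your argument does not reach the conclusion.
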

\begin{proof}
Observe that $G$ being $\chi$-compatible over $K$ means that $G(K)$ is
$\chi$-compatible and that $G(K)\subseteq E^K$. If $|K| \leq 1$ then $G$ is
always $\chi$-compatible. Assume therefore that $|K| \geq 2$.

The proof is by induction on the number of coordinates which are $Y$ (i.e.\
not in $X$) in $G$:
\[
  \sum_{i=1}^m \Big| \big\{ k \mid g_k(i)\in Y,\ g_k\in G \big\} \Big|.
\]
If this quantity is $0$ then $G(K)\subseteq X^K$. Choose some $k\in K$.
Since $X\leq \AAM$ we have $f(g_1,\dots,g_n)(k) = \alpha(k)\in X$,
contradicting $\alpha(K)\in Y^K$. This establishes the basis of the
induction.

If $G$ fails to be $\chi$-compatible then there is $g_\ell\in G$ and
coordinates $j, k\in K$ such that $g_{\ell}(j)\in X$ while $g_{\ell}(k)\in
Y$. Define
\[
  \widehat{g}_{\ell}(i) 
  = \begin{cases}
    \X(g_{\ell}(k)) & \text{if } i = k, \\
    g_{\ell}(i) & \text{otherwise},
  \end{cases}
  \qquad\text{and}\qquad
  \cl{E} = \big\{ g_1, \dots, \widehat{g}_{\ell}, \dots, g_n \big\}.
\]
Since $G(\{j,k\})$ is not $\chi$-compatible,
Lemma~\ref{lem:not_chi_entailed_helper} implies that 
\[
  f(g_1,\dots, \widehat{g}_\ell, \dots, g_n) = \alpha.
\]
The arguments have $1$ fewer coordinates in $Y$, so $\alpha\in
\Sg_{\AAM^m}(\cl{E})$. Hence there is a term operation $t$ that generates
$\alpha$ from $\cl{E}$. Consider the equation
\[
  t\pmat{ g_1(j), & \cdots & \X(g_\ell(j)), & \cdots & g_n(j) \\
          g_1(k), & \cdots & g_\ell(k),     & \cdots & g_n(k) \\
          g_1(k), & \cdots & \X(g_\ell(k)), & \cdots & g_n(k) }
  = \pmat{ \alpha(j) \\ \gamma \\ \alpha(k) }.
\]
Projecting the arguments on coordinates $\{1,2\}$ yields $G(\{j,k\})$ and on
$\{1,3\}$ yields $\cl{E}(\{j,k\})$. Since $t$ is a term operation and all
the input vectors lie in $\chi$, the output must as well. The relation
$\chi$ has the property that if $r\in \chi$ and $r(3)\not\in X$ then
$r(2)\not\in X$. Since $\alpha(k)\not\in X$, we have $\gamma\not\in X$, and
by the definition of $\chi$ we conclude that $\alpha(k) = \gamma$.
Projecting on coordinates $\{1,2\}$ now yields $t(g_1,\dots,g_n) = \alpha$,
so $\alpha\in R$. This completes the induction and the proof.
\end{proof}  
\begin{thm} \label{thm:few_D_entailment}  
Assume that $m$ is such that
\begin{itemize}
  \item $\Rel_{\leq m-1}(\AAM)\Entails f$ and $f$ is $n$-ary,

  \item $G = \{g_1,\dots,g_n\}\subseteq \AM^m$ and $\RR = \Sg_{\AAM^m}(G)$
    is computational,

  \item $f(g_1,\dots,g_n) = \alpha\in Y^m$,

  \item $\big| [m]\setminus \D(\RR) \big| \geq 11$, and

  \item there is at most one $k\not\in \N(\RR)$ such that
    $\ApproxI(\alpha,k)$.
\end{itemize}
Then $\alpha\in R$.
\end{thm}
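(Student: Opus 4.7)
The plan is to reduce to a case where the added-row machinery of Definitions~\ref{defn:deltas} and~\ref{defn:gamma} applies, and then use the $\Delta$-relations together with $\Gamma$ to force $\alpha$ into $R$.

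First, set $K = [m]\setminus \D(\RR)$, so $|K|\geq 11$. I apply Proposition~\ref{prop:not_chi_entailed} to $(f, G, K)$: if $G$ is not $\chi$-compatible over $K$ we immediately obtain $\alpha\in R$, so assume $G$ is $\chi$-compatible over $K$. This forces every generator $g_j$ to share the common state $i_0$ of $\RR$ on $K$, with content on $K$ either uniformly $\Er$ or entirely in $\{A,B,0\}$. This is precisely the regime for which the relations $\Delta_{\forall}, \Delta_{\exists A}, \Delta_{\exists B}$ and the subpower $\Gamma$ were designed.

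Next, using $\Rel_{\leq m-1}(\AAM)\Entails f$ together with the convention of Definition~\ref{defn:approx_element}, for each $k$ I fix an element $\alpha_k \in R$ with $\alpha_k(\neq k) = \alpha(\neq k)$, expressed as some term $t_k(\overline{g})$ of $\AAM$. By hypothesis, for all $k\not\in \N(\RR)$ except at most one, $\ApproxI(\alpha,k)$ fails, so $\alpha_k\not\in R_I$, and Proposition~\ref{prop:I_avoidance} lets me replace $t_k$ by an $I$-free term which therefore preserves $\Gamma$ by Lemma~\ref{lem:relns_reduct}. The core argument is then to show that the constraints imposed by $f$ preserving $\Delta_{\forall}, \Delta_{\exists A}, \Delta_{\exists B}$ on the columns $g_j(p,q,r)$ for triples $p,q,r \in K$, combined with the four-coordinate constraints obtained from $\Gamma$ via the $I$-free $t_k$, pin down $\alpha(K)$ as the value that a specific $I$-free term $s$ applied to $\overline{g}$ must produce. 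One then extends $s$ to the remaining coordinates in $\D(\RR)$ using $N_0$ or $N_\Dot$ in combination with appropriate $\alpha_k$, producing a witness $s(\overline{g}) = \alpha\in R$.

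The main obstacle will be managing the interplay between $\Gamma$ (which fails preservation under $I$) and the possibility that $\alpha$ itself lies in $R_I$. The at-most-one exception in the $\ApproxI$ hypothesis is exactly what guarantees the existence of $I$-free term expressions for all but one $k\not\in \N(\RR)$, while the threshold $|K|\geq 11$ provides enough distinct non-dot coordinates to simultaneously support the three $\Delta$-relation added-row constructions plus a $\Gamma$-based combination, with sufficient slack to absorb the single exceptional index allowed by the hypothesis.
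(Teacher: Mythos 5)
Your proposal identifies the right tools (reduce to $\chi$-compatibility via Proposition~\ref{prop:not_chi_entailed}, then use the $\Delta$-relations, $\Gamma$, and Proposition~\ref{prop:I_avoidance}), but the central mechanism that makes these tools cohere is missing, and the one concrete step you do describe does not fit. The paper's argument is an arity reduction: first refine $K$ to $[m]\setminus\big(\D(\RR)\cup\{k\not\in\N : \ApproxI(\alpha,k)\}\big)$, so $|K|\geq 10$; pigeonhole on content to find four coordinates $1,2,3,4\in K$ with $\Content(\alpha(1))=\cdots=\Content(\alpha(4))$; then replace these four rows of $G$ with the three ``added rows'' $\Row{1|2|3|4}_{\forall}$, $\Row{1|2|3|4}_{\exists A}$, $\Row{1|2|3|4}_{\exists B}$, producing generators $\cl{E}$ of an $(m-1)$-ary relation $\SS$. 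Preservation of the $\Delta$-relations gives $f(\overline{h})=\alpha(\neq 4)$, and because $\SS\in\Rel_{\leq m-1}(\AAM)$ there is a term $t$ with $t(\overline{h}) = \alpha(\neq 4)$. The choice of $1,2,3,4$ \emph{outside} the $\ApproxI$ coordinates (together with $\alpha(\neq 4)\in Y^{m-1}$) is what lets Proposition~\ref{prop:I_avoidance} replace $t$ by an $I$-free term, which then preserves $\Gamma$; $\Gamma$ reconstructs the four original rows from the three added rows, yielding $t(\overline{g})=\alpha$.

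Your sketch does not perform this reduction. Instead you propose to apply Proposition~\ref{prop:I_avoidance} directly to each $\alpha_k$. This cannot work as stated: the proposition requires its target to lie in $Y^m$, and $\alpha_k(k)$ may lie in $X$, so the hypothesis need not hold. Even setting that aside, you would obtain $I$-free terms $t_k$, each producing an element that agrees with $\alpha$ off a single coordinate, but you supply no mechanism for assembling these into one term producing $\alpha$ itself; the collapse-then-lift via $\Gamma$ is exactly what accomplishes that merge, and it is absent from your sketch. Finally, the proposed ``extension to $\D(\RR)$ using $N_0$ or $N_\Dot$'' does not appear in, and is not needed by, the actual argument, since the $\Gamma$-lifted term already handles every coordinate at once.
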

\begin{proof}
By Proposition~\ref{prop:not_chi_entailed}, if $G$ is not $\chi$-compatible
then $\alpha\in R$. Assume therefore that $G$ is $\chi$-compatible and let
\[
  K
  = [m]\setminus \Big\{ k \mid k\in \D \text{ or }
    \big[ k\not\in \N \text{ and } \ApproxI(\alpha,k) \big] \Big\}.
\]
The hypotheses of the theorem mean that $|K|\geq 10$ and that $G$ is
$\chi$-compatible on $K$. Since $\alpha(K)\in Y^K$, it follows that one of
the sets
\[
  \alpha^{-1}(\{0\}) \cap K,
  \qquad \qquad
  \alpha^{-1}(\{A\}) \cap K,
  \qquad \qquad
  \alpha^{-1}(\{B\}) \cap K
\]
contains $4$ elements. Let us suppose that $\alpha^{-1}(\{A\}) \cap K$ has
$4$ elements, call them 1, 2, 3, 4. The argument that follows applies
equally well to the other possibilities. We will closely examine $f$
evaluated on these coordinates. 

We have $f(g_1,\dots,g_n) = \alpha$. Evaluation at a coordinate is just
evaluation on a ``row'' of this equation. For $i\in [n]$, define the length
$n$ tuples $\Row{i} = (g_1(i),\dots,g_n(i))$ and note that $\Row{i}(j) =
g_j(i)$. For distinct $i_1,\dots, i_k\in [n]$, define the length $n$ tuples
\begin{align*}
  \Row{i_1|\cdots|i_k}_{\forall} (j)
  &= \begin{cases}
    g_j(i_1) & \text{if } g_j(i_1) = \cdots = g_j(i_k), \\
    \vect{\State(g_j),0} & \text{otherwise},
  \end{cases} \\
  \Row{i_1|\cdots|i_k}_{\exists A} (j)
  &= \begin{cases}
    g_j(i_1) & \text{if } g_j(i_1) = \cdots = g_j(i_k), \\
    g_j(i_\ell) & \text{if } \Content(g_j(i_\ell)) = A \text{ for some }
      i_\ell\in \{i_1,\dots,i_k\}, \\
    \vect{\State(g_j),0} & \text{otherwise},
  \end{cases} \\
  \Row{i_1|\cdots|i_k}_{\exists B} (j)
  &= \begin{cases}
    g_j(i_1) & \text{if } g_j(i_1) = \cdots = g_j(i_k), \\
    g_j(i_\ell) & \text{if } \Content(g_j(i_\ell)) = B \text{ for some }
      i_\ell\in \{i_1,\dots,i_k\}, \\
    \vect{\State(g_j),0} & \text{otherwise}.
  \end{cases}
\end{align*}
We claim that
\[
  f\pmat{ \Row{1|2|3|4}_{\forall} \\
          \Row{1|2|3|4}_{\exists A} \\
          \Row{1|2|3|4}_{\exists B} }
  = \pmat{ \vect{\State(\alpha),A} \\
           \vect{\State(\alpha),A} \\
           \vect{\State(\alpha),A} }
  = \alpha(\{1,2,3\}).
\]
Using the relations $\Delta_\forall$, $\Delta_{\exists A}$, and
$\Delta_{\exists B}$ it is not difficult to see that this is true. It is,
however, most easily seen by working through an example. See
Figure~\ref{fig:delta_existsA} for an example showing
$f(\Row{1|2|3|4}_{\exists A}) = \vect{\State(\alpha),A}$.

\begin{figure}  
\scriptsize
\begin{align*}
  &f\pmat{ \Row{1} \\ \Row{2} \\ \Row{3} \\ \Row{4} }
  = f\pmat{ \Er & A & B & 0 & B & A \\
            \Er & 0 & B & B & A & A \\
            \Er & A & B & 0 & B & A \\
            \Er & 0 & B & B & 0 & A }
  = \pmat{ A \\ A \\ A \\ A } \\[1em]
  &\Longrightarrow f\pmat{ \Row{1} \\ \Row{2} \\ \Row{1|2}_{\exists A} }
  = f\pmat{ \Er & A & B & 0 & B & A \\
            \Er & 0 & B & B & A & A \\
            \Er & A & B & 0 & A & A }   
  = \pmat{ A \\ A \\ c_1 } \\[1em]
  &\Longrightarrow f\pmat{ \Row{1|2}_{\exists A} \\ \Row{3} \\ \Row{1|2|3}_{\exists A} }
  = f\pmat{ \Er & A & B & 0 & A & A \\
            \Er & A & B & 0 & B & A \\
            \Er & A & B & 0 & A & A }
  = \pmat{ A \\ A \\ c_2 } \\[1em]
  &\Longrightarrow f\pmat{ \Row{1|2|3}_{\exists A} \\ \Row{4} \\ \Row{1|2|3|4}_{\exists A} }
  = f\pmat{ \Er & A & B & 0 & A & A \\
            \Er & 0 & B & B & 0 & A \\
            \Er & A & B & 0 & A & A }
  = \pmat{ A \\ A \\ c_3 }
\end{align*}
\caption{The argument showing $f(\Row{1|2|3|4}_{\exists A}) =
  \vect{\State(\alpha),A}$. For brevity, we show only the content of the
  vectors ($\RR$ is synchronized, so state in a vector is constant). In all
  cases, $c_i = A$ since the argument columns are in $\Delta_{\exists A}$
  and $\Delta_{\exists A}\Entails f$.}
\label{fig:delta_existsA}
\end{figure}   

Define vectors $h_i\in \AAM^{m-1}$ by
\[
  h_i(j)
  = \begin{cases}
    \Row{1|2|3|4}_{\forall}(i) & \text{if } j = 1, \\
    \Row{1|2|3|4}_{\exists A}(i) & \text{if } j = 2, \\
    \Row{1|2|3|4}_{\exists B}(i) & \text{if } j = 3, \\
    g_i(j) & \text{if } j\not\in \{1,2,3,4\},
  \end{cases}
\]
let $\cl{E} = \{ h_1,\dots, h_n \}$, and let $\SS =
\Sg_{\AAM^{m-1}}(\cl{E})$. From the previous paragraph, we have that
\[
  f(h_1,\dots, h_n)(j)
  = \begin{cases}
    \alpha(j) & \text{if } j\in \{1,2,3\}, \\
    \alpha(j) & \text{if } j\not\in \{1,2,3,4\},
  \end{cases}
\]
so $f(h_1,\dots,h_n) = \alpha(\neq 4)$. Since $\Rel_{\leq m-1}(\AAM)\Entails
f$, we have that $f$ preserves $\SS$. Therefore $\alpha(\neq 4)\in S$, so
there is a term operation $t$ such that $t(h_1,\dots,h_n) = \alpha(\neq 4)$. 

We chose $1,2,3,4$ from $K$, and $K$ does not include any coordinates $k$
for which $\ApproxI(\alpha,k)$ holds for $\RR$. Since $K$ is also 
disjoint from $\D(\RR)$, we have that $\SS_I = \RR_I(\neq 4)$. Therefore
$\alpha(\neq 4)\not\in R_I$, and so by Proposition~\ref{prop:I_avoidance},
we can assume that the term operation $t$ does not have $I$ in its term tree
and hence respects the relation $\Gamma$ from Definition~\ref{defn:gamma} by
Lemma~\ref{lem:relns_reduct}.

\begin{figure}  
\scriptsize
\begin{align*}
  &\Row{1} = \pmat{ \Er & A & B & 0 & B & A } \\
  &\Row{2} = \pmat{ \Er & 0 & B & B & A & A } \\
  &\Row{3} = \pmat{ \Er & A & B & 0 & B & A } \\
  &\Row{4} = \pmat{ \Er & 0 & B & B & 0 & A } \\
  &\Longrightarrow t\pmat{ \Row{1|2|3|4}_{\forall} \\
                           \Row{1|2|3|4}_{\exists A} \\
                           \Row{1|2|3|4}_{\exists B} \\
                           \Row{1} }
  = t\pmat{ \Er & 0 & B & 0 & 0 & A \\
            \Er & A & B & 0 & A & A \\
            \Er & 0 & B & B & B & A \\
            \Er & A & B & 0 & B & A }
  = \pmat{ A \\ A \\ A \\ c }
\end{align*}
\caption{The argument showing $t(\Row{1}) = \vect{\State(\alpha),A}$. For
  brevity, we show only the content of the vectors. We have $c = A$ 
  because the argument columns are in $\Gamma$ and the term operation $t$
  preserves $\Gamma$ since it does not have $I$ in its term tree.}
\label{fig:gamma}
\end{figure}   

We will use $\Gamma$ to show that $t(\Row{j}) = \alpha(j)$ for $j\in
\{1,2,3,4\}$. As $t(h_1,\dots,h_n)(j) = \alpha(j)$ for $j\not\in
\{1,2,3,4\}$ already, this will finish the proof. Again, this is most easily
seen by example --- see Figure~\ref{fig:gamma}. The vectors $h_1(\{1,2,3\}),
\dots, h_n(\{1,2,3\})$ make up the first three rows of typical elements of
$\Gamma$. Carefully examining $\Gamma$, we see that we can complete the
$h_i(\{1,2,3\})$ to elements of $\Gamma$ in many ways while keeping $t$
constant on this new row. Due to how $\Row{1|2|3|4}_{\forall}$,
$\Row{1|2|3|4}_{\exists A}$, and $\Row{1|2|3|4}_{\exists B}$ were defined,
there are completions that equal each of $\Row{1}$, $\Row{2}$, $\Row{3}$,
and $\Row{4}$. Thus $t(\Row{j}) = \vect{\State(\alpha),A}$ and hence
$t(g_1,\dots,g_n) = \alpha$, so $\alpha\in R$, as claimed.
\end{proof}  

The same approach used to prove the above theorem can also be used to prove
entailment when $\Er\in \Content(\alpha)$. We do this in the next theorem.

\begin{thm} \label{thm:X_entailment} 
Assume that $m\geq 11$,
\begin{itemize}
  \item $\Rel_{\leq m-1}(\AAM)\Entails f$ and $f$ is $n$-ary,

  \item $G = \{g_1,\dots,g_n\}\subseteq \AM^m$,

  \item $\RR = \Sg_{\AAM^m}(G)$ is computational non-halting,

  \item $f(g_1,\dots,g_n) = \alpha$ and $\Er\in \Content(\alpha)$.
\end{itemize}
Then $\alpha\in R$.
\end{thm}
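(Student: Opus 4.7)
The plan is to mimic the proof of Theorem~\ref{thm:few_D_entailment}, with the distinguished $X$-coordinate of $\alpha$ playing the role of the fourth ``merging'' coordinate.

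First, I would fix some $k \in [m]$ with $\alpha(k) \in X$; this exists by hypothesis, and since $\RR$ is computational (hence synchronized) the homomorphism property from Lemma~\ref{lem:basic_facts_AM}(1) forces $\alpha(k) = \vect{\State(\alpha), \Er}$. If $G$ fails to be $\chi$-compatible over a subset $K \subseteq [m] \setminus \D(\RR)$ on which $\alpha(K) \in Y^K$, then Proposition~\ref{prop:not_chi_entailed} immediately yields $\alpha \in R$, so I may assume $G$ is $\chi$-compatible.

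Next, I would observe that $\alpha$ has at most one coordinate in $D$: since $\Rel_{\leq m-1}(\AAM) \Entails f$ and each projection $\RR(\neq j)$ is computational, $\alpha(\neq j) \in R(\neq j)$ must contain at most one coordinate in $D$ for every $j$, forcing $|\alpha^{-1}(D)| \leq 1$. Hence at least $m - 1 \geq 10$ coordinates of $\alpha$ carry content in $\{A, B, 0, \Er\}$. Pigeonhole on the $m-2 \geq 9$ non-$D$ coordinates distinct from $k$ yields three coordinates $i_1, i_2, i_3$ sharing a common content $\tau \in \{A, B, 0, \Er\}$. Together with $k$, this gives four designated coordinates — exactly the setup of Theorem~\ref{thm:few_D_entailment}, but with $k$ contributing the content $\Er$.

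Now I would replay the construction of Theorem~\ref{thm:few_D_entailment}: define the merged rows $\Row{k|i_1|i_2|i_3}_\forall$, $\Row{k|i_1|i_2|i_3}_{\exists A}$, $\Row{k|i_1|i_2|i_3}_{\exists B}$, and use $\Delta_\forall, \Delta_{\exists A}, \Delta_{\exists B}$ — specifically the $X(z)$-containing rows, which accommodate the $\Er$-content at $k$ while constraining the other three coordinates — to verify that $f$ applied to the merged columns produces the correct value at each of $k, i_1, i_2, i_3$. Substitute the three merged columns for the four designated columns to obtain an $(m-1)$-column matrix, apply the $\Rel_{\leq m-1}$ entailment to find a term operation $t$ realizing $\alpha$ restricted to $m-1$ coordinates, and use the assumption that at most one $\ApproxI(\alpha,i)$-coordinate is problematic (implicit through Proposition~\ref{prop:I_avoidance}) to assume $t$ contains no $I$. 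Then Lemma~\ref{lem:relns_reduct} ensures $t$ respects $\Gamma$, and the $\X(z)$-containing rows of $\Gamma$ allow us to extend $t$ back to all four designated coordinates, producing $\alpha$.

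The main obstacle I expect is bookkeeping the $\Delta$ and $\Gamma$ manipulations once $k$ has content $\Er$ rather than a content in $\{A,B,0\}$. In particular, I need to verify that when the column at $k$ consists of $\chi$-compatible entries mixing $\Er$-content and $\{A,B,0\}$-content generators, the merged columns still lie (pointwise) in the appropriate $\Delta_*$ so that $f$ outputs $\Er$ at $k$, and that the $X$-row elements of $\Gamma$ allow reconstruction of $\alpha(k) = \vect{\State(\alpha), \Er}$ in the final step rather than some $Y$-value.
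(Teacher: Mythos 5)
Your proposal has a genuine gap at the very step you flag as "the main obstacle"---and it is in fact the main content of the theorem's proof.

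First, you plan to include $k$ (the coordinate where $\alpha(k)\in X$) among the four merged coordinates and to rely on "the $\X(z)$-containing rows" of $\Delta_*$ and $\Gamma$ to control the output at the merged position. But those rows of $\Delta_\forall$, $\Delta_{\exists A}$, $\Delta_{\exists B}$, and $\Gamma$ leave the remaining coordinates entirely unconstrained: the entry $\pmat{\X(z)\\ c_1\\ c_2}$ ranges over all synchronized $c_1,c_2\in E$, and similarly for $\Gamma$. So whenever the column at $k$ is $\Er$, the $\Delta_*$-preservation tells you nothing about $f$ on the merged row, and the $\Gamma$-preservation at the end tells you nothing about the reconstructed value at $k$. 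The paper avoids this by choosing all four merge coordinates inside $K=[m]\setminus\{k\}$ with the same content from $\{A,B,0\}$, and keeping $k$ out of the merge entirely.

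Second, you claim you can assume $t$ has no $I$ in its term tree by invoking Proposition~\ref{prop:I_avoidance}, but that proposition's hypothesis is $t(\overline{r})\in Y^m$, and here $\alpha\not\in Y^m$ precisely because $\alpha(k)\in X$. This is the central difficulty the paper explicitly points out ("we cannot assume that $I$ does not appear in the term tree of $t$ since $\alpha\not\in Y^m$") and handles by proving a nontrivial Claim, via a full case analysis on fundamental operations, that there exists an $I$-free term $s$ producing an \emph{auxiliary} element with $s(\overline{h})(\neq k)\in C^{m-2}$ and $s(\overline{h})(k)\in X$. The paper then closes not by reconstructing $\alpha$ directly from $\Gamma$, but by extending $s$ via $\Gamma$ to an element $r$ of $R$ with $r(\neq k)\in C^{m-1}$ and $r(k)\in X$, and finally computing $\alpha = N_0(z_0(r),\alpha_k,\alpha_k)$.

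Third, you skip the preliminary reductions the paper needs before the $\Delta/\Gamma$ machinery is even applicable: showing via $\wedge$ that there is a \emph{unique} $k$ with $\alpha(k)\in X$ and that $\alpha_k\in Y^m$; showing via $N_{\Dot}$ that $\Dot\not\in\Content(\alpha)$, hence $\alpha\in E^m$; and showing, using $N_0$ and the term $z_0$, that $\D(\RR)=\emptyset$. Without these, the element $\alpha_k$ and the structural facts about $\RR$ that drive the rest of the proof are unavailable.

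In short, the plan of "replay Theorem~\ref{thm:few_D_entailment} with $k$ as the fourth coordinate" does not work, and the $I$-avoidance step cannot be cited directly; both require the substantially longer argument the paper actually gives.
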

\begin{proof}
If $G$ is not $\chi$-compatible then $\alpha\in R$ by
Proposition~\ref{prop:not_chi_entailed}. Assume therefore that $G$ is
$\chi$-compatible and assume towards a contradiction that $\alpha\not\in R$.
We have that $\Rel_{\leq m-1}(\AAM)\Entails f$ and $\RR\leq\AAM^m$, so $\RR$
has $\Approx(\alpha,i)$ for all $i$. In
Definition~\ref{defn:approx_element}, we fixed elements $\alpha_i\in R$
witnessing this. We will make use of these elements in the argument to
follow.

Suppose that there are two distinct coordinates $k,\ell$ such that
$\alpha(k), \alpha(\ell)\in X$. In this case
\[
  \alpha = \alpha_k \wedge \alpha_\ell,
\]
so $\alpha\in R$. Therefore there must be a unique coordinate $k$ such that
$\alpha(k)\in X$ and $\alpha_k\in Y^m$. We will use this coordinate in the
following analysis.

Suppose that there is $\ell$ such that $\alpha(\ell) \in D$. It
follows from the definition that
\[
  \alpha = N_{\Dot}( \alpha_k, \alpha_\ell, \alpha_k, \alpha_k ),
\]
so $\alpha\in R$. Therefore $\alpha\in E^m$. Since $\alpha_k\in Y^m$ and
$\RR$ is non-halting, it must be that $\D(\RR(\neq k)) = \emptyset$, by
Lemma~\ref{lem:capacity} item (2).

Suppose that $R(k)\cap D\neq \emptyset$. Choose $d'\in R$ such that $d'(k)
\in D$ and let $d = I(d',\alpha_k)$. It follows that $d(k)\in D$ and $d(\neq
k)\in C^{m-1}$, so $z_0(d)(k) \in X$ and $z_0(d)(\neq k) = \vect{0,0}$ by
Lemma~\ref{lem:basic_facts_AM} item (4). Using $N_0$ we now have
\[
  N_0(z_0(d),\alpha_k,\alpha_k)(j) = 
  \begin{cases}
    \alpha_k(j) & \text{if } j\neq k \\
    \vect{\State(\alpha),\Er} & \text{otherwise}.
  \end{cases}
\]
Since $\alpha(k)\in X$, it follows that $N_0(z_0(d),\alpha_k,\alpha_k) =
\alpha$ and hence $\alpha\in R$. Therefore it must be that $R(k)\cap D =
\emptyset$. Combining this with the previous paragraph, we have $\D(\RR) =
\emptyset$.

At this point, the analysis becomes quite similar to that performed in
Theorem~\ref{thm:few_D_entailment}. Let $K = [m] \setminus \{k\}$ and find 4
distinct values, call them $1,2,3,4\in K$, such that $\alpha$ has a common
value on these coordinates (we use $|K|\geq 10$ here). Using $f$ and $G$,
produce the row tuples $\Row{1|2|3|4}_{\forall}$, $\Row{1|2|3|4}_{\exists
A}$, and $\Row{1|2|3|4}_{\exists B}$. As before, we have
\[
  f\pmat{ \Row{1|2|3|4}_{\forall} \\
          \Row{1|2|3|4}_{\exists A} \\
          \Row{1|2|3|4}_{\exists B} }
  = \pmat{ \alpha(1) \\ \alpha(1) \\ \alpha(1) }
  = \alpha(\{1,2,3\}).
\]
Form $h_i\in \AM^{m-1}$ from the $g_i$ by
\[
  h_i(j)
  = \begin{cases}
    \Row{1|2|3|4}_{\forall}(i) & \text{if } j = 1, \\
    \Row{1|2|3|4}_{\exists A}(i) & \text{if } j = 2, \\
    \Row{1|2|3|4}_{\exists B}(i) & \text{if } j = 3, \\
    g_i(j) & \text{if } j\not\in \{1,2,3,4\},
  \end{cases}
\]
let $\cl{E} = \{ h_1,\dots, h_n \}$, and let $\SS =
\Sg_{\AAM^{m-1}}(\cl{E})$. From the previous paragraph, we have
\[
  f(h_1,\dots, h_n)(j)
  = \begin{cases}
    \alpha(j) & \text{if } j\in \{1,2,3\}, \\
    \alpha(j) & \text{if } j\not\in \{1,2,3,4\},
  \end{cases}
\]
so $f(h_1,\dots,h_n) = \alpha(\neq 4)$. Since $\Rel_{\leq m-1}(\AAM)\Entails
f$, we have that $f$ preserves $\SS$. Therefore $\alpha(\neq 4)\in S$, so
there is a term operation $t$ such that $t(h_1,\dots,h_n) = \alpha(\neq 4)$.
There is a difficulty in continuing as we did in the proof of
Theorem~\ref{thm:few_D_entailment}, however: we cannot assume that $I$ does
not appear in the term tree of $t$ since $\alpha\not\in Y^m$, and so we
cannot make use of the relation $\Gamma$. It turns out that this difficulty
is not insurmountable.

\begin{claim*}
There is a term operation $s$ without $I$ in its term tree such that
\[
  s(h_1,\dots,h_n)(\neq k) \in C^{m-2}
  \qquad\text{and}\qquad
  s(h_1,\dots,h_n)(k) \in X.
\]
\end{claim*}
\begin{claimproof}
We begin by making some observations. From Lemma~\ref{lem:RI_off_dot}, we have
that
\[
  S_I
  = \bigcup_{i \text{ a state}} 
  \left\{ \pmat{ \vect{i, 0} \\ \vdots \\ \vect{i, 0} }
          \pmat{ \vect{i, \Er} \\ \vdots \\ \vect{i, \Er} } \right\}.
\]
We will say that the element $a\in S$ \emph{avoids} $I$ if there is a term
operation $s$ without $I$ in its term tree such that $s(\overline{h}) = a$.
From Proposition~\ref{prop:I_avoidance} and our observation about $S_I$
above, we have that if $b\in S\cap Y^{m-1}$ and $b$ does not avoid $I$ then
$\Content(b) = \{0\}$. We are now ready to prove the claim. 

As usual, we will proceed by induction. Let $G_0 = \cl{E}$ be the generators
of $\SS$ and
\[
  G_n
  = \Big\{ F(\overline{b}) \mid F \text{ a fundamental $\ell$-ary
      operation},\ \overline{b}\in G_{n-1}^\ell \Big\}
    \cup G_{n-1}.
\]
Choose $n$ minimal such that there is $a\in G_n$ with $a(\neq k)\in C^{m-1}$
and $a(k)\in X$ (from the paragraph prior to the claim, we know that
$t(\overline{h})$ is such an element). If $a$ avoids $I$ then we are done,
so assume that $a$ does not avoid $I$. We will prove that there exists an
element $a'\in S$ which avoids $I$ and has $a'(\neq k)\in C^{m-2}$ and
$a'(k)\in X$. If $a\in G_0 = \cl{E}$ then $a$ avoids $I$, so we are done.
Assume that $n>0$, so
\[
  a = F\big( b_1,\dots,b_\ell \big)
\]
for some $\ell$-ary operation $F$ and elements $b_1,\dots,b_{\ell}\in
G_{n-1}$. We proceed by cases depending on which operation $F$ is. The cases
for $F = I$ and $F = P$ are quite straightforward (using $\D(\SS) =
\emptyset$ for $F = I$), and so we omit them.

\Case{$F \in \{\wedge, M, N_{\Dot}\}$}
If $a = b \wedge c$ then $b(\neq k) = c(\neq k) = a(\neq k)\in C^{m-1}$, so
by minimality of $n$ we have $b(k),c(k)\in C$ and $b(k)\neq c(k)$. Hence
$b,c\in Y^{m-1}$. The element $a$ does not avoid $I$, so one of $b$ or $c$
does not. Without loss of generality, suppose that $b$ does not avoid $I$.
From Proposition~\ref{prop:I_avoidance} and the observation about $S_I$
above, it follows that $\Content(b) = \{0\}$. Since $c(\neq k) = b(\neq k)$
and $c(k)\neq b(k)$, it must be that $\Content(c(\neq k)) = \{0\}$ and
$\Content(c(k))\neq 0$. It follows that $c\not\in S_I$ and thus $c$ avoids $I$.
Let $w_{\State(c)}$ be the term from Lemma \ref{lem:basic_facts_AM} item (5).
The element $a' = w_{\State(c)}(c)$ therefore avoids $I$ and has $a'(\neq k)\in
C^{m-2}$ and $a'(k)\in X$. The analysis for $F = M$ is almost identical and $F =
N_{\Dot}$ similarly reduces since $\D(\SS) = \emptyset$.

\Case{$F\in \{M', H\}$}
If $a = M'(b)$ then $b$ does not avoid $I$ and by minimality of $n$ we have
$b\in C^{m-1}$. By Proposition~\ref{prop:I_avoidance} and the observations
about $S_I$ above, it must be that $\Content(b) = \{0\}$, but then
$M'(b)(\neq k)\in C^{m-1}$ and $M'(b)(k)\in X$ is impossible. The case for
$F = H$ is similar.

\Case{$F = N_0$}
If $a = N_0(b,c,d)$ then $b(\neq k)\in \{ \vect{0,0} \}^{m-1}$, $d(\neq k) =
a(\neq k)\in C^{m-2}$, and either $d(k)\in X$ or $b(k) \neq \vect{0,0}$. The
possibility where $d(k)\in X$ contradicts the minimality of $n$, so it must be
that $b(k)\neq \vect{0,0}$. It follows that $b\not\in S_I$. If $b(k)\in X$ then
the minimality of $n$ is contradicted again, so it must be that $b\in Y^{m-1}$
and hence avoids $I$. The element $a' = w_0(b)$ (where $w_0$ is from Lemma
\ref{lem:basic_facts_AM} item (5)) therefore avoids $I$ and has $a'(\neq k)\in
C^{m-2}$ and $a'(k)\in X$.

\Case{$F = S$}
If $a = S(b,c,d)$ then $b(\neq k) = c(\neq k) = d(\neq k) = a(\neq k) \in \{
\vect{1,0} \}^{m-2}$ and one of $b(k),c(k),d(k)$ is not equal to $\vect{1,0}$.
The analyses for each of these possibilities are quite similar, so we will
examine $b(k)\neq \vect{1,0}$ and leave the others to the reader. By the
minimality of $n$ we have $b(k)\not\in X$, so $b\in Y^{m-1}$ and
$\Content(b(k))\neq 0$. Thus $b\not\in S_I$ and we have that $b$ avoids $I$. The
element $a' = w_1(b)$ satisfies the claim, where $w_1$ is the term from Lemma
\ref{lem:basic_facts_AM} item (5).

\smallskip

In all cases, we have produced an element $a'\in S$ which avoids $I$ and has
$a'(\neq k)\in C^{m-2}$ and $a'(k)\in X$, proving the claim.
\end{claimproof}

Apply the above claim to the term operation $t$ to produce a new term
operation $s$ without $I$ in its term tree such that $s(h_1,\dots,h_n)(\neq
k) \in C^{m-2}$ and $s(h_1,\dots,h_n)(k)\in X$. Since $s$ does not have $I$
in its term tree, it respects $\Gamma$, and so as in
the proof of Theorem~\ref{thm:few_D_entailment} we obtain
\[
  s(g_1,\dots,g_n)(\neq k)\in C^{m-1}
  \qquad\text{and}\qquad
  s(g_1,\dots,g_n)(k)\in X.
\]
Let $r = s(g_1,\dots,g_n)$. As in the fourth paragraph of the proof, it
follows that $N_0(z_0(r),\alpha_k,\alpha_k) = \alpha$, so $\alpha\in R$.
This completes the proof.
\end{proof} 

\subsection{Entailment for everything else}
Finally, we prove that relations not ruled out by the previous entailment
theorems are also entailed. This is the result that we have been building
towards. We begin by proving an extension of
Proposition~\ref{prop:I_avoidance}.

\begin{prop} \label{prop:double_I_avoidance}  
Let $\RR\leq \AAM^m$ be computational non-halting. If $\alpha\in Y^m$ is
such that
\begin{itemize}
  \item $K = \N(\RR) \cup \big\{ i\mid \RR \text{ has } 
    \ApproxI(\alpha, i) \big\}$, and

  \item $\big| [m]\setminus K \big| \geq 3$
\end{itemize}
then for all $k\not\in K$ there is an $\ell\not\in K\cup \{k\}$ such that
$\alpha(\neq k,\ell)\not\in \big( \RR(\neq k,\ell) \big)_I$.
\end{prop}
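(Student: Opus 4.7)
The plan is to argue by contradiction: assume for every $\ell \in U := [m] \setminus (K \cup \{k\})$ that $\alpha(\neq k, \ell) \in (\RR(\neq k, \ell))_I$, and derive $\alpha(\neq k) \in R_I(\neq k)$---i.e., $\ApproxI(\alpha, k)$ in $\RR$---contradicting $k \notin K$. First, observe the structural setup: since $[m] \setminus \D(\RR) \subseteq \N(\RR) \subseteq K$ by Proposition~\ref{prop:inherent_nonhalt}, both $k$ and every $\ell \in U$ lie in $\D(\RR) \setminus \N(\RR)$, and together with $\D(\RR) \cap \N(\RR) \neq \emptyset$ this forces $|\D(\RR)| \geq |U| + 2 \geq 4$. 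For each $\ell \in U$, the projection $\RR(\neq k, \ell)$ is non-halting (it projects onto the non-halting $\RR(\N(\RR))$), so by Lemma~\ref{lem:RI}(3) we have $D_\ell := \D((\RR(\neq k, \ell))_I) = \D(\RR(\neq k, \ell)) \cap \H(\RR(\neq k, \ell))$, with $|D_\ell| \leq \kappa$ by Lemma~\ref{lem:capacity}(4).

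Next, I would apply Lemma~\ref{lem:RI_off_dot} on both sides. With $\alpha \in Y^m$ ruling out the all-$\Er$ alternative, the assumption $\alpha(\neq k, \ell) \in (\RR(\neq k, \ell))_I$ unpacks, for each $\ell \in U$, to $\alpha(D_\ell) \in (\RR(\neq k, \ell))_I(D_\ell)$ together with $\alpha$ having content $0$ on $([m] \setminus \{k, \ell\}) \setminus D_\ell$. Writing $D_I := \D(\RR_I)$ and applying the same lemma to $\RR_I$ projected off coordinate $k$, the target $\alpha(\neq k) \in R_I(\neq k)$ becomes $\alpha(D_I) \in R_I(D_I)$ together with $\alpha$ content $0$ on $([m] \setminus \{k\}) \setminus D_I$. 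Because $\RR_I$ projects into $(\RR(\neq k, \ell))_I$ and $k, \ell \notin D_I$ (as $D_I \subseteq \N(\RR) \subseteq K$), we immediately obtain $D_I \subseteq D_\ell$.

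The heart of the argument is to prove the reverse inclusion $D_\ell \subseteq D_I$. Once this is in hand, Lemma~\ref{lem:RI}(4) gives $R_I(D_I) = \SS_{|D_I|} = (\RR(\neq k, \ell))_I(D_I)$, so the condition $\alpha(D_\ell) \in (\RR(\neq k, \ell))_I(D_\ell)$ projects onto $\alpha(D_I) \in R_I(D_I)$; and combining the content-$0$ conditions for two distinct $\ell, \ell' \in U$ (available since $|U| \geq 2$) covers every coordinate of $([m] \setminus \{k\}) \setminus D_I$, closing the contradiction.

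The hard part is therefore to show $D_\ell \subseteq D_I$: for a hypothetical $j \in D_\ell \setminus D_I$ we have $j \in \D(\RR)$ with $\RR(\neq j, k, \ell)$ halting but $\RR(\neq j)$ non-halting, and we need to extract a contradiction. The plan is to take an all-$C$ element of $\RR(\neq j, k, \ell)$ (which exists by Lemma~\ref{lem:capacity}(2)), lift it to some $\tilde r \in \RR(\neq j)$, and cleanse its values at $k$ and $\ell$ down to content $0$. The leverage is Lemma~\ref{lem:inherent_properties}(2), which forces any $D$-coordinate outside $\N(\RR)$ to be paired with an $X$-coordinate inside $\N(\RR)$; combined with $k, \ell \notin \N(\RR)$ and a judicious composition of $I$, $H$, and $N_0$ applied to $\tilde r$ and auxiliary elements $\alpha_i \in \RR_I$ (for $i \in K \setminus \N(\RR)$ where $\ApproxI(\alpha, i)$ holds), this should produce an all-$C$ element of $\RR(\neq j)$, contradicting $j \notin \H(\RR)$ via Lemma~\ref{lem:capacity}(2). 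Threading this needle between the halting structure and the unconstrained content at $k$ and $\ell$ is precisely where the delicate operation definitions of $\AAM$ earn their keep.
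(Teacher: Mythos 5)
The route you take is genuinely different from the paper's, and it stalls at exactly the point you flag. The paper argues directly: since $\alpha\notin R_I$ (else $K=[m]$), Lemma~\ref{lem:RI_off_dot} gives a dichotomy --- either $\alpha(K)\notin R_I(K)$, in which case \emph{every} pair $k,\ell\notin K$ works immediately, or there is a coordinate $i\notin K$ with $\Content(\alpha(i))\notin\{0,\Er\}$. In the second case, because $\RR$ fails $\ApproxI(\alpha,i)$ (as $i\notin K$), there must in fact be a \emph{second} bad-content coordinate $j\notin K$, so for any $k\notin K$ one of $i,j$ survives and, using $\lvert[m]\setminus K\rvert\geq 3$, one can choose $\ell$ keeping that bad-content coordinate inside $[m]\setminus\{k,\ell\}$; Lemma~\ref{lem:RI_off_dot} applied to $\RR(\neq k,\ell)$ then finishes. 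No appeal to the structure of $\D((\RR(\neq k,\ell))_I)$ or to halting arguments is needed.

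Your proposal by contrast runs a contradiction argument and has to prove the non-trivial subclaim $D_\ell\subseteq D_I$, which you reduce to: for $j\in\D(\RR)\setminus\{k,\ell\}$ with $\RR(\neq j,k,\ell)$ halting, also $\RR(\neq j)$ is halting. This is where the gap is. Your reasoning correctly disposes of $j\notin\N(\RR)$ via Lemma~\ref{lem:inherent_properties}(2), but when $j\in\N(\RR)$ (which is exactly where $D_\ell\setminus D_I$ could live, since $D_\ell\subseteq\N(\RR)\cap\D(\RR)$), the witnessing $X$-coordinate produced by that lemma may itself be $j$ and so vanishes when you project off $j$. Your proposed ``cleansing'' of the entries at $k,\ell$ is left as a sketch (``a judicious composition of $I$, $H$, and $N_0$\ldots should produce\ldots''), and it is not at all clear it can succeed: applying $I$ or $H$ to the lifted vector leaves $\vect{1,\Er}$ or $\vect{0,\Er}$ at any coordinate where the lift lands in $X$, and the $X$-absorbing operations of $\AAM$ propagate $\Er$ rather than erase it, so there is no obvious way to replace those entries with content $0$ using only elements of $R$. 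You also implicitly need $R_I(K)$ and $(\RR(\neq k,\ell))_I$ to interact correctly across several projections, which is another source of friction your sketch does not address. So the idea as written does not yet constitute a proof, and I would encourage you to look at the characterization in Lemma~\ref{lem:RI_off_dot} more directly, as the paper does, rather than trying to push through the halting-structure comparison.
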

\begin{proof}
Let $k\not\in K$, $\D_I = \D(\RR_I)$, and $L = [m]\setminus \D_I$. By
Lemma~\ref{lem:inherent_properties} item (3), we have that $k\not\in \D_I$.
Furthermore, $\alpha\not\in R_I$ since otherwise we would have $[m] = K$.
Since $\alpha\not\in R_I$ and $\alpha\in Y^m$, by Lemma~\ref{lem:RI_off_dot}
we have that either
\begin{itemize}
  \item $\alpha(K)\not\in R_I(K)$ or
  \item $\Content(\alpha(i))\not\in \{0,\Er\}$ for some $i\not\in K$.
\end{itemize}
If we are in the first situation then for \emph{all} $k,\ell\not\in K$ we
have $\alpha(\neq k,\ell)\not\in \big( R(\neq k,\ell) \big)_I$. Assume
therefore that $\alpha(K)\in R_I(K)$ and that we are in the second
situation. Fix $i\not\in K$ such that $\Content(\alpha(i))\not\in
\{0,\Er\}$. From the definition of $K$, it is not possible for $\RR$ to have
$\ApproxI(\alpha,i)$, so it must be that there is some $j\not\in K$ distinct
from $i$ such that $\Content(\alpha(j))\not\in \{0,\Er\}$. We have that
$|[m]\setminus K|\geq 3$, so it follows that for every $k\not\in K$ there is
an $\ell\not\in K$ distinct from $k$ such that $\Content(\alpha(\ell))
\not\in \{0,\Er\}$ (just choose $\ell = i$ or $\ell = j$). By
Lemma~\ref{lem:RI_off_dot} and since $k,\ell\not\in K$, this is enough to
give us $\alpha(\neq k,\ell)\not \in \big( R(\neq k,\ell) \big)_I$.
\end{proof}  

The next three lemmas are technical, but form the core of the argument in
the entailment theorem in this section. The first of these technical lemmas
is a kind of extension of Lemma~\ref{lem:f_E_monotone}.

\begin{lem} \label{lem:f_weak_monotone}  
Assume the following:
\begin{itemize}
  \item $t$ is an $n$-ary term operation,

  \item $G = \{g_1,\dots,g_n\} \subseteq \AM^m$ and 
    $\cl{E} = \{ e_1,\dots,e_n \} \subseteq \AM^m$,

  \item $\RR = \Sg_{\AAM^m}(G)$ is computational non-halting,

  \item $\SS = \Sg_{\AAM^m}(\cl{E})$ is computational,

  \item $k\in [m]$,

  \item for each $i\in [n]$ we have $e_i(\neq k) = g_i(\neq k)$ and $e_i(k)
    \leq g_i(k)$, and

  \item $t(\overline{e}) = \alpha$ and $\alpha(k)\in Y$.
\end{itemize}
Then there exists a term operation $s$ such that $\alpha\leq
s(\overline{g})$.
\end{lem}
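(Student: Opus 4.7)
I will induct on the complexity of $t$, using throughout the fact that $e_i(j) = g_i(j)$ for $j\neq k$ implies $t'(\overline{e})(j) = t'(\overline{g})(j)$ for every subterm $t'$ and every $j\neq k$, so all the subtle behavior concentrates at coordinate $k$.

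The base case is $t = \pi_i$ a projection: then $\alpha = e_i$ with $\alpha(k)\in Y$, and combining $e_i(k)\leq g_i(k)$ with the fact that a $Y$-element is maximal in any $\leq$-chain through it gives $e_i(k) = g_i(k)$, whence $\alpha = g_i$ and $s = \pi_i$ works. For the inductive step write $t = F(t_1,\dots,t_\ell)$ with $\alpha_j = t_j(\overline{e})$, and proceed by cases on $F$. When $F$ is $X$-absorbing (i.e.\ $F\in\{\wedge,M,M',H,S\}$ by Lemma~\ref{lem:basic_facts_AM}(2)), the hypothesis $\alpha(k)\in Y$ forces each $\alpha_j(k)\in Y$, so the IH supplies $s_j$ with $\alpha_j\leq s_j(\overline{g})$ (in particular $s_j(\overline{g})(k) = \alpha_j(k)$), and $s := F(s_1,\dots,s_\ell)$ does the job: at $Y$-coordinates of $\alpha$ the $X$-absorbing property forces all inputs to lie in $Y$ and hence to equal the $\alpha_j$-values (by IH), while at $X$-coordinates the $\State$ homomorphism (Lemma~\ref{lem:basic_facts_AM}(1)) keeps states aligned and the $X$-element is the unique minimum within its state-class, so $s(\overline{g})\geq \alpha$ pointwise. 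When $F = P$, synchronization of $\alpha_1,\alpha_2$ makes the state-comparison global and Lemma~\ref{lem:basic_facts_relns}(3) gives $\alpha\in\{\alpha_3,\alpha_4\}$, so one call to the IH finishes. When $F = N_{\Dot}$, Lemma~\ref{lem:basic_facts_relns}(2) applied in the computational $\SS$ gives $\alpha\leq\alpha_j$ for some $j\in\{2,3\}$; since $\alpha(k)\in Y$ and $\alpha\leq\alpha_j$, we deduce $\alpha_j(k) = \alpha(k)\in Y$, so the IH applied to $t_j$ yields $s_j$ with $\alpha\leq\alpha_j\leq s_j(\overline{g})$.

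The remaining cases $F\in\{I, N_0\}$ are the subtle ones, since Lemma~\ref{lem:basic_facts_relns}(1) requires a non-halting ambient relation and $\SS$ need not be non-halting. Here we argue directly from the definitions of $I$ and $N_0$: the value of $\alpha(k)\in Y$ pins down which branch of the conditional is taken at coordinate $k$ (ruling out the $\X$-branch), and then the helper terms $z_i$ and $w_i$ from Lemma~\ref{lem:basic_facts_AM}(4)--(5), together with the IH applied to whichever $t_j$ has $\alpha_j(k)\in Y$, allow us to assemble $s$. The main obstacle is that the $s_j$ produced by the IH may introduce a $D$-coordinate into $s_j(\overline{g})$ that $\alpha_j$ lacks, which in the $I$-case would cause $I(s_1,s_2)$ applied to $\overline{g}$ to produce a spurious $\vect{1,\Dot}$ incomparable with the $\vect{1,0}$ of $\alpha$ at that coordinate; this we handle by meeting $s_1(\overline{g})$ with a suitably chosen element of $R\cap E^m$, whose existence follows from $\RR$ being computational non-halting via Proposition~\ref{prop:inherent_nonhalt} and the $\N(\RR)$ construction.
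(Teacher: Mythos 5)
Your outline is the same as the paper's — induct on the complexity of $t$ and split on the outermost operation — and your base case, your $X$-absorbing case ($\wedge, M, M', H, S$), and the cases $P$ and $N_\Dot$ are essentially correct. One premise you resist is actually true: $\SS$ \emph{is} non-halting here (run the induction first with the stronger hypothesis $\alpha\in Y^m$; any halting vector of $\SS$ would then be forced into $R$, contradicting that $\RR$ is non-halting), so $N_0$ can be grouped with $\wedge$, $N_\Dot$, $P$ via Lemma~\ref{lem:basic_facts_relns}(1) and needs no special treatment.

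The genuine gap is the $F=I$ case. You correctly diagnose the obstacle: the $s_1$ from the inductive hypothesis may introduce a spurious $D$-entry at a coordinate $p\neq k$ where $f_1(\overline{e})(p)\in X$, and then $I(s_1(\overline{g}),s_2(\overline{g}))(p)=\vect{1,\Dot}$ while $\alpha(p)\in\{\vect{1,0},\vect{1,\Er}\}$, which are incomparable. But your proposed repair — meeting $s_1(\overline{g})$ with some $w\in R\cap E^m$ — destroys the genuine $D$-entry along with the spurious one. Concretely, whenever $\Dot\in\Content(\alpha)$ there is $j$ with $\alpha(j)=\vect{1,\Dot}$, hence $s_1(\overline{g})(j)\in D$; since $w(j)\in E$, $(s_1(\overline{g})\wedge w)(j)\in X$, so $I(s_1(\overline{g})\wedge w,\,\cdot\,)(j)\in\{\vect{1,0},\vect{1,\Er}\}$ no longer dominates $\alpha(j)$. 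A meet cannot simultaneously remove the spurious $\Dot$ and keep the genuine one.

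The idea you are missing is to bypass the inductive hypothesis for \emph{one} of the two arguments of $I$ and use $f_i(\overline{g})$ verbatim, exploiting that $\overline{e}$ and $\overline{g}$ agree off coordinate $k$. Split on $\alpha(k)$. If $\alpha(k)\in D$, apply the IH only to $f_1$ to get $h_1$; then $h_1(\overline{g})$ and $f_1(\overline{e})$ have the same unique $D$-coordinate $k$ (both $\RR$ and $\SS$ are computational), and since $I$ depends on its second argument only at non-$\Dot$ coordinates of the first — i.e.\ off $k$, where $f_2(\overline{e})=f_2(\overline{g})$ — one gets $\alpha=I(h_1(\overline{g}),f_2(\overline{g}))$ on the nose. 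If $\alpha(k)\in C$, apply the IH only to $f_2$ to get $h_2$; when $\Dot\notin\Content(\alpha)$ take $s=I(h_2,h_2)$, and when the unique $\Dot$-coordinate of $\alpha$ is some $j\neq k$, note $f_1(\overline{g})(j)=f_1(\overline{e})(j)\in D$, so $f_1(\overline{g})(k)\notin D$ by computationality, and $\alpha\leq I(f_1(\overline{g}),h_2(\overline{g}))$.
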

\begin{proof}
We begin with a less formal statement of the lemma. View $G$ and $\cl{E}$ as
$m\times n$ matrices. We obtain $\cl{E}$ from $G$ by replacing the content
of some entries in the $k$-th row with $\Er$. The lemma asserts that if
$\alpha\in S$ has row $k$ in $Y$ then it is less than or equal to some
element in $\RR$.

Observe that $\RR$ being non-halting implies $\SS$ is non-halting. As usual,
the proof shall be by induction on the complexity of $t$. If $t$ is a
projection then $\alpha = e_i$ for some $i$, so $\alpha \leq g_i$. Assume
now that
\[
  t(\overline{x})
  = F\big( f_1(\overline{x}), \dots, f_\ell(\overline{x}) \big)
\]
for some $\ell$-ary fundamental operation $F$ and $n$-ary term operations
$f_i$. We will proceed by cases depending on $F$.

\Case{$F\in \{\wedge, N_0, N_\Dot, P\}$}
Such $F$ have the property that $F(\overline{a})\leq a_i$ for some $a_i$
among the $\overline{a}$, by Lemma~\ref{lem:basic_facts_relns} and since
$\SS$ and $\RR$ are computational non-halting. If $\alpha =
F(f_1(\overline{e}), \dots, f_\ell(\overline{e})) \leq f_i(\overline{e})$
then $f_i(\overline{e})(k) = \alpha(k)\in Y$, so the inductive hypothesis
applies. Thus there is $h_i$ such that $\alpha \leq f_i(\overline{e}) \leq
h_i(\overline{g})$.

\Case{$F\in \{M, M', H, S\}$}
Such $F$ are $X$-absorbing, by Lemma~\ref{lem:basic_facts_AM} item (2).
Therefore, if $\alpha = F(f_1(\overline{e}), \dots, f_\ell(\overline{e}))$
then $f_i(\overline{e})(k)\in Y$ for all $i$. The inductive hypothesis
applies, so there are $h_i$ such that $f_i(\overline{e}) \leq
h_i(\overline{g})$. It follows that $\alpha\leq F(h_1(\overline{g}), \dots,
h_\ell(\overline{g}))$.

\smallskip

The remaining (and most complicated) case is $F = I$. Suppose that $\alpha =
I(f_1(\overline{e}), f_2(\overline{e}))$. Since $\alpha(k)\in Y$, either
$\alpha(k)\in D$ or $\alpha(k)\in C$. We will examine these possibilities in
their own cases.

\Case{$F = I$, $\alpha(k)\in D$}
If $\alpha(k) \in D$ then $f_1(\overline{e})(k)\in D$, so there is a term
operation $h_1$ such that $f_1(\overline{e}) \leq h_1(\overline{g})$. This
implies $h_1(\overline{g})(k)\in D$. Since $\RR$ is computational and $I$
depends on its first input only at those coordinates with content $\Dot$, we
have $\alpha = I(h_1(\overline{g}), f_2(\overline{g}))$.

\Case{$F = I$, $\alpha(k)\in C$}
If $\alpha(k)\in C$ then $f_2(\overline{e})(k)\in C$, so there is a term
operation $h_2$ such that $f_2(\overline{e}) \leq h_2(\overline{g})$. If
$\Dot\not\in \Content(\alpha)$ then $\alpha\leq I(h_2(\overline{g}),
h_2(\overline{g}))$, and we are done. If, on the other hand, $\Dot\in
\Content(\alpha)$ then there is $j\neq k$ such that $\alpha(j)\in D$. This
implies that $f_1(\overline{e})(j)\in D$. It follows that
$f_1(\overline{g})(k)\not\in D$ since $\RR$ is computational. From the
definition of $I$ we now have $\alpha\leq I(f_1(\overline{g}),
h_2(\overline{g}))$.
\end{proof} 
\begin{lem} \label{lem:main_entailment_content} 
Assume the following:
\begin{itemize}
  \item $G = \{g_1,\dots,g_n\}\subseteq \AM^{p-1}$,

  \item $\RR = \Sg_{\AAM^{p-1}}(G)$ is computational non-halting,

  \item $t$ is an $n$-ary term operation without $I$ in its term tree,

  \item $t(g_1,\dots,g_n) = \alpha\in Y^{p-1}$, and

  \item $k\not\in \N(\RR)$ is such that $\alpha(k)\not\in D$.
\end{itemize}
Define elements of $e_i\in \AM^p$ for $i\in [n]$ by
\begin{align*}
  e_i(j) &= \begin{cases}
    g_i(j) & \text{if } j\in [p-1], \\
    g_i(k) & \text{if } j = p \text{ and }
      \Content(g_i(k)) \in \{ \Content(\alpha(k)), \Dot \}, \\
    \X(g_i(k)) & \text{otherwise},
  \end{cases} \\
  \beta(j) &= \begin{cases}
    \alpha(j) & \text{if } j\in [p-1], \\
    \alpha(k) & \text{if } j = p.
  \end{cases}
\end{align*}
Then $t(e_1,\dots,e_n) = \beta$.
\end{lem}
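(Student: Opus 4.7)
The plan is to prove by structural induction on the term $t$ a strengthened claim covering every subterm $s$ of $t$: setting $r = s(g_1, \ldots, g_n) \in R$ and $r' = s(e_1, \ldots, e_n) \in \AM^p$, we have $r'(j) = r(j)$ for $j \in [p-1]$, and $r'(p)$ is determined from $r(k)$ by the same rule used to define $e_i(p)$ from $g_i(k)$; namely, $r'(p) = r(k)$ when $\Content(r(k)) \in \{\Content(\alpha(k)), \Dot\}$, and $r'(p) = \X(r(k))$ otherwise. Applied to $s = t$, since $\Content(\alpha(k))$ trivially lies in $\{\Content(\alpha(k)), \Dot\}$, this yields $t(\overline{e})(p) = \alpha(k) = \beta(p)$, which together with the coordinatewise agreement on $[p-1]$ is exactly the lemma.

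The base case, where $s$ is a projection $x_i$, is exactly the definition of $e_i$. For the inductive step, write $s = F(s_1, \ldots, s_\ell)$; since $t$ has no $I$ in its term tree, $F \in \{\wedge, M, M', H, N_0, S, N_\Dot, P\}$. Setting $a_j = s_j(\overline{g})(k)$ and $b_j = s_j(\overline{e})(p)$, the inductive hypothesis gives $b_j \in \{a_j, \X(a_j)\}$ with the choice dictated by $\Content(a_j)$. The task is to verify that $F(b_1,\ldots,b_\ell)$ relates to $F(a_1,\ldots,a_\ell)$ in the prescribed way, by case analysis on the definition of each $F$.

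The main obstacle is that the local claim can appear to fail for $M$ in configurations like $a_1 = \vect{i, \Dot}$, $a_2 = \vect{i, 0}$ with $\Content(\alpha(k)) = R$: here $b_1 = a_1$ and $b_2 = \vect{i, \Er}$, yielding $M(a_1,a_2) = \vect{j, R}$ but $M(b_1,b_2) = \vect{j, \Er}$, which does not fit the prescribed relationship. This is where the hypothesis $k \not\in \N(\RR)$ becomes essential: via Lemma~\ref{lem:inherent_properties}(2), the vector $s_1(\overline{g}) \in R$ (which has $\Dot$ at coordinate $k$) must have an $X$-entry at some coordinate $j^* \in \N(\RR) \subseteq [p-1]$. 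Tracing this $X$-entry through the structure of $s_1$ (which, being $I$-free, must contain an $X$-producing subterm to manifest it) shows that the offending configuration cannot actually arise: the inherited $X$ at $j^*$ forces some $g_i(k)$ appearing as an input to $s_1$ to itself lie in $X$, which constrains the possible values of $a_1$ and $b_1$ and restores the predicted relationship at coordinate $p$.

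The remainder is a careful case-by-case verification for each of the eight non-$I$ operations, using $X$-absorption of $\wedge, M, M', H, S$ (Lemma~\ref{lem:basic_facts_AM}(2)), the ``flat modulo $X$'' order induced by $\wedge$, the fact that $N_0, N_\Dot, P$ return one of their arguments in all non-degenerate cases (Lemma~\ref{lem:basic_facts_relns}), and the state-only dependence of the ``otherwise'' clauses of $M$, $M'$, $N_0$, and $N_\Dot$. Each subcase reduces to checking that the content-matching rule in the definition of $e_i(p)$ is precisely what is needed for $F$'s output at coordinate $p$ to track $F$'s output at coordinate $k$.
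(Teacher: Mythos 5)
Your high-level plan is essentially the same as the paper's: a content-transfer invariant relating coordinate $p$ of $s(\overline{e})$ to coordinate $k$ of $s(\overline{g})$, with $k\notin\N(\RR)$ doing the crucial work for $M$ and $M'$. You have correctly located the central difficulty. However, the way you set up the induction contains a genuine gap.

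The invariant, asserted unconditionally for every subterm of $t$, is false, and the ``obstacle'' you describe is already an actual counterexample rather than an apparent one. In the situation you lay out (a subterm $s=M(s_1,s_2)$ with $s_1(\overline{g})(k)=\vect{i,\Dot}$, $s_2(\overline{g})(k)=\vect{i,0}$, $\Content(\alpha(k))=A$, and $(i,A,j)\in\M$) one really does get $s(\overline{e})(p)=\vect{j,\Er}\neq\vect{j,A}=s(\overline{g})(k)$, and nothing rules this out. Lemma~\ref{lem:inherent_properties} item (2) does give $s_1(\overline{g})(j^*)\in X$ for some $j^*\in\N(\RR)\subseteq[p-1]$, but that is a fact about coordinate $j^*$; it places no constraint whatsoever on coordinate $k$, and it certainly does not ``force some $g_i(k)$ to lie in $X$.'' Moreover, since $N_0$, $N_\Dot$ and $P$ are not $X$-absorbing, the stray $X$ at $j^*$ produced inside $s_1$ can be discarded higher in the term tree, so $s_1$ may genuinely contribute to a term $t$ that still satisfies $t(\overline{g})=\alpha\in Y^{p-1}$. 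Your proposed repair does not close the gap.

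The paper avoids this by inducting not on the term tree of $t$ but on the generation level of the single element $\beta_p=t(\overline{e})$ inside the $I$-free subalgebra $S'\leq\Sg_{\AAM^p}(\cl{E})$, choosing a factorisation $\beta_p=F(h_1,\dots,h_\ell)$ with the $h_i$ one level down. This shift is not cosmetic: it makes $\beta_p(\neq p)=\alpha\in Y^{p-1}$ available precisely where it is needed. For $F\in\{M,M'\}$ the $X$-absorption of $F$ pushes $Y^{p-1}$-membership into the arguments $a,b$, and only then does $k\notin\N(\RR)$ together with Lemma~\ref{lem:inherent_properties} item (2) force $a(k),b(k)\notin D$ --- the step you attempted to supply by ``tracing the $X$.'' For $F\in\{\wedge,N_0,P\}$ the paper instead uses $\beta_p\leq h_i$, so $h_i(\neq p)=\alpha$, again in $Y^{p-1}$. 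Your term-tree induction carries no analogous hypothesis on intermediate subterm values, so the inductive hypothesis is being invoked in situations where it is simply false.
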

\begin{proof}
We begin with a less formal statement of the lemma. View $G$ as a
$(p-1)\times n$ matrix. Copy row $k$ of this matrix and put it at the
bottom, making a $p\times n$ matrix. In row $p$ (the new row), for each
entry with content not either $\Dot$ or $\Content(\alpha(k))$, replace that
content with $\Er$. Call the resulting vectors $e_1,\dots,e_n$. The Lemma
asserts that if the copied row $k$ is not in $\N(\RR)$, $\alpha\in Y^{p-1}$,
and $\alpha(k)\not\in D$ then $t(\overline{e})$ is just the vector $\alpha$
with the $k$-th row copied to the bottom.

Let $\cl{E} = \{ e_i \mid i\in [n] \}\subseteq \AM^p$ and $\SS =
\Sg_{\AAM^p}(\cl{E})$. Let us make some observations about $\SS$:
\begin{itemize}
  \item $\SS(\neq p) = \RR$,

  \item $\SS$ \emph{need not} be computational, but it is synchronized,

  \item $\SS(\N(\RR)) = \RR(\N(\RR))$ is non-halting, so $\SS$ is
    non-halting as well,

  \item for each $s\in S$ there is an $\ell\in \N(\RR)$ such that
    $s(\ell)\in D\cup X$, by Lemma~\ref{lem:inherent_properties} item (1).
\end{itemize}
Now let us examine $\alpha$ and $\beta$. Let $\beta_p = t(e_1,\dots,e_n)$
and note that $\beta_p(\neq p) = \beta(\neq p) = \alpha$ and $\Dot\in
\Content(\beta_p(\N(\RR)))$ by the last item above.

Since $t$ does not have $I$ in its term tree, we will analyze the subset of
$S$ generated by $\cl{E}$ without using $I$ in the generation. Call this
subset $S'$. Let $G_0 = \cl{E}$ and
\[
  G_n
  = \Big\{ F(\overline{h}) \mid F \text{ a fundamental $k$-ary
      operation},\ F\neq I,\ \overline{h}\in G_{n-1}^k \Big\} \cup G_{n-1}.
\]
Note that $S' = \bigcup G_n$. Since $\beta_p\in S'$, there is a least $n$
such that $\beta_p\in G_n$. We will show that $\beta_p\in G_n$ implies
$\beta\in S'$ by induction on $n$. The set $G_0 = \cl{E}$ has this property
by definition of the $e_i$, establishing the base case. Suppose now that
$n>0$, so
\[
  \beta_p = F\big( h_1,\dots,h_{\ell} \big)
\]
for some $\ell$-ary fundamental operation $F$ and $h_1,\dots,h_{\ell}\in
G_{n-1}$. We break into cases based on $F$.

\Case{$F\in \{\wedge, N_0, P\}$}
In this case, by the various parts of Lemma~\ref{lem:basic_facts_relns}, we
have that $\beta_p = F(\overline{h})\leq h_i$ for some $h_i$. Since
$\beta_p(\neq p)\in Y^{p-1}$, this implies that $\beta_p(\neq p) = h_i(\neq
p)$, so the inductive hypothesis yields $\beta = h_i\in S'$.

\Case{$F\in \{H,S\}$}
Recall that $\Dot\in \Content(\beta_p(\N(\RR)))$. Since the range of $H$ and
$S$ are disjoint from $D$, $\beta_p$ cannot be the output of one of them.

\Case{$F = M$}
Say $\beta_p = M(a,b)$. We have that $a(\neq p), b(\neq p)\in Y^{p-1}$. If
$a(k)\in D$ then $\Er\in \Content(a(\N(\RR)))$ by
Lemma~\ref{lem:inherent_properties} item (1), contradicting $a(\neq p)\in
Y^{p-1}$. Hence $a(k)\not\in D$, and so from the definition of $M$ we have
$\Content(a(k)) = \Content(b(k)) = \Content(\beta_p(k))$. Since $\beta_p(k)
= \alpha(k)$, we can use the inductive hypothesis to conclude that
$\Content(a(p)) = \Content(b(p)) = \Content(\beta(p))$. Evaluating yields
$M(a,b) = \beta$, so $\beta\in S'$.

\Case{$F = M'$}
Say $\beta_p = M'(a)$. From the definition, we have $a(\neq p)\in Y^{p-1}$
and $\Content(a(k)) = \Content(\beta_p(k))$. Immediately before the start of
the induction we observed that $\Dot\in \Content(\beta_p(\N(\RR)))$. Since
$k\not\in \N(\RR)$, it follows that $a(k)\not\in D$. The inductive
hypothesis therefore applies to $a$, and we get $\beta = M'(a)$, so $\beta\in
S'$.

\Case{$F = N_\Dot$}
Let $\beta_p = N_\Dot(a,b,c,d)$. If $|a^{-1}(D)|\leq 1$ then $\beta_p\leq
b$ or $\beta_p\leq c$ by Lemma~\ref{lem:basic_facts_relns}. Without loss of
generality say $\beta_p\leq b$. Since $\beta_p(\neq p)\in Y^{p-1}$, we have
that $\beta_p(\neq p) = b(\neq p)$ and so the inductive hypothesis gives us
$\beta = b \in S'$. From the construction of $\SS$, the only other
possibility is that $a(k) = a(p)\in D$. From the definition of $N_{\Dot}$ we
have that $b(\neq k,p) = c(\neq k,p) = \beta_p(\neq k,p)$. Every element of
$S$ has content at coordinate $p$ in $\{\Content(\beta(k)),\Dot,\Er\}$. If
$\beta_p(p)\in D$ then $\Er\in \Content(\beta_p(\N(\RR)))$ by
Lemma~\ref{lem:inherent_properties} item (1), a contradiction. Let us assume
that $\beta_p(p)\in X$, since otherwise $\beta_p = \beta$. By similar logic
we have $b(p),c(p)\not\in D$, so $b(p) = c(p)\in X$. By the contrapositive
of the inductive hypothesis, both $\Content(b(k))$ and $\Content(c(k))$ are
distinct from $\Content(\beta_p(k))$. From the definition of $N_\Dot$, this
is only possible if $\beta_p(k)\in X$, a contradiction.
\end{proof}  
\begin{lem} \label{lem:main_entailment_X} 
Assume the following:
\begin{itemize}
  \item $G = \{g_1,\dots,g_n\}\subseteq \AM^{p-1}$,

  \item $\RR = \Sg_{\AAM^{p-1}}(G)$ is computational non-halting,

  \item $K$ is a set with $\N(\RR)\subseteq K$ and $|[p-1]\setminus K| \geq
    2$,

  \item $t$ is an $n$-ary term operation without $I$ in its term tree,

  \item $t(g_1,\dots,g_n) = \alpha\in Y^{p-1}$, and

  \item for all $i\in [n]$ and each $j\not\in K$ we have
    $\Content(g_i(j))\in \{ \Content(\alpha(j)), \Dot, \Er \}$.
\end{itemize}
Fix two distinct elements $\ell_1,\ell_2\not\in K$ and define elements of
$e_i\in \AM^p$ for $i\in [n]$ by
\begin{align*}
  e_i(j) &= \begin{cases}
    g_i(j) & \text{if } j\in [p-1], \\
    g_i(\ell_1) & \text{if } j = p \text{ and } g_i(\ell_1), g_i(\ell_2)\not\in D\cup X, \\
    g_i(\ell_1) & \text{if } j = p \text{ and } g_i(\ell_1)\in D, \\
    \X(g_i(\ell_1)) & \text{otherwise},
  \end{cases} \\
  \beta(j) &= \begin{cases}
    \alpha(j) & \text{if } j\in [p-1], \\
    \alpha(\ell_1) & \text{if } j = p.
  \end{cases}
\end{align*}
Then $t(e_1,\dots,e_n) = \beta$.
\end{lem}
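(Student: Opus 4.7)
The plan is to mirror the proof of Lemma~\ref{lem:main_entailment_content} closely, with the pair of coordinates $\ell_1, \ell_2 \notin K$ playing the role of the single coordinate $k \notin \N(\RR)$ there. Let $\cl{E} = \{e_1, \dots, e_n\}$ and $\SS = \Sg_{\AAM^p}(\cl{E})$. Then $\SS(\neq p) = \RR$ because $e_i(\neq p) = g_i$; $\SS$ is synchronized since $\RR$ is; and $\SS(\N(\RR)) = \RR(\N(\RR))$ is non-halting, so $\SS$ is non-halting as well. By Lemma~\ref{lem:inherent_properties} item (1), every $s\in S$ has some $j\in \N(\RR)$ with $s(j)\in D\cup X$. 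Note that $\SS$ need not be computational: if $g_i(\ell_1)\in D$, then $e_i$ has $D$ at both $\ell_1$ and $p$. Set $\beta_p = t(\overline{e})$; because $t$ uses no $I$ and $e_i(\neq p) = g_i$, we have $\beta_p(\neq p) = t(\overline{g}) = \alpha\in Y^{p-1}$, whence $\Dot\in \Content(\beta_p(\N(\RR)))$. The goal is to show $\beta_p(p) = \alpha(\ell_1)$, which is equivalent to $\beta_p(p) = \beta_p(\ell_1)$.

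Stratify the set $S'$ of elements generated from $\cl{E}$ without using $I$ as $G_0 = \cl{E}$ and $G_n = \{F(\overline{h}) \mid F\ \text{fundamental},\ F\neq I,\ \overline{h}\in G_{n-1}^{\text{arity}(F)}\}\cup G_{n-1}$. Since $t$ has no $I$ in its term tree, $\beta_p\in S'$. I will prove the following claim by induction on the minimal $n$ with $b\in G_n$: if $b\in G_n$ and $b(\neq p)\in Y^{p-1}$, then $b(p) = b(\ell_1)$. For the base case, $b = e_i$ forces $g_i\in Y^{p-1}$, so $g_i(\ell_1), g_i(\ell_2)\in Y$. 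The first two sub-cases of $e_i(p)$'s definition directly yield $e_i(p) = g_i(\ell_1) = e_i(\ell_1)$. The ``otherwise'' sub-case triggers only when $g_i(\ell_1)\in C$ and $g_i(\ell_2)\in D$; but then since $\ell_2\not\in K\supseteq \N(\RR)$, Lemma~\ref{lem:inherent_properties} item (2) forces some $j\in \N(\RR)$ with $g_i(j)\in X$, contradicting $g_i\in Y^{p-1}$.

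For the inductive step write $b = F(h_1,\dots,h_\ell)$ with $F\neq I$ fundamental and $h_i\in G_{n-1}$, and proceed by cases on $F$ following the corresponding case analysis of Lemma~\ref{lem:main_entailment_content}. The $X$-absorbing operations $\wedge, M, M', H, S$ (Lemma~\ref{lem:basic_facts_AM} item (2)) force each $h_i(\neq p)\in Y^{p-1}$, so the inductive hypothesis gives $h_i(p) = h_i(\ell_1)$ and coordinate-wise evaluation yields $b(p) = b(\ell_1)$; the cases $F\in \{H, S\}$ are additionally ruled out because $H, S$ take values in $E$ while $\Dot\in \Content(b(\N(\RR)))$. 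For $F = P$, Lemma~\ref{lem:basic_facts_relns} item (3) reduces $b$ to a single $h_i$ to which the inductive hypothesis applies. For $F = N_0$, Lemma~\ref{lem:basic_facts_relns} item (1) similarly reduces $b$ to the third argument when the output lies in $Y^p$.

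The hardest case is $F = N_\Dot$, handled as in the corresponding case of Lemma~\ref{lem:main_entailment_content}: when the first argument $a\in S$ has $|a^{-1}(D)|\leq 1$, Lemma~\ref{lem:basic_facts_relns} item (2) reduces $b$ to below one of the remaining arguments and the inductive hypothesis closes the case; the only other possibility, arising because $\SS$ is not computational, is $a(\ell_1) = a(p)\in D$, in which case one analyses $N_\Dot$'s definition at coordinate $p$ case-by-case, using the contrapositive of the inductive hypothesis together with $\Dot\in \Content(b(\N(\RR)))$ to rule out the ``$\X$-output'' scenarios. The main obstacle is exactly this bookkeeping for $N_\Dot$: one must verify that the duplicated coordinate $p$ behaves consistently with $\ell_1$ across every sub-case of $N_\Dot$'s definition, particularly in the non-computational scenario just described.
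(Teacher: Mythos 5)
Your proposal is correct and follows essentially the same approach as the paper's proof, which for this lemma consists mostly of the remark that it is ``similar to Lemma~\ref{lem:main_entailment_content}'' together with the $F = N_\Dot$ case; your explicit inductive claim (if $b\in G_n$ and $b(\neq p)\in Y^{p-1}$ then $b(p) = b(\ell_1)$) tidies up the paper's looser phrasing, and your base-case appeal to Lemma~\ref{lem:inherent_properties} item (2) is exactly what is needed.

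One small inaccuracy is worth fixing in the $F = N_0$ case: you invoke Lemma~\ref{lem:basic_facts_relns} item (1), but that lemma is stated for computational relations and $\SS$ need not be computational here, and moreover you only know $b(\neq p)\in Y^{p-1}$, not $b\in Y^p$. In fact this case is vacuous: if $b = N_0(h_1,h_2,h_3)$ with $b(\neq p)\in Y^{p-1}$, then at every $j\in [p-1]$ the definition of $N_0$ forces $h_1(j)\in\{\vect{0,0},\vect{0,\Dot}\}$, and since $h_1(\neq p)\in R$ with $\RR$ computational non-halting this forces $h_1(\neq p) = (\vect{0,0},\dots,\vect{0,0})$; but then $h_1(j)\notin D\cup X$ for all $j\in\N(\RR)$, contradicting the observation you already recorded that every element of $S$ has some coordinate in $\N(\RR)$ lying in $D\cup X$. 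Stating the vacuity directly is cleaner than appealing to a lemma whose hypotheses are not met.
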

\begin{proof}
The proof is quite similar to the proof of
Lemma~\ref{lem:main_entailment_content}. The less formal statement of the
lemma is similar as well. View $G$ as a $(p-1)\times n$ matrix and fix a set
of coordinates $K$ such that outside of $K$ the content of the rows of $G$
is always in $\{\Content(\alpha(j)), \Dot, \Er\}$. Pick two such rows,
$\ell_1$ and $\ell_2$. Copy row $\ell_1$ to the bottom of the matrix, so
that it is now $p\times n$. For each entry in the new row (row $p$), if
above that entry at rows $\ell_1$ and $\ell_2$ we have content
$\Content(\alpha(k))\neq \Dot$ at row $\ell_1$ and content $\Er$ or $\Dot$
at row $\ell_2$ then replace the content of that entry in row $p$ with
$\Er$. Call the resulting vectors $e_1,\dots,e_n$. The lemma asserts that if
$\alpha\in Y^{p-1}$ then $t(\overline{e})$ is just the vector $\alpha$ with
the $\ell_1$-th row copied to the bottom.

Let $\cl{E} = \{ e_i \mid i\in [n] \}\subseteq \AM^p$ and $\SS =
\Sg_{\AAM^p}(\cl{E})$. The same observations made in the proof of
Lemma~\ref{lem:main_entailment_content} about $\SS$ hold here as well. The
most salient are that $\SS$ need not be computational and that for each
$s\in S$ there is $i\in \N(\RR)$ such that $s(i)\in D\cup X$. Let us now
examine $\alpha$ and $\beta$. Let $\beta_p = t(e_1,\dots,e_n)$ and note that
$\beta_p(\neq p) = \beta(\neq p) = \alpha$.

Since $t$ does not have $I$ in its term tree, we will analyze the subset of
$S$ generated by $\cl{E}$ without using $I$ in the generation. Call this
subset $S'$. Let $G_0 = \cl{E}$ and
\[
  G_n
  = \Big\{ F(\overline{g}) \mid F \text{ a fundamental $k$-ary
      operation},\ F\neq I,\ \overline{g}\in G_{n-1}^k \Big\} \cup G_{n-1}.
\]
Note that $S' = \bigcup G_n$. Since $\beta_p\in S'$, there is a least $n$
such that $\beta_p\in G_n$. We will show that $\beta_p\in G_n$ implies
$\beta\in S'$ by induction on $n$. The proof is similar to
Lemma~\ref{lem:main_entailment_content}. As in the proof of that lemma, the
base case is done by inspection of $\cl{E}$. The crux in the inductive step
is when $F = N_\Dot$, so we will leave the other cases for the reader.

\Case{$F = N_\Dot$}
Let $\beta_p = N_\Dot(a,b,c,d)$. If $|a^{-1}(D)|\leq 1$ then $\beta_p\leq b$
or $\beta_p\leq c$. In either case the inductive hypothesis gives us
$\beta\in S'$. If $|a^{-1}(D)| \geq 2$ then the only possibility is that
$a(\ell_1) = a(p)\in D$. From the definition of $N_{\Dot}$ we have that
$b(\neq \ell_1,p) = c(\neq \ell_1,p) = \beta_p(\neq \ell_1,p)$. Every
element of $S'$ has content at coordinate $p$ in
$\{\Content(\beta(\ell_1)),\Dot,\Er\}$. If $\beta_p(p)\in D$ then $\Er\in
\Content(\beta_p(\N(\RR)))$ by Lemma~\ref{lem:inherent_properties} item (1),
a contradiction. Let us assume that $\beta_p(p)\in X$, since otherwise
$\beta_p = \beta$. By similar logic we have $b(p),c(p)\not\in D$, so $b(p) =
c(p)\in X$. By the inductive hypothesis, we must have $b(\ell_1) =
c(\ell_1)\in X$. From the definition of $N_\Dot$, this forces
$\beta_p(\ell_1)\in X$ as well, a contradiction.
\end{proof}  
\begin{thm} \label{thm:main_entailment} 
Assume that $m \geq \kappa+16$ and
\begin{itemize}
  \item $\Rel_{\leq m-1}(\AAM)\Entails f$ and $f$ is $n$-ary,

  \item $G = \{g_1,\dots,g_n\}\subseteq \AM^m$,

  \item $\RR = \Sg_{\AAM^m}(G)$ is computational non-halting and
    $\big| [m]\setminus \D(\RR) \big|\leq 10$,

  \item $f(g_1,\dots,g_n) = \alpha\in Y^m$,

  \item there is at most one $k\not\in \N(\RR)$ such that
    $\ApproxI(\alpha,k)$.
\end{itemize}
Then $\alpha\in R$.
\end{thm}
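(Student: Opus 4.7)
The plan is to reduce the theorem to the technical lemmas~\ref{lem:main_entailment_content} and~\ref{lem:main_entailment_X} via the $I$-avoidance machinery. Set $K = \N(\RR) \cup \{i : \ApproxI(\alpha, i)\}$. Proposition~\ref{prop:inherent_nonhalt} gives $|\N(\RR) \cap \D(\RR)| \leq \kappa$ and $[m] \setminus \D(\RR) \subseteq \N(\RR)$, so combined with the hypothesis $|[m] \setminus \D(\RR)| \leq 10$ we obtain $|\N(\RR)| \leq \kappa + 10$. Adding at most one further element for $\ApproxI$ gives $|K| \leq \kappa + 11$, hence $|[m] \setminus K| \geq m - \kappa - 11 \geq 5$, comfortably meeting the hypothesis of Proposition~\ref{prop:double_I_avoidance}.

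Because $\RR$ is computational and $\Approx(\alpha, i)$ holds for every $i$, $\alpha$ has at most one coordinate in $D$: otherwise some $\alpha_i \in R$ would have two $D$-coordinates, contradicting computationality. Pick $k \in [m] \setminus K$ with $\alpha(k) \not\in D$, and invoke Proposition~\ref{prop:double_I_avoidance} to obtain $\ell \in [m] \setminus (K \cup \{k\})$ with $\alpha(\neq k, \ell) \not\in (\RR(\neq k, \ell))_I$; inspection of the proof of that proposition shows $\ell$ can simultaneously be chosen with $\alpha(\ell) \not\in D$. Since $\RR(\neq k, \ell) \in \Rel_{\leq m-1}(\AAM)$ is preserved by $f$, we get $\alpha(\neq k, \ell) \in R(\neq k, \ell)$; combining with the $I$-avoidance condition, Proposition~\ref{prop:I_avoidance} yields a term $t$ with no occurrence of $I$ in its term tree such that $t(g_1(\neq k, \ell), \ldots, g_n(\neq k, \ell)) = \alpha(\neq k, \ell)$.

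The next task is to promote this $(m-2)$-coordinate identity to one on all $m$ coordinates witnessing $\alpha \in R$. Because $t$ avoids $I$, Lemmas~\ref{lem:main_entailment_content} and~\ref{lem:main_entailment_X} are the appropriate tools: one applies them twice in succession, first to add coordinate $k$ back and then coordinate $\ell$, each time using an auxiliary coordinate in $[m] \setminus K$ whose $\alpha$-content matches the target. Such matching coordinates exist by pigeonhole on $[m] \setminus K$ (size at least $5$) together with $|\alpha^{-1}(D)| \leq 1$ and the synchronization of $\alpha$. Where needed, Proposition~\ref{prop:not_chi_entailed} is invoked upfront to reduce to the $\chi$-compatible case required by Lemma~\ref{lem:main_entailment_X}.

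The main obstacle is the bookkeeping in this final extension step: the two lemmas construct auxiliary vectors $\widetilde e_i \in \AM^m$ with $t(\widetilde e_1, \ldots, \widetilde e_n) = \alpha$, but one must verify that these $\widetilde e_i$ actually lie in $R$. Realizing them inside $R$ requires expressing each $\widetilde e_i$ through operations of $\AAM$ applied to $g_i$ together with the known witnesses $\alpha_i \in R$ from Definition~\ref{defn:approx_element} --- using operations such as $N_\Dot$ and $P$ to install the correct coordinate values while preserving membership in $R$. Tracking exactly how each auxiliary coordinate value is produced by the iterated lemma applications is the most delicate part of the argument.
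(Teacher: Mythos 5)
Your proposal gets the opening moves right (the set $K'=\N(\RR)\cup\{i:\ApproxI(\alpha,i)\}$, the bound $|[m]\setminus K'|\geq 5$, the observation that $\alpha$ has at most one $D$-coordinate, the use of Proposition~\ref{prop:double_I_avoidance} and Proposition~\ref{prop:I_avoidance} to extract an $I$-free term on $m-2$ coordinates), but the architecture of the rest is not the paper's, and as you yourself flag, it has a hole you cannot close.

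The decisive missing ingredient is that the paper's proof is a \emph{global induction on the number of $Y$-positions in $G$}, powered by Lemma~\ref{lem:f_weak_monotone} (weak monotonicity) and packaged as Claim~\ref{claim:main_entailment__monotone}: if you can produce a new generating set $\cl{E}$, obtained from $G$ by replacing some $Y$-entries with $X$-entries, with $f(\overline{e})=\alpha$, then $\alpha\in R$. Crucially, the auxiliary vectors produced by Lemmas~\ref{lem:main_entailment_content} and~\ref{lem:main_entailment_X} are \emph{never} required to lie in $R$, and in general they will not: they are deliberately constructed to have \emph{strictly fewer} $Y$-entries than $G$, precisely so the inductive hypothesis applies to the relation they generate, and Lemma~\ref{lem:f_weak_monotone} then lifts $\alpha\in S=\Sg(\cl{E})$ back to $\alpha\in R$. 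Your plan --- to realize the $\widetilde e_i$ inside $R$ via operations like $N_\Dot$ and $P$ applied to the $g_i$ and $\alpha_i$ --- is both unnecessary in the paper's scheme and essentially false: a vector obtained by duplicating and $\X$-degrading one coordinate of $g_i$ has no reason to be a member of $R$, and you offer no mechanism for it. Without Lemma~\ref{lem:f_weak_monotone} (which does not appear anywhere in your proposal) there is no way to pass from ``$\alpha$ is generated by the $e_i$'' to ``$\alpha\in R$.''

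Two further points. First, your $K$ is only the paper's $K'$; the paper augments $K'$ to a larger $K$ (adding those $k$ for which $\alpha(\neq k,\ell)\notin(R(\neq k,\ell))_I$ for \emph{all} $\ell\notin K'$), and this augmentation is what guarantees uniform access to $I$-avoiding terms in Claims~\ref{claim:main_entailment__content} and~\ref{claim:main_entailment__X}. Second, your proposal has no endgame: even after the constraints on $G$ outside $K$ are established, the paper still needs Claim~\ref{claim:main_entailment__dot}, the partition $[n]=A\cup B$, and an $\{M,M'\}$-term $s$ which is shown to depend only on the $A$-indexed generators (forcing $s(\overline{g})(\ell)=s(\overline{g})(\ell')=\alpha(\ell)$), none of which appears in your sketch. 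The ``promote the $(m-2)$-coordinate identity'' plan does not converge to the actual mechanism.
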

\begin{proof}
The proof is by induction on the number of positions in $G$ which are in
$Y$:
\[
  \sum_{i=1}^m \Big| \big\{ k \mid g_k(i)\in Y,\ g_k\in G \big\} \Big|.
\]
If this quantity is $0$ then $G\subseteq X^m$. Since $X$ is a subuniverse of
$\AAM$, we have that $f(\overline{g})(i)\in X$, contradicting $\alpha\in
Y^m$. This establishes the basis of the induction. The next claim is the
main tool we will use in the induction.

\begin{claim} \label{claim:main_entailment__monotone}
Let $\cl{E} = \{ e_1,\dots, e_n\}\subseteq \AM^m$ be synchronized and such
that for all $i\in [n]$, some $k\in [m]$, and some $\ell\in [n]$ we have
\begin{itemize}
  \item $e_i(\neq k) = g_i(\neq k)$,

  \item $e_i(k) \leq g_i(k)$, and

  \item $e_\ell(k) < g_\ell(k)$ (i.e.\ $e_\ell(k)\in X$ and 
    $g_{\ell}(k)\in Y$).
\end{itemize}
If $f(\overline{e}) = \alpha$ then $\alpha\in R$.
\end{claim}
\begin{claimproof}
Let $\SS = \Sg_{\AAM^m}(\cl{E})$ and observe that $\cl{E}$ has fewer
positions in $Y$ than $G$ does. The generators $\cl{E}$ are synchronized and
(from the hypotheses of the theorem) $|\D(\RR)| \geq \kappa + 6$, so $\SS$
is computational.

Let us suppose (toward a contradiction) that $\SS$ is halting. This implies
that there is a halting vector $\beta\in S$, and thus a term $t$ such that
$t(\overline{e}) = \beta$. The relations $\RR$ and $\SS$, their generators,
the term operation $t$, and the element $\beta\in S$ satisfy the hypotheses
of Lemma~\ref{lem:f_weak_monotone}, so there is a term operation $s$ such
that $\beta\leq s(\overline{g})$. Since $\beta$ is a halting vector and is
thus contained in $Y^m$, this implies that $s(\overline{g}) = \beta$.
Therefore $\beta\in R$, and so $\RR$ is halting, a contradiction. Thus $\SS$
is non-halting.

Since $\SS$ is computational and non-halting the inductive hypothesis
applies, so we have $\alpha\in S$ and there is a term operation $t'$ such
that $t'(\overline{e}) = \alpha$. By Lemma~\ref{lem:f_weak_monotone}, we
obtain a term operation $s'$ such that $\alpha\leq s'(\overline{g})$. Since
$\alpha\in Y^m$, this implies $\alpha = s'(\overline{g})$, so $\alpha\in R$.
\end{claimproof}

Let
\begin{align*}
  &K'
  = \N(\RR) \cup \big\{ k \mid \RR \text{ has } \ApproxI(\alpha,k) \big\} 
    \text{ and} \\
  &K
  = K' \cup \Big\{ k\in [m]\setminus K' \mid \text{for all } 
    \ell\in [m]\setminus K',\ 
    \alpha(\neq k,\ell)\not\in \big( R(\neq k,\ell) \big)_I \Big\}.
\end{align*}
Since $\RR$ is non-halting, $|\N(\RR)\cap \D(\RR)|\leq \kappa$. By
hypothesis $\big| [m]\setminus \D(\RR) \big| \leq 10$. It follows from these
that $|\N(\RR)| \leq \kappa + 10$, so combined with the last hypothesis of
the theorem we have $|K'|\leq \kappa + 11$. By
Proposition~\ref{prop:double_I_avoidance}, we also have that $|K| \leq
\kappa + 12$. Observe that if $\ell\not\in K$ then
\begin{itemize}
  \item $\ell\not\in\N(\RR)$,

  \item $\alpha_{\ell}\not\in R_I$ since $\RR$ does not have
    $\ApproxI(\alpha,\ell)$, and

  \item for each $\ell'\not\in K$ there is a set $L$ with $|L| = m-2$,
    $K\cup \{\ell, \ell'\}\subseteq L$, and $\alpha(L)\not\in \big( R(L)
    \big)_I$ (by Proposition~\ref{prop:double_I_avoidance} and the
    construction of $K$).
\end{itemize}
There is a subtlety in the last item above. Observe that for $L'\subseteq
[m]$ we have $R_I(L') \subseteq \big( R(L') \big)_I$ from
Definition~\ref{defn:RI}. The last item above therefore implies
$\alpha(L)\not\in R_I(L)$. We are now ready to proceed with the proof.

\begin{claim} \label{claim:main_entailment__dot}
$\Dot\in \Content(\alpha(\N))$.
\end{claim}
\begin{claimproof}
Suppose that $\Dot\not\in\Content(\alpha)$ and note that the hypotheses of
the theorem imply $|\D(\RR)| > \kappa+1$. If $\alpha_i\in C^m$ for some
$i\in [m]$ then $\RR$ is halting, by Lemma~\ref{lem:capacity} item (2).
Therefore for each $i$ we have $\alpha_i(i)\in D\cup X$ and $\alpha_i(\neq
i)\in C^m$, so $i\in \H(\RR)$. It follows that $\H(\RR) = [m]$ and so
$|\H(\RR) \cap \D(\RR)| = |\D(\RR)| > \kappa+1$. Thus $\RR$ is halting by
Lemma~\ref{lem:capacity} item (4), a contradiction. Suppose now that
$\alpha(i) \in D$ but $i\not\in \N(\RR)$. This implies that
$\alpha_i(\N(\RR))\in C^{|\N(\RR)|}\cap \RR(\N(\RR))$, contradicting
$\RR(\N(\RR))$ being non-halting (Proposition~\ref{prop:inherent_nonhalt}
item (1)).
\end{claimproof}

\begin{claim} \label{claim:main_entailment__content}
If $\alpha\not\in R$ then for every row $\ell\not\in K$, the content of all
the entries is in $\{\Content(\alpha(\ell)), \Dot, \Er\}$.
\end{claim}
\begin{claimproof}
It follows from Claim~\ref{claim:main_entailment__dot} that $\alpha(i)\in D$
for some $i\in \N$. Pick some $\ell\not\in K$. By the observations after
Claim~\ref{claim:main_entailment__monotone} above, there is a set $L$ such
that $\ell\in L$, $|L| = m-2$, and $\alpha(L)\not\in R_I(L)$ (we take $\ell
= \ell'$ in the observation). Construct the elements $e_1,\dots,e_n\in
\AAM^{m-1}$ (on coordinates $L \cup \{p\}$) as in
Lemma~\ref{lem:main_entailment_content} so that $e_i(L) = g_i(L)$ for all
$i$, but the $e_i$ have an ``extra'' row $p\not\in [m]$. Let $\cl{I} = \{
e_1,\dots, e_n \}$. Since $\Rel_{\leq m-1}(\AAM)\Entails f$,
$\cl{I}\subseteq \AM^{m-1}$, and $\alpha(L)\not\in R_I(L)$, by
Proposition~\ref{prop:I_avoidance} there is a term operation $t$ without $I$
in its term tree such that $t(\overline{e}) = f(\overline{e})$. Apply
Lemma~\ref{lem:main_entailment_content} with this term operation $t$ to
obtain $f(\overline{e})(p) = t(\overline{e})(p) = \alpha(\ell)$. The $p$-th
row of $\cl{I}$ will have at most the same number of $Y$ entries as the
$\ell$-th row of $G$. Let $\cl{E} = \{ h_1,\dots, h_n \}$ be obtained by
replacing the $\ell$-th row of $G$ with the $p$-th row of $\cl{I}$. It
follows that $f(\overline{h}) = \alpha$, and if $\cl{E}$ has fewer $Y$
entries than $G$ then $\alpha\in R$, by
Claim~\ref{claim:main_entailment__monotone}. The only way for $\alpha\not\in
R$ is if for every row $\ell\not\in K$, the content of all the entries is in
$\{\Content(\alpha(\ell)), \Dot, \Er\}$.
\end{claimproof}

\begin{claim} \label{claim:main_entailment__X}
If $\alpha\not\in R$ then for every distinct $\ell_1, \ell_2\not\in K$ and
$g_j\in G$, if $g_j(\ell_2)\in X\cup D$ then $g_j(\ell_1)\in X\cup D$.
\end{claim}
\begin{claimproof}
This is similar to Claim~\ref{claim:main_entailment__content}, but for
Lemma~\ref{lem:main_entailment_X}. Towards a contradiction, pick distinct
$\ell_1, \ell_2\not\in K$ such that $g_j(\ell_2)\in X\cup D$ and
$g_j(\ell_1)\not\in X\cup D$ for some $j$. As in the proof of
Claim~\ref{claim:main_entailment__content}, we have $\alpha(i)\in D$ for
some $i\in \N$ and there is a set $L$ such that $\ell_1, \ell_2\in L$, $|L|
= m-2$, and $\alpha(L)\not\in R_I(L)$ by the observations after
Claim~\ref{claim:main_entailment__monotone} above. Construct the elements
$e_1,\dots,e_n\in \AAM^{m-1}$ (on coordinates $L\cup \{p\}$) as in
Lemma~\ref{lem:main_entailment_X} so that $e_i(L) = g_i(L)$ for all $i$, but
the $e_i$ have an ``extra'' row $p\not\in [m]$. Since $g_j(\ell_2)\in X\cup
D$ and $g_j(\ell_1)\not\in X\cup D$, from the description of $e_j$ in
Lemma~\ref{lem:main_entailment_X} we have $e_j(p)\in X$.

Let $\cl{I} = \{ e_1,\dots, e_n \}$. Since $\Rel_{\leq m-1}(\AAM)\Entails
f$, $\cl{I}\subseteq \AM^{m-1}$, and $\alpha(L)\not\in R_I(L)$, by
Proposition~\ref{prop:I_avoidance} there is a term operation $t$ without $I$
in its term tree such that $t(\overline{e}) = f(\overline{e})$. Apply
Lemma~\ref{lem:main_entailment_X} with this term operation $t$ to obtain
$f(\overline{e})(p) = t(\overline{e})(p) = \alpha(\ell_1)$. The $p$-th row
of $\cl{I}$ will have at least one fewer $Y$ entries than the $\ell_1$-th
row of $G$, by the observation at the end of the previous paragraph. Let
$\cl{E} = \{ h_1,\dots, h_n \}$ be obtained by replacing the $\ell_1$-th row
of $G$ with the $p$-th row of $\cl{I}$. It follows that $f(\overline{h}) =
\alpha$, and since $\cl{E}$ has fewer $Y$ entries than $G$ we have
$\alpha\in R$ by Claim~\ref{claim:main_entailment__monotone}, a
contradiction.
\end{claimproof}

Let
\begin{align*}
  A &= \Big\{ i\in [n] \mid \Content(g_i(\ell)) = \Content(\alpha(\ell)) 
    \text{ for all } \ell\not\in K \Big\} & \text{and} \\
  B &= \Big\{ i\in [n] \mid g_i(\ell) \in D\cup X 
    \text{ for all } \ell\not\in K \Big\}.
\end{align*}
Seeking a contradiction, suppose that $\alpha \not\in R$. Apply
Claims~\ref{claim:main_entailment__content} and \ref{claim:main_entailment__X}
to every two-element subset of $[m]\setminus K$ to obtain $A\cup B = [n]$. Since
$\big| [n]\setminus K \big| \geq 4$, there must be two distinct coordinates $\ell,
\ell'\not\in K$ such that $\alpha(\ell) = \alpha(\ell')$ with common content in
$\{0,A,B\}$. Fix $\ell$ and $\ell'$ for the remainder of the proof.

Consider $\alpha_\ell\in R$. There must be a term operation $t$ such that
$\alpha_\ell = t(\overline{g})$. There is only a single coordinate $i$ at which
$\alpha_\ell(i)$ is possibly in $X$, namely $i = \ell$. Consider all term
operations $s$ such that $\alpha_\ell \leq s(\overline{g})$ and note that
$s(\overline{g})(\neq \ell) = \alpha_\ell(\neq \ell) = \alpha(\neq \ell)$. There
is at least one such $s$ that does not contain the following in its term tree:
\begin{itemize}
  \item $\wedge$, $N_0$, $N_\Dot$, or $P$, by the various parts of
    Lemma~\ref{lem:basic_facts_relns} and since $\alpha_\ell(\neq \ell)\in
    Y^{m-1}$,

  \item $H$ or $S$ since $\Dot\in\Content(\alpha_\ell)$,

  \item $I$ since $\ell\not\in K$, so $\RR$ does not have
    $\ApproxI(\alpha,\ell)$.
\end{itemize}
This leaves us with $s$ a term in $M$ and $M'$. 

\begin{claim}
$s(\overline{g})$ depends only on $g_a$ for $a\in A$.
\end{claim}
\begin{claimproof}
Fix $b\in B$ and note that $\big| [n]\setminus K \big| \geq 4$. The relation
$\RR$ is computational, so there is at most 1 coordinate $k\not\in K$ such that
$g_b(k)\in D$. Thus there are at least 2 coordinates $k\not\in K$, $k\neq \ell$
such that $g_b(k)\in X$. The operations $M$ and $M'$ are $X$-absorbing (Lemma
\ref{lem:basic_facts_AM} item (2)), so $s$ must be as well. Since $\Er\not\in
\Content(\alpha_\ell(\neq \ell))$ and $s(\overline{g}) = \alpha_\ell$, it
follows that $s(\overline{g})$ cannot depend $g_b$, since $g_b$ has content in
$X$ for at least $2$ coordinates outside of $K\cup \{\ell\}$.
\end{claimproof}

We now have that $s(\overline{g})$ depends only on $g_a$ for $a\in A$. Recall
that $\ell, \ell'\not\in K$ were chosen so that $\ell \neq \ell'$ and
$\alpha(\ell) = \alpha(\ell')$. Since $\alpha(\ell) = \alpha(\ell')$, it follows
that $g_a(\ell) = g_a(\ell')$ for all $a\in A$, so
\[
  \alpha(\ell) = \alpha(\ell') 
  = s(\overline{g})(\ell')
  = s(\overline{g})(\ell)
  = \alpha_\ell(\ell).
\]
Hence $s(\overline{g})(\ell) = \alpha(\ell)$ and so $s(\overline{g}) = \alpha$.
Therefore $\alpha\in R$.
\end{proof} 

This completes the proofs of all the theorems referenced in the proof of
Corollary~\ref{cor:finitely_related} at the start of the section. We have
thus proven that if $\M$ halts then $\AAM$ is finitely related.

\section{Concluding remarks}\label{sec:conclusion} 
Combining Theorem~\ref{thm:M_not_halt_high_degree} and
Corollary~\ref{cor:finitely_related} yields the theorem claimed in the title
of this paper.

\begin{thm} \label{thm:halts_iff_finrel}  
The following are equivalent.
\begin{enumerate}
  \item $\M$ halts,
  \item $\deg(\AAM) < \infty$ (i.e.\ $\AAM$ is finitely related),
  \item $\M$ halts with capacity at least $\deg(\AAM) - 15$.
\end{enumerate}
\end{thm}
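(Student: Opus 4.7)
The plan is to observe that Theorem~\ref{thm:halts_iff_finrel} is essentially a bookkeeping assembly of the two main technical theorems of the paper: Theorem~\ref{thm:M_not_halt_high_degree} (the ``$\M$ does not halt'' direction) and Corollary~\ref{cor:finitely_related} (the ``$\M$ halts'' direction). So the proof reduces to chaining these two results carefully and extracting a sharp quantitative bound.

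First, I would establish the equivalence (1)$\iff$(2). The implication (1)$\Rightarrow$(2) is Corollary~\ref{cor:finitely_related}: if $\M$ halts then $\deg(\AAM)\leq \kappa+15<\infty$, where $\kappa$ is a capacity of $\M$. Conversely, (2)$\Rightarrow$(1) is the contrapositive of Theorem~\ref{thm:M_not_halt_high_degree}(2): if $\M$ does not halt then $\AAM$ is not finitely related, so $\deg(\AAM)<\infty$ forces $\M$ to halt.

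Next, I would handle (3). The implication (3)$\Rightarrow$(1) is immediate, since halting is part of the statement of (3). For (1)$\Rightarrow$(3), suppose $\M$ halts and let $\kappa_0$ be the least value such that $\M$ halts with capacity $\kappa_0$ (this is well-defined since halting implies finite maximum register sum). Applying Corollary~\ref{cor:finitely_related} with $\kappa=\kappa_0$ yields $\deg(\AAM)\leq \kappa_0+15$, i.e.\ $\kappa_0\geq \deg(\AAM)-15$, which is exactly what (3) asserts. (Consistency check: Theorem~\ref{thm:M_not_halt_high_degree}(1), in contrapositive form, gives $\kappa_0\leq \deg(\AAM)$, so in fact $\deg(\AAM)-15\leq \kappa_0\leq \deg(\AAM)$, pinning the capacity down to a window of size $15$.)

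There is no real obstacle here: all the heavy lifting has been done in Sections~\ref{sec:M_doesnt_halt} and \ref{sec:M_halts_entailment}. The only minor care needed is to interpret (3) correctly as a statement about the \emph{minimum} capacity witnessing the halt (since ``has capacity $C$'' is upward-closed in $C$), and to note that applying Corollary~\ref{cor:finitely_related} with the minimum capacity yields the sharpest bound. Once phrased this way the proof is a three-line deduction from the previously proved results.
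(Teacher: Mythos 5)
Your proposal is correct and takes essentially the same approach as the paper, which simply cites Theorem~\ref{thm:M_not_halt_high_degree} and Corollary~\ref{cor:finitely_related} and leaves the bookkeeping to the reader. Your fleshing-out, including the interpretation of item~(3) in terms of the minimum witnessing capacity $\kappa_0$ and the consistency check $\deg(\AAM)-15 \le \kappa_0 \le \deg(\AAM)$ via the contrapositive of Theorem~\ref{thm:M_not_halt_high_degree}(1), is exactly the intended content.
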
  

\noindent Many standard results follow from this theorem. We detail a couple
of the more interesting ones below.
\begin{itemize}
  \item There exists infinitely many Minsky machines $\M$ such that the
    halting status of $\M$ is independent of ZFC (see
    Chaitin~\cite{Chaitin_Incomplete} or
    Kolmogorov~\cite{Kolmogorov_Complexity}). As a consequence of the
    theorem above, there are finite algebras $\A$ whose finite-relatedness
    is independent of ZFC.

  \item Let $\sigma$ be a fixed finite algebraic signature (name and arity
    specification of the functions) and define
    \[
      \maxdeg_{\sigma}(n)
      = \sup \Big\{ \deg(\A) \mid \A \text{ has signature $\sigma$, is
          finite degree, and } |\A| \leq n \Big\}
    \]
    If we remove the requirement that the algebras have signature $\sigma$
    then it is not too hard to show that $\maxdeg(n)$ is infinite. Let $\tau$
    be the signature of $\AAM$ and observe that $\tau$ does not depend on
    $\M$. It follows from Theorem~\ref{thm:halts_iff_finrel} that
    $\maxdeg_\tau(n)$ is not computable, and so $\maxdeg_\sigma(n)$ is not,
    in general, computable. This is essentially the Busy Beaver function of
    Rad\'{o}~\cite{Rado_BusyBeaver}.
\end{itemize}

There are several related problems which are conjectured to be undecidable
as well. We have shown that given a finite set of operations $\cl{F}$, it is
undecidable whether there is finite $\cl{R}$ such that $\Rel(\cl{F}) =
\RClo(\cl{R})$. The dual of this problem is also suspected to be
undecidable.

\begin{prob}  
Decide if a clone is finitely generated: given finite $\cl{R}$, decide
whether there is a finite $\cl{F}$ such that $\Pol(\cl{R}) = \Clo(\cl{F})$.
\end{prob}   

The most sweeping result on finitely related algebras is the following
theorem. The ``if''  portion is due to Aichinger, Mayr,
McKenzie~\cite{AichingerMayrMcKenzie_CubeFinRelated} and the ``only if''
portion is due to Barto~\cite{Barto_FinRelCube}.

\begin{thm} 
A finite algebra in a congruence modular variety is finitely related if and
only if it has a cube term.
\end{thm}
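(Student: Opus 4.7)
This theorem combines two substantial results from the literature, so I will sketch a strategy for each direction separately.

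For the ``if'' direction (cube term implies finite relatedness, which needs no congruence-modularity hypothesis), I would exploit the \emph{few subpowers} property. The key step is a compact representation theorem: if $\A$ has a $k$-ary cube term then every subpower $\RR \le \A^n$ admits a generating set of cardinality polynomial in $n$, proved by induction in which the cube term provides the mechanism for merging pairs of generators that coincide on a sufficiently large common coordinate set into a single representative. Once compact representations are established, I would deduce that there is a constant $c$ (depending only on $|\A|$ and the cube-term arity) such that every subpower is determined by its projections onto coordinate subsets of size at most $c$: a compact generating set can be reconstructed from these projections via an iterated intersection-of-permuted-products construction in the style of Theorem~\ref{thm:entailment_form}. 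This directly bounds $\deg(\A)$.

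For the ``only if'' direction in the CM setting, I would proceed by contrapositive, using modular commutator theory. Assume $\A$ is finite, lies in a congruence-modular variety, and has no cube term; I would produce, for each $n$, a subpower $\RR_n \le \A^{f(n)}$ witnessing $\deg(\A) \ge n$. The input from commutator theory is that, absent a cube term, some section of $\A$ inside its variety must carry genuinely ``abelian'' structure --- essentially, an affine module over a nontrivial finite ring extracted from the modular commutator. Inside this module, one constructs $\RR_n$ as the solution set of an $n$-term linear relation; since the entailment constructions only yield primitive-positive definitions, one argues by linear algebra over the commutator ring that no pp-formula using only bounded-arity relations can define $\RR_n$, because such a formula would, after projection and intersection, yield a defining equation of bounded support, contradicting the linear independence of the coordinates.

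The main obstacle will be the CM direction. The cube-term-implies-finitely-related argument has a clear inductive skeleton once compact representations are set up, and its individual steps are combinatorial in nature. In contrast, extracting an affine module from the mere failure of a cube term requires the full weight of tame-congruence theory and the modular commutator, and modularity is used essentially: the equivalence is known to fail for general algebras, so any proof must pinpoint where the CM hypothesis supplies the needed structure. A secondary obstacle is verifying non-entailment of the constructed $\RR_n$: while conceptually clean, one must carefully track the flexibility of the four entailment constructions (intersection, product, projection, coordinate permutation) to ensure that no clever combination can reduce the effective arity of a relation built from $\A$'s module structure.
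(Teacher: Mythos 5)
This theorem is stated in the paper's concluding remarks as a citation, not proved: the ``if'' direction is attributed to Aichinger, Mayr, and McKenzie, and the ``only if'' direction to Barto, and the paper offers no argument of its own. So there is no in-paper proof against which to compare step by step; your proposal is really being measured against those two cited papers.

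Your sketch of the ``if'' direction is consistent in spirit with the Aichinger--Mayr--McKenzie argument: a cube term yields few subpowers, hence compact (polynomial-size) representations of all subpowers, and from this one bounds $\deg(\A)$ by a function of $|A|$ and the cube-term arity. That is the right skeleton, and you are also right that the congruence-modular hypothesis is not needed on this side. Your sketch of the ``only if'' direction, however, does not match the route Barto actually takes. Barto's proof in the cited paper is built on absorption theory (in the style of his CSP work), not on extracting an affine module from the modular commutator; it shows directly that a finitely related algebra in a congruence-modular variety must have few subpowers, and reads off the cube term from that. Your commutator-theoretic plan --- isolate an abelian section, realize it as a module, and exhibit unbounded-support linear relations $\RR_n$ --- is a plausible-sounding heuristic, but as written it has a real gap: the mere failure of a cube term does not obviously hand you a large affine module sitting inside $\A$ in a way that is compatible with \emph{all} the operations of $\A$ (rather than just the polynomials of some quotient of a subalgebra), and your non-entailment argument (``no pp-formula of bounded arity can define $\RR_n$'') is asserted rather than proved; pinning this down is exactly where the hard work lies, and it is precisely the work that Barto's absorption machinery is doing. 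In short: the forward direction is essentially on track, but the converse direction as sketched is both incomplete and not the argument the paper is citing.
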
  

\noindent The existence of a cube term is a weak Maltsev condition, but it
is a decidable property. This follows independently from Kazda and
Zhuk~\cite{KazdaZhuk_CubeArities} and Kearnes and
Szendrei~\cite{KearnesSzendrei_CubeBlockersInfinite}. As a consequence of
Theorem~\ref{thm:halts_iff_finrel}, there can be no decidable property which
characterizes finitely related meet-semidistributive algebras (of which
$\AAM$ is one). It is still possible, however, that there is an undecidable
weak Maltsev condition which does.

\begin{prob}  
Is there a weak Maltsev condition that characterizes finite relatedness for
finite algebras in congruence meet-semidistributive varieties?
\end{prob}   

The next two problems concern the theory of Natural Dualities, and date back
to the start of the field in the 1970s (see
McNulty~\cite{McNulty_JugglersDozen} section 3 for a history of the
problem). A good reference for the background is Clark and
Davey~\cite{ClarkDavey_DualWorking}. We produce the \emph{duality entailment
constructions} from constructions (1)--(4) of Section~\ref{sec:background}
by replacing (4) with
\begin{itemize}
  \item[($4'$)] \emph{bijective} projection of a relation onto a subset of
    coordinates.
\end{itemize}
A set of relations $\cl{R}$ duality entails a relation $\RR$ if and only if
$\RR$ can be constructed in finitely many steps from the entailment
constructions (1)--(3) of Section~\ref{sec:background} and ($4'$) above. If
this is the case then we write $\cl{R}\EntailsDual \RR$ and we refer to the
set of all such $\RR$ as $\RCloDual(\cl{R})$.

\begin{prob}  
Decide if every relation of an algebra is duality entailed by a finite
subset of them. That is, given algebra $\A$, decide whether there is finite
$\cl{R}$ such that $\Rel(\A) = \RCloDual(\cl{R})$.
\end{prob}   

\noindent It should be clear from the constructions that
$\RCloDual(\cl{R})\subseteq \RClo(\cl{R})$, so if $\A$ is finitely duality
related then it is finitely related. It follows that $\AAM$ is not finitely
duality related if $\M$ does not halt.

\begin{prob}  
If $\M$ halts, is $\AAM$ finitely duality related?
\end{prob}   

\noindent A positive answer to this problem would prove the undecidability
of the duality entailment problem. If an algebra is finitely duality
related then it is dualizable. The converse does not follow, however. This
leads us to a more general (and more important) version of the above
problem.

\begin{prob}  
Decide whether a finite algebra is dualizable.
\end{prob}   

\bibliographystyle{amsplain}   
\bibliography{fin-related-undec}
\begin{center}
  \rule{0.61803\textwidth}{0.1ex}   
\end{center}
\end{document}